\documentclass[preprint,12pt,authoryear]{elsarticle}
\usepackage{fullpage}
\usepackage{times}

\usepackage{lineno}
\usepackage[colorlinks]{hyperref}
\usepackage[T1]{fontenc} 
\usepackage{amsthm,amsmath,amsfonts,amssymb}
\usepackage{bm}
\usepackage{textcomp}
\usepackage{graphicx}
\usepackage{subfigure}
\usepackage{float}
\usepackage{algorithm}  
\usepackage{algorithmic}
\usepackage{url}

\usepackage[utf8]{inputenc}
\usepackage{amsmath, amsfonts, amssymb, amsthm}
\usepackage{geometry}
\usepackage{enumitem}
\usepackage{comment}

\usepackage[round]{natbib}
\usepackage{hyperref}
\hypersetup{colorlinks=true, citecolor=blue, linkcolor=black}

\usepackage{listings}
\usepackage{xcolor} 

\newcommand{\rc[2]}{\textcolor{red}{#2}}
\newcommand{\bc[2]}{\textcolor{blue}{#2}}

\usepackage{titlesec}
\titleformat{\paragraph}[block]{\bfseries}{\theparagraph}{1em}{}

\newtheorem{theorem}{Theorem}
\newtheorem{lemma}{Lemma}
\newtheorem{proposition}{Proposition}
\newtheorem{definition}{Definition}

\journal{Computational Statistics and Data Analysis}

\begin{document}

\begin{frontmatter}

\title{Sobolev-Regularized Objective Functions for \\ Robust Pairwise Alignment of Functional Data} 
\author{Wei Wu}
\address{Department of Statistics, Florida State University, \\ 117 N. Woodward Ave., Tallahassee, FL 32306, USA}

\date{}

\begin{abstract}
Functional data registration is a critical challenge in modern statistics, essential for separating phase variability from amplitude variability. While derivative-based frameworks offer mathematically elegant solutions, their dependence on signal velocities renders them susceptible to additive noise. This study proposes and evaluates a family of robust, Sobolev-regularized objective functions for the pairwise alignment of functional data, operating entirely within the original function space to avoid the need for numerical differentiation of the data. We define our optimization over a second-order Sobolev space and utilize the Centered Log-Ratio (CLR) transform to represent the warping functions. By penalizing both the velocity and acceleration of the centered log-derivative, this geometric approach preempts degenerate ``pinching'' artifacts and ensures the resulting warps are strictly monotonic, valid diffeomorphisms. In practice, this allows for highly efficient, unconstrained optimization within a finite-dimensional space. We systematically investigate four distinct pairwise data mismatch formulations: a Standard $\mathbb{L}^2$ baseline, a Symmetric $\mathbb{L}^2$ formulation, an Isometry ($\mathbb{L}^2$-preserving) mapping, and a Jacobian-weighted $\mathbb{L}^2$ functional. We establish robust theoretical foundations for these methods, proving the existence of optimal warps and the asymptotic consistency of the finite-dimensional estimators. Our results demonstrate that this CLR-regularized framework offers a powerful, computationally scalable, and noise-robust alternative to traditional derivative-based registration.

\begin{keyword}
pairwise alignment \sep Sobolev space \sep Hilbert space \sep additive noise \sep regularization \sep centered log-ratio transformation.
\end{keyword}

\end{abstract}

\end{frontmatter}

\section{Introduction}
\label{sec:introduction}

Temporal phase variability remains a central challenge in functional data analysis. In function registration, the goal is typically to isolate this timing variation from the actual signal magnitude using a time warping function. In many studies, it is critically important to find the aligned functions because warping is considered a nuisance variable in the measurement process, and its variability needs to be removed \citep{ramsay2005functional}. In other cases, the phase itself is considered an essential feature of the data \citep{marron2015functional}. For either purpose, one must estimate optimal time warpings to properly align functional observations. A common space of valid time warping functions is defined by the group of orientation-preserving diffeomorphisms:
$$
\Gamma_c = \{ \gamma : [0,1] \to [0,1] \mid \gamma(0) = 0, \gamma(1) = 1, 0 < \gamma'(t) < \infty, \forall t \in [0,1] \} 
$$
where $\gamma'$ denotes the derivative of $\gamma$. Over the past two to three decades, various approaches have been developed for robust and efficient estimation within this space. Early approaches formulated a least-squares problem by representing the warping function, or a mathematical transformation thereof, via a linear combination of polynomial or B-spline basis functions, obtaining the warping by estimating the corresponding coefficients \citep{ramsay1998curve, gervini2004self, eilers2004parametric, james2007curve, telesca2008bayesian}. 

However, the space $\Gamma_c$ is strictly nonlinear under the conventional $\mathbb{L}^2$ metric. When optimizing standard distances without proper geometric constraints, these early frameworks suffered from two primary flaws. First, they are inherently \textit{asymmetric}: the warping required to align $f \to g$ does not yield the same cost as the inverse alignment $g \to f$. Second, they are highly susceptible to the \textit{``pinching effect''}. In this degenerate scenario, the warping velocity $\gamma'$ approaches zero or spikes toward infinity, excessively compressing or expanding the temporal domain simply to match signal amplitudes, thereby destroying the true underlying phase structure.  While conventional penalty methods attempted to regularize the warp by penalizing simple derivatives (e.g., $\gamma''$), these approaches fundamentally ignored the intrinsic, non-Euclidean geometry of the warping space.

To rigorously address these geometric challenges, recent breakthroughs conducted registration by minimizing the Fisher-Rao metric via the Square-Root Velocity Function (SRVF) \citep{srivastava2011}. By transforming the functional data into SRVFs, the complicated Fisher-Rao metric is flattened into a standard $\mathbb{L}^2$ distance. This framework is mathematically elegant, ensuring symmetry and providing an isometry-invariant mapping where registration can be performed reliably. It incorporates the warping derivative directly into the transformed signal representation and naturally preempts the pinching effect. This geometric formulation has spurred highly successful applications in analyzing phase variations for functional regression \citep{ahn2020regression}, classification \citep{tucker2013generative}, and functional principal component analysis \citep{lee2016fpca}.

In parallel to deterministic optimization techniques, SRVF-based Bayesian methods have emerged as a powerful paradigm for functional data registration, exploring more comprehensive possibilities regarding time warping, particularly the quantification of alignment uncertainty. By treating the warping function as a random element, these approaches place flexible priors over the space of valid diffeomorphisms to generate posterior distributions of the temporal warps
\citep{cheng2016bayesian, kurtek2017geometric}. Further advancements have modeled the warping functions utilizing Gaussian processes \citep{lu2017bayesian} or point processes \citep{Bharath2020landmark} and extended these principles to handle highly challenging scenarios, such as the simultaneous registration of noisy, sparse, and fragmented functional data \citep{matuk2021bayesian}, or navigating complex posterior spaces using Hamiltonian Monte Carlo \citep{tucker2021multimodal}. Because of the non-linearity of $\Gamma_c$, the likelihood functions and prior distributions in these Bayesian models depend heavily on the SRVF representation or operate within the tangent space of the nonlinear SRVF hypersphere, inherently relying on a local linear approximation of the manifold.

Despite their geometric elegance, SRVF-based methods, as well as the probabilistic models that rely on them, share a fundamental limitation: the transformation requires the temporal derivative of the observed signals. In practical scenarios where data are corrupted by additive noise, numerical differentiation drastically amplifies high-frequency fluctuations. This instability means these methods typically require pre-smoothing, which may blur away the underlying structural features required to properly align the curves.
Consequently, this study aims to perform registration directly in the \textit{original function domain}. By avoiding signal derivatives, we maintain robustness against noise while exploring a family of mismatch terms designed to be symmetric with respect to the input signals, ensuring the registration is not biased by the directional choice of a template.

In this study, we introduce a new, Sobolev-regularized deterministic framework for pairwise functional registration. While extending alignment to \textit{multiple functions} is a natural next step, restricting our current focus to the pairwise setting (aligning a source function $f$ to a target template $g$) allows us to rigorously isolate and establish the theoretical properties, symmetries, and structural biases of the objective functions. Our framework presents two primary contributions:
\begin{enumerate}
    \item \textit{Sobolev-Regularized CLR Penalty:} Rather than utilizing conventional nonlinear approximations, we adopt the Centered Log-Ratio (CLR) transform \citep{ma2024} to map the strictly constrained warping manifold into an unconstrained, linear Hilbert subspace. While recent Bayesian frameworks \citep{ma2025new} have utilized this transform under a standard $\mathbb{L}^2$ structure, we propose to govern the optimization using a strictly defined zero-mean Sobolev $\mathbb{H}$-norm penalty. By penalizing both the velocity and acceleration of the centered log-derivative, this higher-order regularization rigorously respects the intrinsic geometry of time warping and effectively eliminates the pinching effect, ensuring globally smooth diffeomorphisms. Furthermore, operating within this unconstrained linear space enables highly efficient gradient-based optimization.
    
    \item \textit{Symmetric Data Mismatch Functionals:} To fully leverage our derivative-free approach, we propose and investigate four distinct mismatch functionals formulated entirely within the original function space. In addition to the standard $\mathbb{L}^2$ baseline, we introduce three symmetric alternatives: a Symmetric $\mathbb{L}^2$ formulation, an Isometry ($\mathbb{L}^2$-preserving) mapping, and a Jacobian-weighted formulation. This diversity allows us to explicitly address the trade-offs between mathematical symmetry and signal integrity.
\end{enumerate}
We carefully examine this regularized framework across theory, methodology, and numerical computations. By establishing the asymptotic consistency of our estimators, this paper provides a comprehensive, computationally efficient solution for robust phase-amplitude decoupling in the presence of additive noise.

The remainder of this paper is organized as follows. Section 2 defines the mathematical framework for the warping manifold and the second-order Sobolev space $\mathbb{H}$. Section 3 details the baseline and three symmetric objective functions, providing the theoretical justification for the amplitude-preserving and geometric formulations, and establishing guarantees for the existence of optimal warps. Section 4 describes the finite-basis estimation strategy and the unified gradient descent algorithm, including a noise-free asymptotic theory and a complexity analysis. We also demonstrate the performance of these methods through comprehensive simulation studies and an application to a real-world acoustic dataset. Finally, Section 5 provides a concluding discussion. All technical proofs and mathematical details are deferred to the Appendices.

\section{Manifold Linearization and Sobolev Optimization Spaces}
\label{sec:space}

The robustness of the proposed registration framework is predicated on the geometric structure of the warping domain. A significant hurdle in functional registration is that the space of warping functions is not naturally a vector space; standard pointwise addition or scalar multiplication of two diffeomorphisms does not necessarily yield another valid diffeomorphism. To address this, we adopt the manifold linearization approach to transition from the nonlinear manifold to a structured vector space.

\subsection{Review: The Manifold of Warping Functions as a Vector Space}

Following the framework established by \citet{ma2024}, we briefly review the mathematical structure of the warping manifold that serves as the foundation for our optimization. 
Let $\Gamma \subset \Gamma_c$ denote the space of time warping functions considered in this study, defined as the set of strictly increasing, absolutely continuous diffeomorphisms with bounded derivatives on the unit interval $I = [0, 1]$:
\begin{equation}
\Gamma = \{ \gamma : [0,1] \to [0,1] \mid \gamma(0) = 0, \gamma(1) = 1, 0 < m_\gamma < \gamma'(t) < M_\gamma < \infty, \forall t \in [0,1] \}
\end{equation}
The strictly positive lower bound $m_\gamma$ and finite upper bound $M_\gamma$ are mathematically essential. Without them, the integration of the logarithm of the derivative, $\log \gamma'(t)$, can diverge. These bounds guarantee that the CLR transform remains strictly well-defined, thereby establishing a rigorous isometric isomorphism between the constrained manifold $\Gamma$ and an unconstrained $\mathbb{L}^2$ subspace.

While the set of transformations $\Gamma$ forms a \textit{group} under composition, it lacks a natural linear structure, which complicates direct optimization. To address this, an isometric isomorphism $\Phi$ is employed to map these constrained functions into a bounded, zero-mean $\mathbb L^2$ subspace. This linearization allows for the application of standard linear operations; the resulting estimates are then mapped back to the diffeomorphism group $\Gamma$ to perform the signal alignment.

Let $\mathbb{L}_{0, \infty}(I)$ denote a specific subspace of $\mathbb{L}^2(I)$, consisting of zero-mean, square-integrable functions with a finite $\mathbb{L}^\infty$ norm:
\begin{equation}
    \mathbb{L}_{0, \infty}(I) = \left\{ \psi \in \mathbb{L}^2(I) \;\middle|\; \int_0^1 \psi(t) \, dt = 0, \text{ and } \|\psi\|_\infty < \infty \right\}.
\end{equation}
The mapping $\Phi: \Gamma \to \mathbb{L}_{0, \infty}(I)$, defined by the Centered Log-Ratio (CLR) transform, provides the required isometric isomorphism \citep{egozcue2006hilbert}:
\begin{equation}
    \psi(t) = \Phi(\gamma)(t) = \log \gamma'(t) - \int_0^1 \log \gamma'(s) \, ds.
    \label{eq:CLR}
\end{equation}
In this formulation, $\log \gamma'(t)$ captures the local log-derivative of the warp, while the integral term serves as a centering constraint. Its inverse, $\Phi^{-1}$, is the normalized exponential map:
\begin{equation}
    \gamma(t) = \Phi^{-1}(\psi)(t) = \frac{\int_0^t e^{\psi(s)} \, ds}{\int_0^1 e^{\psi(s)} \, ds}.
    \label{eq:invCLR}
\end{equation}
The denominator acts as a normalizing constant ensuring that $\gamma(1) = 1$ regardless of the magnitude of $\psi$, effectively mapping the entire vector space $\mathbb{L}_{0, \infty}(I)$ back onto the manifold $\Gamma$.

A significant advantage of this characterization is that $\Gamma$ inherits well-defined linear and inner-product operations via the pullback of $\Phi$. Specifically, for any $\gamma_1, \gamma_2 \in \Gamma$ and $a \in \mathbb{R}$, the vector space operations are defined as follows:
\begin{itemize}
    \item \textbf{Addition}: $\gamma_1 \oplus \gamma_2 = \Phi^{-1} \left( \Phi(\gamma_1) + \Phi(\gamma_2) \right)$
    \item \textbf{Scalar Multiplication}: $a \odot \gamma_1 = \Phi^{-1} \left( a \Phi(\gamma_1) \right)$
    \item \textbf{Inner-Product}: $\langle \gamma_1, \gamma_2 \rangle_\Gamma = \langle \Phi(\gamma_1), \Phi(\gamma_2) \rangle_{\mathbb{L}^2}$
\end{itemize}

Through these definitions, the traditionally nonlinear time-warping manifold is transformed into a linear, inner-product space. This shift makes the underlying concepts far more intuitive and allows for the use of conventional linear algebraic operations. Furthermore, because $\Phi$ is an isometry under the metric induced by the $\mathbb{L}^2$ inner product on $\mathbb{L}_{0, \infty}(I)$, the optimization of warping functions can be conducted in a flat, conventional $\mathbb{L}^2$ subspace. This facilitates unconstrained optimization while implicitly and automatically maintaining the necessary diffeomorphism constraints, such as monotonicity and fixed boundary conditions, on $\Gamma$.

\subsection{The Sobolev Optimization Space}

While the linearized space $\mathbb{L}_{0, \infty}(I)$ provides the necessary vector structure for the warping functions, it is too broad for stable optimization as it includes functions with insufficient regularity. To ensure that our estimated warping functions are smooth and physically plausible, we define the optimization space $\mathbb{H}$ as a second-order Sobolev space. This space acts as a dense, smooth subspace of $\mathbb{L}_{0, \infty}(I)$, providing the essential functional architecture for both finite-basis expansion and Tikhonov-style regularization \citep{deBoor2001, adams2003sobolev}.

\subsubsection{The Second-order Sobolev Space}

\begin{definition}[The Sobolev Space $\mathbb{H}$]
We define $\mathbb{H}$ as a Sobolev space of order $p=2$, restricted to the zero-mean subspace:
\begin{equation*}
    \mathbb{H} = \left\{ \psi \in C^1(I) \mid \psi' \text{ is absolutely continuous}, \psi'' \in \mathbb{L}^2(I), \text{ and } \int_0^1 \psi(t)dt = 0 \right\}
\end{equation*}
$\mathbb{H}$ is equipped with the inner-product:
\begin{equation*}
    \langle \psi_1, \psi_2 \rangle_{\mathbb{H}} = \int_0^1 \psi_1'(t) \psi_2'(t) dt + \int_0^1 \psi_1''(t) \psi_2''(t) dt
\end{equation*}
The induced norm is strictly defined as $\|\psi\|_{\mathbb{H}} = \sqrt{\langle \psi, \psi \rangle_{\mathbb{H}}}$.
\end{definition}

Notice that our definition of the inner product and induced norm intentionally omits the zeroth-order term $\|\psi\|_{\mathbb{L}^2}^2$ found in the standard $W^{2,2}$ Sobolev norm. This omission is mathematically rigorous due to the zero-mean constraint $\int_0^1 \psi(t) dt = 0$. By the Poincar\'{e} inequality, the overall magnitude of any zero-mean function is strictly bounded by the magnitude of its first derivative (i.e., $\|\psi\|_{\mathbb{L}^2}^2 \leq C \|\psi'\|_{\mathbb{L}^2}^2$). Consequently, the standard Sobolev norm and our reduced norm are strictly equivalent and define the exact same topological space. Dropping this redundant term not only streamlines the computational optimization, but also ensures the penalty strictly targets the geometric shape of the deformation.

First, we establish in Proposition \ref{prop:hilbert} that $\mathbb{H}$ is indeed a Hilbert space. Proving this completeness property is critical, as it serves as the foundational requirement for both our theoretical and numerical frameworks. Mathematically, the Hilbert structure ensures the well-posedness of the optimization problem, providing the necessary topological properties to guarantee that a minimizer exists and that our subsequent consistency results hold. Practically, this allows us to fully exploit basis-expansion methods within a rigorous Sobolev framework, ensuring that finite-dimensional approximations remain robust and convergent. By mapping the nonlinear manifold of warping functions into this complete, linear Hilbert space, we transform a constrained registration problem into a manageable and theoretically sound unconstrained optimization.

\begin{proposition}[Hilbert Space Property]
\label{prop:hilbert}
The Sobolev space $\mathbb{H}$, equipped with the inner product $\langle \cdot, \cdot \rangle_{\mathbb{H}}$, is a Hilbert space.
\end{proposition}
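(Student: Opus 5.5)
To prove Proposition~\ref{prop:hilbert}, I will check two things. First, that $\langle\cdot,\cdot\rangle_{\mathbb{H}}$ is a genuine inner product on $\mathbb{H}$, which comes down to positive definiteness. Second, that $\mathbb{H}$ is complete under the induced norm, which is the substantive step. Bilinearity and symmetry are immediate from linearity of differentiation and of the integral. The space $\mathbb{H}$ is a vector space because $C^1$, absolute continuity of $\psi'$, square-integrability of $\psi''$, and the zero-mean condition are all preserved under linear combinations.

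For definiteness, I will suppose $\|\psi\|_{\mathbb{H}}=0$. Then $\psi'=0$ a.e., and since $\psi\in C^1(I)$ this gives $\psi'\equiv 0$, so $\psi$ is constant. The constraint $\int_0^1\psi=0$ then forces $\psi\equiv 0$. At the same point I will record the Poincar\'e-type bounds I need later. Since $\psi$ is continuous with zero mean, it vanishes at some $t_0\in I$. Hence $|\psi(t)|\le \int_0^1|\psi'|\le\|\psi'\|_{\mathbb{L}^2}$, which gives $\|\psi\|_\infty\le\|\psi\|_{\mathbb{H}}$. In the same way $\psi'$ has zero mean, because $\psi(1)-\psi(0)$ need not vanish, so instead I use $\|\psi'\|_\infty\le |\psi'(s)|+\|\psi''\|_{\mathbb{L}^2}$ for every $s$. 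Integrating over $s$ gives $\|\psi'\|_\infty\le\|\psi'\|_{\mathbb{L}^2}+\|\psi''\|_{\mathbb{L}^2}\le 2\|\psi\|_{\mathbb{H}}$. Thus convergence in $\mathbb{H}$ controls $\psi$ and $\psi'$ uniformly, and this is the key tool for completeness.

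For completeness, I will take a Cauchy sequence $(\psi_n)$ in $\mathbb{H}$. By the bounds above, $(\psi_n)$ and $(\psi_n')$ are Cauchy in $C(I)$ with the sup norm, and $(\psi_n'')$ is Cauchy in $\mathbb{L}^2(I)$. By completeness of $C(I)$ and of $\mathbb{L}^2$, I obtain limits $\psi_n\to\psi$ and $\psi_n'\to\phi$ uniformly, and $\psi_n''\to h$ in $\mathbb{L}^2$. Next I identify these limits, which is the main obstacle, namely ensuring that the limit retains the required regularity. Passing to the limit in $\psi_n(t)=\psi_n(0)+\int_0^t\psi_n'$ by uniform convergence gives $\psi(t)=\psi(0)+\int_0^t\phi$, so $\psi\in C^1$ with $\psi'=\phi$. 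Similarly, $\psi_n'(t)=\psi_n'(0)+\int_0^t\psi_n''$. Since $\int_0^t\psi_n''\to\int_0^t h$ by Cauchy--Schwarz on $[0,t]$, it follows that $\phi(t)=\phi(0)+\int_0^t h$. Hence $\psi'$ is absolutely continuous with $\psi''=h\in\mathbb{L}^2$ a.e. The zero-mean condition passes to the limit by uniform convergence.

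Finally, $\|\psi_n-\psi\|_{\mathbb{H}}^2=\|\psi_n'-\phi\|_{\mathbb{L}^2}^2+\|\psi_n''-h\|_{\mathbb{L}^2}^2\to 0$, because uniform convergence implies $\mathbb{L}^2$ convergence on $I$. So $\psi\in\mathbb{H}$ and $\psi_n\to\psi$ in $\mathbb{H}$, and $\mathbb{H}$ is a Hilbert space. As an alternative route, I could appeal to the stated equivalence of $\|\cdot\|_{\mathbb{H}}$ with the standard $W^{2,2}$ norm together with the closedness of the zero-mean subspace. I prefer the direct argument above because it is self-contained and also delivers the embedding $\|\psi\|_\infty\le\|\psi\|_{\mathbb{H}}$, which shows $\mathbb{H}\subset\mathbb{L}_{0,\infty}(I)$.
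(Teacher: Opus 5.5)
Your argument is correct, but it obtains completeness by a different route from the paper.

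\textbf{The paper's route.} After the same positive-definiteness check, the paper never handles a Cauchy sequence. It uses $\|\psi\|_{\mathbb{L}^2}\le\|\psi\|_\infty\le\|\psi'\|_{\mathbb{L}^2}$ from Proposition~\ref{prop:bound} to show $\tfrac12\|\psi\|_{W^{2,2}}^2\le\|\psi\|_{\mathbb{H}}^2\le\|\psi\|_{W^{2,2}}^2$. It then notes that $\mathbb{H}$ is the kernel of the bounded functional $T(\psi)=\int_0^1\psi$ on $W^{2,2}(I)$, hence a closed subspace of a complete space. Completeness transfers to $\|\cdot\|_{\mathbb{H}}$ by norm equivalence. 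This is exactly the alternative you mention at the end.

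\textbf{What each route buys.} The paper's argument is shorter because it outsources the regularity bookkeeping to the known completeness of $W^{2,2}(I)$. However, it tacitly identifies the classically defined $\mathbb{H}$ ($C^1$, absolutely continuous $\psi'$, $\psi''\in\mathbb{L}^2$) with the zero-mean subspace of $W^{2,2}(I)$. That is a standard one-dimensional fact, but the paper never states it. Your route needs only the completeness of $C(I)$ and $\mathbb{L}^2(I)$ and the fundamental theorem of calculus for absolutely continuous functions. It also verifies explicitly that the limit keeps the $C^1$ and absolute-continuity properties required by the definition. In exchange, you additionally invoke the uniform bound on $\psi'$, which the paper's completeness step does not use; your constant $2$ in place of $\sqrt2$ is harmless.

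\textbf{A wording slip.} You wrote that $\psi'$ ``has zero mean, because $\psi(1)-\psi(0)$ need not vanish.'' You mean it need \emph{not} have zero mean, which is why you integrate over $s$ rather than using a zero of $\psi'$. The inequality you derive is unaffected.
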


The proof of Proposition \ref{prop:hilbert} is given in \ref{app:hilbert_proof}.  
Functions within $\mathbb{H}$ are required to be continuously differentiable on the compact interval $I$. In Proposition \ref{prop:bound}, we provide an explicit, constructive proof that bounding the $\mathbb{H}$-norm tightly bounds the $\mathbb{L}^\infty$ norms of both $\psi$ and its derivative. This direct mathematical relationship is fundamental to our framework: it implies that by controlling the $\mathbb{H}$-norm during optimization, we explicitly constrain the maximum amplitude of the log-derivative and its rate of change. This ensures that the resulting warping function $\gamma = \Phi^{-1}(\psi)$ remains a valid, smooth diffeomorphism with derivatives strictly bounded away from zero and infinity, effectively preventing the ``pinching'' artifacts common in unregularized registration.

\begin{proposition}[Uniform Stability]
\label{prop:bound}
For any $\psi \in \mathbb{H}$, the $\mathbb{L}^\infty$ norms of $\psi$ and its derivative $\psi'$ are strictly bounded by the $\mathbb{H}$-norm:
$$
    \|\psi\|_\infty \leq \|\psi'\|_{\mathbb{L}^2} \leq \|\psi\|_{\mathbb{H}}
$$
$$
    \|\psi'\|_\infty \leq \sqrt{2} \|\psi\|_{\mathbb{H}}
$$
\end{proposition}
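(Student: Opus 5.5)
The plan is to handle the two bounds separately. Both rest on the same device: locate a point where the relevant function vanishes, then apply the fundamental theorem of calculus together with Cauchy--Schwarz on the unit interval.

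\textbf{First bound.} Since $\psi \in C^1(I)$ and $\int_0^1 \psi(t)\,dt = 0$, the mean value theorem for integrals gives a point $t_0 \in I$ with $\psi(t_0) = 0$. For any $t \in I$ we then have
\[
|\psi(t)| = \left|\int_{t_0}^t \psi'(s)\,ds\right| \le \int_0^1 |\psi'(s)|\,ds \le \|\psi'\|_{\mathbb{L}^2},
\]
where the last step is Cauchy--Schwarz against the constant $1$ on an interval of length one. Taking the supremum over $t$ yields $\|\psi\|_\infty \le \|\psi'\|_{\mathbb{L}^2}$. The second inequality, $\|\psi'\|_{\mathbb{L}^2} \le \|\psi\|_{\mathbb{H}}$, is immediate because $\|\psi\|_{\mathbb{H}}^2 = \|\psi'\|_{\mathbb{L}^2}^2 + \|\psi''\|_{\mathbb{L}^2}^2$.

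\textbf{Second bound.} Here $\psi'$ need not vanish anywhere, so the argument has to be adapted. The natural substitute for the zero of $\psi$ is the identity $\int_0^1 \psi'(s)\,ds = \psi(1) - \psi(0)$. The cleanest route avoids locating a zero of $\psi'$ and instead uses an averaging identity. Since $\psi'$ is absolutely continuous, for every $t, s \in I$ we have $\psi'(t) = \psi'(s) + \int_s^t \psi''(u)\,du$. Integrating over $s \in [0,1]$ gives
\[
\psi'(t) = \int_0^1 \psi'(s)\,ds + \int_0^1 \int_s^t \psi''(u)\,du\,ds.
\]
Bounding each term by Cauchy--Schwarz on $[0,1]$ then yields
\[
|\psi'(t)| \le \|\psi'\|_{\mathbb{L}^2} + \|\psi''\|_{\mathbb{L}^2}.
\]
Finally, the elementary inequality $a + b \le \sqrt{2}\sqrt{a^2+b^2}$ gives $|\psi'(t)| \le \sqrt{2}\,\|\psi\|_{\mathbb{H}}$ for every $t$, and hence $\|\psi'\|_\infty \le \sqrt{2}\,\|\psi\|_{\mathbb{H}}$.

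\textbf{Main obstacle.} The only real subtlety is the second bound, where there is no guaranteed zero of $\psi'$. The averaging identity above resolves this, and it is exactly what produces the constant $\sqrt{2}$. Absolute continuity of $\psi'$, which is part of the definition of $\mathbb{H}$, justifies the fundamental theorem of calculus step. As an alternative, one could use Rolle's theorem to obtain a point with $\psi' = 0$ from $\psi(0) = \psi(1)$; however, this equality does not hold in general, so I would rely on the averaging argument.
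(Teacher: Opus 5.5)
Your proof is correct and follows the paper's argument essentially step for step. The first bound is identical: a zero of $\psi$ from the mean value theorem for integrals, then the fundamental theorem of calculus and Cauchy--Schwarz. For the second bound, the paper picks $t_1$ with $\psi'(t_1)=\int_0^1\psi'(s)\,ds$, whereas you average $\psi'(t)=\psi'(s)+\int_s^t\psi''(u)\,du$ over $s\in[0,1]$; this is a cosmetic variant giving the same estimate $|\psi'(t)|\le\|\psi'\|_{\mathbb{L}^2}+\|\psi''\|_{\mathbb{L}^2}$. One small quibble: the constant $\sqrt{2}$ comes from the final inequality $a+b\le\sqrt{2}\sqrt{a^2+b^2}$, not from the averaging step itself.
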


See the proof of this proposition in \ref{app:bound_proof}. We emphasize that these two uniform bounds serve distinct and critical roles in ensuring the geometric validity of the resulting warping function $\gamma$. Because $\psi$ acts as the CLR representation of the deformation, the first bound on $\|\psi\|_\infty$ explicitly guarantees that the slope of the warp, proportional to $\exp(\psi(t))$, is strictly bounded away from zero and infinity. This physically prevents the transformation from exhibiting ``pinching'' (zero or infinite slope), and rigorously confirms the continuous embedding $\mathbb{H} \subset \mathbb{L}_{0, \infty}(I)$. The second bound on $\|\psi'\|_\infty$ restricts the maximum rate of change of the CLR representation, which corresponds to the relative acceleration (or curvature) $\gamma'' / \gamma'$ of the warp. By controlling this derivative, we ensure that the deformation remains globally smooth and free from sharp, high-frequency oscillations \citep{ramsay1998curve}.

\subsubsection{The Sobolev Penalty}

We define the roughness of a warping function through the squared $\mathbb{H}$-norm of its centered log-derivative $\psi = \Phi(\gamma)$. Based on the inner product of our Sobolev space, for any $\psi \in \mathbb{H}$, the regularization penalty $\mathcal{R}(\psi)$ is formulated as:
\begin{equation}
    \mathcal{R}(\psi) = \|\psi\|_{\mathbb{H}}^2 = \int_0^1 \left( \psi'(t) \right)^2 dt + \int_0^1 \left( \psi''(t) \right)^2 dt \label{eq:sobolev_penalty}
\end{equation}
The objective functions for the proposed methods are constructed by incorporating this roughness penalty into the mismatch functionals. By penalizing the first and second derivatives simultaneously, we regularize both the \textit{velocity} and \textit{acceleration} of the warping function in the log-space.

This formulation provides a unified geometric solution to the inherent limitations of previous penalty strategies in functional alignment. Historically, regularization techniques can be categorized by the space they operate in and the specific derivative order they penalize, as summarized in Table \ref{tab:penalty_comparison} (where $\gamma_{id}(t) = t$ is the identity warping). 

\begin{table}[htpb]
\centering
\caption{Comparison of Regularization Penalties in Functional Alignment}
\label{tab:penalty_comparison}
\renewcommand{\arraystretch}{1.3}
\resizebox{\textwidth}{!}{%
\begin{tabular}{llcl}
\hline\hline
\textbf{Reference} & \textbf{Penalty Formulation} & \textbf{Space} & \textbf{Primary Structural Limitation} \\ \hline
\citet{tang2008pairwise} & $\|\gamma - \gamma_{id}\|_{\mathbb{L}^2}^2$ & Original & Biases against large, valid phase variations \\
\citet{james2007curve} & $\|(\gamma')^{-1} - 1\|_{\mathbb{L}^2}^2$ & Original & Requires constrained optimization, division issues \\
\citet{srivastava2016functional} & $\|\sqrt{\gamma'} - 1\|_{\mathbb{L}^2}^2$ & Original & Heavily penalizes smooth, linear stretching \\
\citet{kneip2008combining} & $\|\gamma''\|_{\mathbb{L}^2}^2$ & Original & Fails to inherently prevent ``folding'' \\ \hline
\citet{ma2024} & $\|\psi\|_{\mathbb{L}^2}^2$ & CLR & Biases toward identity; lacks smoothness constraints \\
\citet{ramsay1998curve} & $\|\psi'\|_{\mathbb{L}^2}^2 \quad (\psi' = \frac{\gamma''}{\gamma'})$ & CLR & Geometrically deficient; permits non-differentiable ``kinks'' \\
Theoretical Formulation & $\|\psi''\|_{\mathbb{L}^2}^2$ & CLR & Fails topologically (semi-norm); unpenalized null space \\ \hline
\textbf{Proposed Sobolev} & $\|\psi'\|_{\mathbb{L}^2}^2 + \|\psi''\|_{\mathbb{L}^2}^2$ & CLR ($\mathbb{H}$) & \textit{Establishes Hilbert structure and bounded embedding} \\ \hline\hline
\end{tabular}%
}
\end{table}

The most fundamental approaches define the regularization directly in the original warping space $\Gamma$, utilizing displacement penalties \citep{tang2008pairwise}, velocity deviations \citep{james2007curve}, SRVF-based metrics \citep{srivastava2016functional}, or raw Euclidean acceleration \citep{kneip2008combining}. While these methods have established rigorous frameworks for functional alignment, operating directly in the constrained, nonlinear manifold of diffeomorphisms presents significant theoretical and practical challenges. First, penalties in the original domain frequently target the absolute magnitude of the deformation, such as pushing the displacement toward zero ($\|\gamma - \gamma_{id}\|_{\mathbb{L}^2}^2$), or the velocity toward one ($\|(\gamma')^{-1} - 1\|_{\mathbb{L}^2}^2$ and $\|\sqrt{\gamma'} - 1\|_{\mathbb{L}^2}^2$), or the acceleration toward zero ($\|\gamma''\|_{\mathbb{L}^2}^2$). This inherently biases the estimator toward the identity function, artificially penalizing large, entirely valid phase variations that frequently occur in physical and biological data. Second, from a theoretical perspective, proving the asymptotic consistency of estimators defined directly on $\Gamma$ typically requires imposing strict, artificial boundary assumptions on the target warping functions (e.g., assuming a priori that $0 < \epsilon \leq \gamma'(t) \leq M$ for some arbitrary constants) to prevent the manifold from collapsing to non-differentiable step functions. This same limitation requires computationally expensive constrained optimization techniques to physically prevent domain ``folding'' ($\gamma'(t) \leq 0$) during the estimation process.

Transforming the problem via the Centered Log-Ratio mapping elegantly resolves the manifold constraints and boundary issues of traditional methods. Because the derivative $\gamma'(t) \propto \exp(\psi(t))$ is strictly positive for any $\psi \in \mathbb{H}$, domain folding is mathematically impossible, eliminating the need for expensive constrained optimization. However, effectively regularizing in this unconstrained space requires precise control over the geometric structure of $\psi$; previous CLR approaches fundamentally lack this control, leaving alignments vulnerable to distinct topological failures. For instance, \citet{ma2024} employ a conventional $\mathbb{L}^2$ penalty, $\|\psi\|_{\mathbb{L}^2}^2$. Because $\psi(t) = 0$ corresponds exactly to the identity warp $\gamma_{id}$, this zeroth-order penalty suffers from the exact same fundamental flaw as original-domain methods: it restricts the overall magnitude of the deformation rather than its roughness. Consequently, it artificially biases the estimator toward the identity, unduly penalizing large but mathematically valid time warpings. Furthermore, it lacks structural awareness of the warp’s shape. This allows the optimizer to introduce unpenalized high-frequency noise, yielding jagged derivatives that physically distort the alignment.

Moving beyond zeroth-order penalties, regularizing the derivatives of $\psi$ directly targets the relative acceleration (or relative roughness) of the warp, given by $\psi'(t) = \gamma''(t)/\gamma'(t)$. This exact ratio forms the basis of the classic continuous registration framework introduced by \citet{ramsay1998curve}, which utilizes a first-order penalty, $\|\psi'\|_{\mathbb{L}^2}^2$. While a significant improvement, a purely first-order penalty lacks crucial geometric information regarding higher-order smoothness, permitting the optimal $\psi$ to be piecewise linear. This allows for continuous but non-differentiable ``kinks'' in the log-velocity, translating to abrupt, physically unnatural shifts in the acceleration of the resulting warp. Conversely, applying a purely second-order penalty ($\|\psi''\|_{\mathbb{L}^2}^2$) entirely loses control over the magnitude of the stretch. Because the null space of the second derivative includes all linear functions ($\psi(t) = at + b$), the optimizer could select an arbitrarily steep slope without incurring any penalty. 

More importantly, under the intrinsic CLR zero-mean constraint ($\int_0^1 \psi(t)dt = 0$), relying on only a single derivative fails to provide the comprehensive mathematical foundation required for rigorous theoretical guarantees. A purely first-order penalty constructs a space that, while forming a valid norm, remains geometrically deficient and permits non-differentiable kinks. Conversely, a purely second-order penalty fails topologically altogether: it remains a mere semi-norm with an unpenalized linear null space that inherently prevents the formation of a Hilbert space. By simultaneously restricting both $\|\psi'\|_{\mathbb{L}^2}^2$ and $\|\psi''\|_{\mathbb{L}^2}^2$, our complete Sobolev penalty perfectly balances these competing requirements and establishes the proper Hilbert structure. The first-derivative term strictly bounds the maximum allowable relative stretch (closing the linear null space), while the second-derivative term enforces continuous differentiability to prevent microscopic phase noise and abrupt kinks. Together, they construct the exact topological space $\mathbb{H}$ needed to establish the continuous embedding $\mathbb{H} \subset \mathbb{L}_{0, \infty}(I)$ via Proposition \ref{prop:bound}. This mathematical structure intrinsically ensures the warping derivative remains uniformly bounded away from zero and infinity, replacing the artificial boundary assumptions of traditional methods with a rigorous, geometrically meaningful topological guarantee.

\section{Sobolev-Regularized Alignment Framework}
\label{sec:sobolev}

Building upon the geometric and Sobolev foundations previously established, this section details the construction of the global registration objective. The framework is designed to find a valid diffeomorphic warping that optimizes the balance between robust signal alignment in the original function space and structural smoothness. By utilizing the CLR transformation, we formulate a regularized objective that avoids the optimization challenges of explicit inverse-consistency constraints, enabling highly efficient, unconstrained optimization.

\subsection{The Global Objective and Mismatch Options}
\label{subsec:mismatch}

The goal of the registration framework is to find an optimal centered log-derivative $\hat{\psi}$ that minimizes a composite energy functional. Throughout this study, we restrict our focus to the pairwise registration problem. Let $f, g \in \mathbb{L}^2([0,1])$ represent two observed functional data samples, where $g$ acts as the fixed template and $f$ is the source signal to be warped. We define the composite energy functional as:
\begin{equation}
    O(\psi; f, g) = D(f, g, \Phi^{-1}(\psi)) + \lambda \|\psi\|_{\mathbb{H}}^2
    \label{eq:obj}
\end{equation}
where $D(f, g, \gamma)$ represents the data mismatch term, $\lambda > 0$ is the regularization parameter, and $\|\psi\|_{\mathbb{H}}^2 = \int_0^1 (\psi'(t))^2 dt + \int_0^1 (\psi''(t))^2 dt$ is the Sobolev roughness penalty. The optimal $\hat{\psi}$ is then formally defined as the minimizer of this energy:
\begin{equation*}
    \hat{\psi} = \arg \min_{\psi \in \mathbb{H}} O(\psi; f, g)
\end{equation*}
We systematically consider four specific formulations for the mismatch term $D$ in Equation \eqref{eq:obj}, each offering a different geometric approach to aligning the original signal intensities:

\begin{enumerate}
    \item \textbf{Standard $\mathbb{L}^2$ (Baseline):}  
    For a given pair of signals $f, g \in \mathbb{L}^2(I)$, the classical distance operator is:
    \begin{equation}
        D_1(f, g, \gamma) = \int_{0}^{1} (f(t) - g(\gamma(t)))^2 dt 
    \end{equation}
    This formulation represents the conventional Euclidean distance commonly used in functional data analysis \citep{ramsay2005functional}. It lacks inherent symmetry, as the cost is defined solely in the coordinate system of $f$, making it sensitive to the choice of reference signal.

    \item \textbf{Symmetric $\mathbb{L}^2$:}  
    Drawing from the symmetric framework established in \citet{tagare2009symmetric_theory}, this formulation addresses the inherent bias of the classical asymmetric distance. We achieve exact data bidirectionality by summing the forward and backward residuals:
    \begin{equation}
        \begin{aligned}
            D_2(f, g, \gamma) &= \frac{1}{2}\int_{0}^{1} (f(t) - g(\gamma(t)))^2 dt + \frac{1}{2}\int_{0}^{1} (g(\tau) - f(\gamma^{-1}(\tau)))^2 d\tau \\
            &= \int_{0}^{1} (f(t) - g(\gamma(t)))^2 \left (\frac{1 + \gamma'(t)}{2} \right ) dt 
        \end{aligned} 
    \end{equation}
    The mismatch term $D_2(f, g, \gamma)$ structurally removes directional bias. The integration weight $(\frac{1 + \gamma'(t)}{2})$ --- derived via a change of variables in the second integral --- ensures the energy is invariant to the choice of reference signal, providing strict inverse consistency for the alignment ($D_2(f, g, \gamma) = D_2(g, f, \gamma^{-1})$).

    \item \textbf{Isometry ($\mathbb{L}^2$-Preserving):}  
    This formulation acts as a variant of the SRVF approach \citep{srivastava2011} applied directly to the signal intensities in the original space. By treating the signals as square-integrable half-densities, we define:
    \begin{equation}
        D_3(f, g, \gamma) = \int_{0}^{1} (f(t) - g(\gamma(t)) \sqrt{\gamma'(t)})^2 dt 
    \end{equation}
    Under this isometric assumption, the mapping preserves the total $\mathbb{L}^2$ energy of the signal under deformation without requiring numerical differentiation of the observed data. While this approach offers geometric elegance and reparameterization invariance, it introduces a specific asymptotic bias in settings where the true generative model follows $f = g \circ \gamma$. In such cases, the $\sqrt{\gamma'}$ term forces a compensatory amplitude scaling that does not exist in the physical observation.

    \item \textbf{Jacobian-Weighted $\mathbb{L}^2$:}  
    Building upon the weighted principles \citep{wang1997alignment}, we utilize the square root of the Jacobian as a weighting factor on the residual itself:
    \begin{equation}
        D_4(f, g, \gamma) = \int_{0}^{1} (f(t) - g(\gamma(t)))^2 \sqrt{\gamma'(t)} \, dt 
        \label{eq:mismatch4}
    \end{equation}
    This formulation serves as a highly effective geometric middle ground. By weighting the squared residual with the square root of the local deformation rate, $\sqrt{\gamma'(t)}$, it dynamically scales the mismatch penalty based on the degree of temporal stretching or compression. 
\end{enumerate}

\subsection{Properties of the Mismatch Formulations}

To ensure a meaningful and numerically stable alignment, we evaluate the four data mismatch options against four essential criteria: Symmetry and Inverse Consistency, Jacobian Role and Pinching Resistance, Self-Alignment Correctness, and Sensitivity to Scale and Translation.

\begin{enumerate}[label=\Roman*.]
    \item \textbf{Symmetry and Inverse Consistency}: In the context of the data mismatch, \textit{symmetry} denotes the invariance of the distance metric under a swap of signal roles and the inversion of the warping function. \textit{Inverse consistency} refers to the reciprocal relationship between the resulting alignments. Our evaluation focuses on whether the mismatch functional strictly satisfies:
    $$
        D(f, g, \gamma) = D(g, f, \gamma^{-1})
    $$
    If this equality holds, the data mismatch term is invariant to the choice of reference signal, ensuring that the data-driven component of the alignment is dictated by the intrinsic geometric relationship between $f$ and $g$.

    The four mismatch formulations behave as follows:
    \begin{enumerate}[label=\roman*)]
        \item \textbf{Standard $\mathbb{L}^2$}: Fails symmetry. In general, $\int (f - g \circ \gamma)^2 \neq \int (g - f \circ \gamma^{-1})^2 $, leading to strong directional bias depending on which signal is chosen as the target.
        
        \item \textbf{Symmetric $\mathbb{L}^2$}: Inherently symmetric by construction. By defining the cost as the explicit sum of the forward and backward residuals, we mathematically guarantee $D_2(f, g, \gamma) = D_2(g, f, \gamma^{-1})$.
        
        \item \textbf{Isometry ($\mathbb{L}^2$-Preserving)}: Satisfies symmetry via the isometry of the square-root half-density transformation, which fundamentally preserves the $\mathbb{L}^2$ inner product under diffeomorphic mapping.
        
\item \textbf{Jacobian-Weighted $\mathbb{L}^2$}: Achieves perfect symmetry through the balanced Jacobian weight. Using the change of variables $\tau = \gamma(t)$, we note that $dt = (\gamma^{-1})'(\tau) d\tau$ and $\gamma'(t) = 1 / (\gamma^{-1})'(\tau)$. Substituting this into the mismatch term in Equation \eqref{eq:mismatch4} yields:
        $$
            D_4(f, g, \gamma) = \int_{0}^{1} (f(\gamma^{-1}(\tau)) - g(\tau))^2 \sqrt{\frac{1}{(\gamma^{-1})'(\tau)}} (\gamma^{-1})'(\tau) d\tau
        $$
        Simplifying the weights provides the exact inverse formulation:
        $$
            D_4(f, g, \gamma) = \int_{0}^{1} (g(\tau) - f(\gamma^{-1}(\tau)))^2 \sqrt{(\gamma^{-1})'(\tau)} d\tau = D_4(g, f, \gamma^{-1})
        $$
    \end{enumerate}

 \item \textbf{The Jacobian Role and Pinching Resistance}: This property examines how the mismatch term interacts with the geometric regularization $\|\psi\|_{\mathbb{H}}^2$. Specifically, we evaluate the tendency of the unregularized data term to encourage ``pinching'' -- defined here as driving the derivative to extreme singularities ($\gamma' \to 0$ or $\gamma' \to \infty$) to artificially hide mismatched features. As will be demonstrated in Section~\ref{subsec:pinching}, the lack of inherent resistance in most of these formulations establishes the absolute necessity of the Sobolev penalty to maintain a valid diffeomorphism.
    \begin{enumerate}[label=\roman*)]
        \item \textbf{Standard $\mathbb{L}^2$}: Susceptible to pinching. Without a Jacobian weight to account for the change in coordinate measure, the optimizer easily bypasses structural mismatches by squeezing ($\gamma' \to 0$) or excessively stretching ($\gamma' \to \infty$) regions of the template. This geometric instability makes the external Sobolev penalty strictly necessary to prevent singular mappings.
        
        \item \textbf{Symmetric $\mathbb{L}^2$}: Susceptible to dual singularities. Contrary to intuition, the $(1 + \gamma')$ weight does not prevent pinching but is actively exploited by the optimizer. When confronted with amplitude mismatches, the unregularized functional minimizes the forward measure by forcing $\gamma' \to 0$ (creating a plateau), which concurrently forces a singular spike ($\gamma^{-1\prime} \to \infty$) in the inverse domain. The penalty is therefore mandatory to prevent these failures.
        
        \item \textbf{Isometry ($\mathbb{L}^2$-Preserving)}: Intrinsically resists pinching but distorts amplitude. Because the mapping applies the $\sqrt{\gamma'}$ Jacobian \textit{inside} the squared residual, extreme squeezing or stretching causes an internal energy explosion, naturally avoiding singularities. However, this structurally intertwines phase and amplitude, acting as a local amplitude scaling factor on $g$ that alters the physical peaks of the data. We enforce the Sobolev penalty here primarily to maintain a unified regularity framework.
        
        \item \textbf{Jacobian-Weighted $\mathbb{L}^2$}: Susceptible to pinching. Because the $\sqrt{\gamma'}$ weight sits \textit{outside} the squared residual, the unregularized optimizer can drive the local mismatch cost strictly to zero simply by setting $\gamma' = 0$ wherever the signals differ. This leads to severe flat regions in the warping function, making the Sobolev penalty absolutely critical to enforce strict monotonicity.
    \end{enumerate}

\item \textbf{Self-Alignment Correctness}: We evaluate the behavior of the data mismatch term when a function is aligned with itself ($f=g$). To be structurally sound, the chosen distance operator $D(f, f, \gamma)$ must uniquely vanish at exactly $\gamma = \gamma_{id}$, ensuring the metric inherently identifies the identity map and strictly penalizes any spurious deformations.

    To formalize this, we restrict our analysis to signals that do not contain flat, constant regions, which would trivially allow infinite non-identity alignments to yield zero mismatch.

\begin{definition}[Reasonable Signal]
    A signal $f \in C([0,1])$ is considered \textit{reasonable} if it is non-constant and can be partitioned into a finite number of sub-intervals upon which it is strictly monotonic.
\end{definition}

    \begin{proposition}[Uniqueness of Self-Alignment]
    Let $f$ be a reasonable signal on $[0, 1]$. If there exists a valid diffeomorphism $\gamma \in \Gamma$ such that $f(t) = f(\gamma(t))$ for all $t \in [0, 1]$, then $\gamma = \gamma_{id}$.
    \label{prop:uni}
    \end{proposition}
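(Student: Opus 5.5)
The plan is a fixed-point/orbit argument: if $\gamma$ moves any point, iterating $\gamma$ produces infinitely many distinct points on which $f$ takes the same value. A reasonable signal cannot do that, because each of its level sets is finite.

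\textbf{Step 1 (finiteness of level sets).} Let $f$ be reasonable, with partition $0=s_0<s_1<\dots<s_K=1$ such that $f$ is strictly monotonic on each $[s_{k-1},s_k]$. For any $c\in\mathbb{R}$, the equation $f(t)=c$ has at most one solution in each sub-interval. Hence the level set $f^{-1}(c)$ has at most $K$ points. If the definition is read with open sub-intervals, the endpoints add at most $K+1$ further points, so finiteness is unaffected.

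\textbf{Step 2 (orbits of a non-identity warp).} Suppose $\gamma\neq\gamma_{id}$, and pick $t_0\in(0,1)$ with $\gamma(t_0)\neq t_0$. Replacing $\gamma$ by $\gamma^{-1}$ if necessary, we may assume $\gamma(t_0)>t_0$. This replacement is harmless: $\gamma^{-1}\in\Gamma$, and $f=f\circ\gamma$ implies $f\circ\gamma^{-1}=f$. Define $t_n=\gamma^n(t_0)$, the $n$-fold composition.
\begin{itemize}
\item Because $\gamma$ is strictly increasing ($\gamma'>m_\gamma>0$), applying $\gamma$ to $t_1>t_0$ gives $t_2>t_1$. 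Inductively, $t_0<t_1<t_2<\cdots$, so the $t_n$ are pairwise distinct points of $[0,1]$.
\item Iterating the identity $f(t)=f(\gamma(t))$ gives $f(t_n)=f(t_0)$ for every $n$.
\end{itemize}

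\textbf{Step 3 (contradiction).} By Step 2, the level set $f^{-1}(f(t_0))$ contains infinitely many points. This contradicts Step 1. Therefore $\gamma(t)=t$ on $(0,1)$, and $\gamma(0)=0$, $\gamma(1)=1$ hold by definition of $\Gamma$, so $\gamma=\gamma_{id}$.

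The only delicate point is Step 1: we must make sure the definition of \emph{reasonable} really yields finite level sets. This holds because each piece is strictly monotonic, and it holds regardless of whether the pieces are taken open or closed. The non-constancy hypothesis is not needed for this particular argument; it only excludes the trivial degenerate case. No earlier result of the paper is needed beyond the facts that elements of $\Gamma$ are strictly increasing bijections of $[0,1]$ fixing the endpoints and that $\Gamma$ is closed under inversion.
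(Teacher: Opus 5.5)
Your proof is correct, and it takes a different route from the paper's.

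The paper argues segment by segment. It asserts that any $\gamma$ with $f=f\circ\gamma$ must map each monotonic segment of $f$ onto itself, and then uses injectivity of $f$ on each segment to get $\gamma(t)=t$. That first assertion is the real content of the proposition, and the paper states it without justification. Making it rigorous takes an extra lemma. One passes to the coarsest partition, whose interior breakpoints are exactly the strict local extrema of $f$. Because $f\circ\gamma=f$, the homeomorphism $\gamma$ permutes this finite set while preserving order, so it must fix every breakpoint.

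Your orbit argument avoids all of this. It needs only that the level sets of $f$ are finite and that $\gamma$ is strictly increasing. So it uses neither the continuity of $f$ nor the derivative bounds $m_\gamma, M_\gamma$. The passage to $\gamma^{-1}$ is also optional: if $\gamma(t_0)<t_0$, the orbit is strictly decreasing and just as infinite. With that variant, the argument applies to any function with finite level sets and any strictly increasing self-map of $[0,1]$.

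What the paper's route would buy, once completed, is extra structural information: $\gamma$ fixes every turning point of $f$. As a proof of the stated uniqueness, however, yours is shorter and fully rigorous.
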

     The proof of Proposition \ref{prop:uni} is given in \ref{app:uni_proof}. Given this proposition, the four mismatch formulations on a reasonable signal $f$ behave as follows:
    \begin{enumerate}[label=\roman*)]
        \item \textbf{Standard and Symmetric $\mathbb{L}^2$}: Satisfy correctness. Both $D_1$ and $D_2$ vanish if and only if $f(t) = f(\gamma(t))$, which uniquely implies $\gamma = \gamma_{id}$ for reasonable signals.
        
        \item \textbf{Isometry ($\mathbb{L}^2$-Preserving)}: Satisfies correctness. The condition for $D_3$ to vanish, $f(t) = f(\gamma(t))\sqrt{\gamma'(t)}$, is uniquely satisfied by $\gamma = \gamma_{id}$ for reasonable signals, a foundational property of the SRVF framework \citep{srivastava2011}.
        
        \item \textbf{Jacobian-Weighted $\mathbb{L}^2$}: Satisfies correctness. Because our CLR framework structurally guarantees $\gamma'(t) > 0$ everywhere, the weighted residual in $D_4$ vanishes if and only if the base residual vanishes ($f(t) = f(\gamma(t))$), uniquely implying $\gamma = \gamma_{id}$.
    \end{enumerate}
  
  \item \textbf{Sensitivity to Scale and Translation}: While purely geometric methods often strive for scale and translation invariance, alignment between physical functions frequently requires preserving absolute magnitude and location. We explicitly note that our distance formulations are sensitive to independent scaling and translation (e.g., replacing $g$ with $cg + d$). Because these mismatch terms entangle the structural phase with vertical scale and shifts, they do not perform abstract, scale-independent shape alignment (with the exception of Method 3 (Isometry)'s inherent scale decoupling). Instead, the resulting diffeomorphisms strictly preserve the physical scale and positional integrity of the functions, which is critical for applications where these attributes carry vital domain information.
    
    
\end{enumerate}

\subsection{An Illustrative Example of the Pinching Effect and Singularity Duality}
\label{subsec:pinching}

As discussed in Section 3.2, relying solely on the data mismatch term ($\lambda = 0$) leaves the optimization vulnerable to a geometric instability known as the ``pinching'' effect. Without a Sobolev penalty to enforce smoothness, the functional minimizes mismatch costs by introducing singularities into the warping derivative. While this phenomenon is well-known for the Standard $\mathbb{L}^2$ (Method 1) \citep{ramsay2005functional, srivastava2011}, we provide a concrete toy example here to illustrate exactly how the proposed weighted functionals of the Symmetric $\mathbb{L}^2$ (Method 2) and Jacobian-Weighted $\mathbb{L}^2$ (Method 4) are equally susceptible to these degenerate solutions.

To this end, we define a target signal $f(t)$ and a source $g(t)$ as perfectly phase-aligned triangle waves with a $2:1$ amplitude ratio:
$$
f(t) = \begin{cases} 4t & 0 \le t \le 0.5 \\ 4(1-t) & 0.5 < t \le 1 \end{cases}, \quad 
g(t) = \begin{cases} 2t & 0 \le t \le 0.5 \\ 2(1-t) & 0.5 < t \le 1 \end{cases}
$$
The identity warping $\gamma_{id}(t) = t$, while providing the true, intuitive geometric matching between the peaks of $f$ and $g$, cannot minimize the objective function in the absence of a penalty. Because $g$ cannot reach the vertical range of $f$, the registration using the identity warp yields an unavoidable vertical residual. We examine the behavior of the optimal warping function $\gamma$ for Method 2 with $\lambda = 0$, as illustrated in Figure~\ref{fig:pinching_plot}.

In this unregularized case, the optimization actively abandons the correct structural alignment to chase the amplitude difference. The resulting optimal warping contains flat regions ($\gamma'=0$), and thus ceases to be a well-defined diffeomorphism; strictly speaking, its inverse does not exist. However, in a limiting sense, these singular forms illustrate the infimum reachable by the unregularized mismatch term alone. They are presented here for intuition, and the rigorous mathematical details of this limiting behavior are omitted for brevity.

\begin{figure}[htbp]
    \centering
    \includegraphics[width=\textwidth]{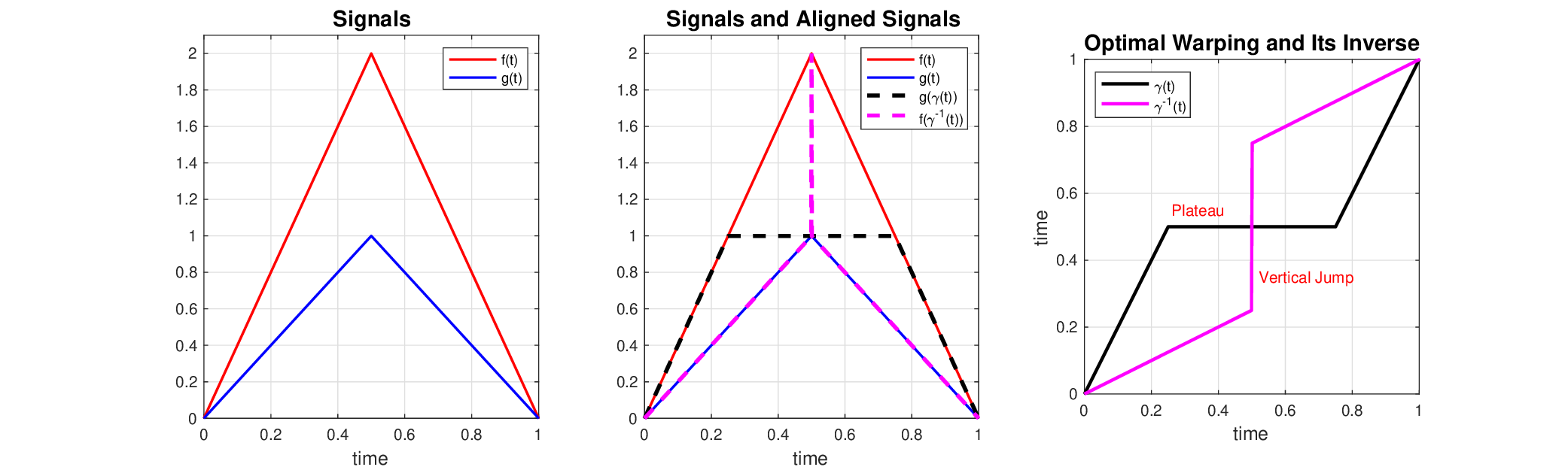} 
    \caption{Geometric singularities in unregularized Symmetric $\mathbb L^2$ registration. (Left) Original signals with $2:1$ amplitude ratio. (Center) Aligned signals: $g(\gamma(t))$ (black) plateaus at its maximum height, while the inverse alignment $f(\gamma^{-1}(t))$ (magenta) pinches the target's peak into a singular spike. (Right) The optimal warping $\gamma$ and its inverse $\gamma^{-1}$ exhibit a horizontal plateau and a reciprocal vertical jump, respectively.}
    \label{fig:pinching_plot}
\end{figure}

The results reveal a critical duality in registration failure. To minimize the weighted error $\int (f(t) - g(\gamma(t))^2(1 + \gamma'(t)) dt = \int_{0}^{1} (f(t) - g(\gamma(t)))^2 + (g(t) - f(\gamma^{-1}(t)))^2 dt$, the optimizer exploits the derivative weight in two distinct ways:
\begin{enumerate}
    \item Forward Domain (The Plateau): To minimize the residual at the peak, the optimizer sets $\gamma'=0$ over the interval $t \in [1/4, 3/4]$ where $f > g$. This effectively ``stops the clock'' at the peak of $g$. This results in the black dashed signal in Figure~\ref{fig:pinching_plot} (Center).
    
    \item Inverse Domain (The Spike): Due to the symmetric nature of the framework, the inverse warp $\gamma^{-1}$ must compensate for the forward plateau with a \textit{vertical jump} (infinite derivative). Consequently, the inverse-aligned signal $f(\gamma^{-1}(t))$ compresses the entire energy of the target's peak into a single point, creating the Dirac-delta-like spike shown in magenta.
\end{enumerate}

The transition between the linear segments and the plateau in Figure~\ref{fig:pinching_plot} (Right) creates sharp corners where the first derivative $\gamma'$ has a jump discontinuity. In Method 4, this instability is even more pronounced, as the $\sqrt{\gamma'}$ term allows the infimum of the cost to reach zero by setting $\gamma'=0$ whenever a mismatch occurs.
These results demonstrate that for Methods 1, 2, and 4, the optimal limiting solution is not a true diffeomorphism but a piecewise-differentiable, singular mapping; indeed, these pinching warpings represent the infimum of the unregularized objective functionals.

In contrast, Method 3 (a SRVF variant) is intrinsically regularized. Because the Jacobian $\sqrt{\gamma'}$ is coupled inside the square of the integrand, the formulation acts as an isometry that preserves total signal energy. Consequently, any tendency toward $\gamma' \to 0$ or $\gamma' \to \infty$ incurs a strict energetic penalty. Thus, for Method 3, the pinching effect is naturally avoided even without an external penalty. However, to maintain a consistent regularity framework, we enforce the Sobolev penalty across all four methods. This ensures that the resulting warps are smooth diffeomorphisms regardless of the chosen mismatch term.

\subsection{Summary of the Proposed Mismatch Formulations}
\label{subsec:framework_summary}

We provide a comprehensive summary of the four mismatch formulations analyzed in this work. Table~\ref{tab:framework_comparison} compares the intrinsic mathematical properties of these unregularized data fidelity terms, revealing their inherent structural differences. The table highlights the critical trade-offs between physical intuition, bidirectional symmetry, and intrinsic robustness to numerical artifacts like the ``pinching'' effect. Most importantly, this comparison illustrates exactly why Methods 1, 2, and 4 rely heavily on an external Sobolev penalty to maintain geometric stability, whereas Method 3 intrinsically penalizes singular behavior.

\begin{table}[ht]
\centering
\caption{Comprehensive Comparison of the Unregularized Mismatch Formulations}
\label{tab:framework_comparison}
\scriptsize
\renewcommand{\arraystretch}{1.8} 
\begin{tabular}{|p{2.8cm}|c|c|c|c|}
\hline
\textbf{Method} & \textbf{1. Standard} & \textbf{2. Symmetric} & \textbf{3. Isometry} & \textbf{4. Jacobian-W} \\ \hline
\textbf{Mismatch Term} & $\int_{0}^{1} (f - g \circ \gamma)^2$ & $\int_{0}^{1} (f - g \circ \gamma)^2 \left(\frac{1 + \gamma'}{2}\right)$ & $\int_{0}^{1} (f - (g \circ \gamma)\sqrt{\gamma'})^2$ & $\int_{0}^{1} (f - g \circ \gamma)^2 \sqrt{\gamma'}$ \\ \hline
\textbf{Interpretability} & Intuitive & Intuitive & Physical & Mathematical \\ \hline
\textbf{Symmetry} & No & Yes & Yes & Yes \\ \hline
\textbf{Inv. Consistency} & No & Yes & Yes & Yes \\ \hline
\textbf{Intrinsic Pinching Resist.} & Low & Low & High & Very Low \\ \hline
\textbf{Regularization} & Necessary & Necessary & Intrinsic & Necessary \\ \hline
\textbf{Self-Alignment} & Correct & Correct & Correct & Correct \\ \hline
\end{tabular}
\end{table}

Moving down the properties in Table~\ref{tab:framework_comparison}, the \textit{Interpretability} row highlights fundamentally different conceptual approaches. Methods 1 and 2 are classified as \textit{Intuitive} because they rely on the direct vertical distance between signal values. Method 3, described as \textit{Physical}, treats the signal as a half-density subject to preserve the $\mathbb{L}^2$ norm by incorporating $\sqrt{\gamma'}$ inside the square. Method 4 is characterized as \textit{Mathematical}; rather than modifying the signal amplitude itself, it uses the $\sqrt{\gamma'}$ weight to adjust the underlying integration measure required for reparameterization invariance.

Regarding \textit{Symmetry} and \textit{Inverse Consistency}, bidirectional agreement is critical for unbiased registration. Method 1 strictly computes a unidirectional mapping, failing to satisfy these properties. In contrast, Methods 2, 3, and 4 explicitly enforce a symmetric structural formulation. This ensures that the registration is invariant to the choice of the template, guaranteeing that the optimal mapping from $f$ to $g$ is exactly the inverse of the mapping from $g$ to $f$.

A major differentiator among these models is their \textit{Intrinsic Pinching Resistance}, which directly dictates their reliance on external Sobolev \textit{Regularization}. Methods 1 and 2 exhibit low intrinsic resistance; without a dominant penalty, they are susceptible to collapsing mismatched regions into a near-zero measure. Method 4 is mathematically even more vulnerable (Very Low resistance). Consequently, Methods 1, 2, and 4 rely entirely on the Sobolev penalty to act as a strict geometric guard (making external regularization \textit{Necessary}). Method 3, however, possesses high intrinsic resistance. Any attempt to pinch the domain leaves a massive residual energy, inherently preventing singular plateaus and making its geometric stability \textit{Intrinsic}.

Finally, despite these differences in pinching vulnerability and geometric interpretation, all four formulations successfully guarantee mathematically correct \textit{Self-Alignment}. As long as the signals are reasonable (lacking flat, constant regions), each mismatch term uniquely vanishes exactly at the identity map ($\gamma = \gamma_{id}$), ensuring no spurious deformations are introduced when a function is aligned with itself.

\subsection{Existence of the Optimal Solution}

Having established the distinct mathematical properties of the unregularized mismatch formulations, we now return to the complete registration framework. To conclude our theoretical development, we must guarantee that combining these data fidelity terms with the Sobolev regularity penalty yields a mathematically well-posed optimization problem. We establish the existence of a global minimizer $\psi^* \in \mathbb{H}$ for the full functional registration objective using the Direct Method in the Calculus of Variations. Throughout this section, we assume the raw signals are continuous, $f, g \in C^0(I)$, to ensure well-posed point-wise evaluations.

It is critical to note that the Sobolev penalty $\lambda \|\psi\|_{\mathbb{H}}^2$ is not merely a smoothing term, but a fundamental requirement for the well-posedness of the problem. As demonstrated in Section \ref{subsec:pinching}, removing this regularization ($\lambda = 0$) causes the objective functionals to lose their coercivity, allowing the optimization to collapse into singular, piecewise-differentiable mappings. The penalty prevents this geometric instability by ensuring the optimization remains within a bounded subset of $\mathbb{H}$ that embeds compactly into $C^1(I)$. This theoretical framework guarantees that a physically meaningful, strictly diffeomorphic warping function exists, providing a rigorously defined global minimum for our numerical search.

\begin{enumerate}[label=\Roman*.]
    \item \textbf{Standard $\mathbb{L}^2$ Mismatch (Baseline):} We utilize the Sobolev space $\mathbb{H} \subset C^1(I)$, a Hilbert space of zero-mean functions with square-integrable first and second derivatives. As established in Section \ref{sec:space}, the space is equipped with the Sobolev roughness penalty $\|\psi\|_{\mathbb{H}}^2 = \int_0^1 (\psi'(t))^2 dt + \int_0^1 (\psi''(t))^2 dt$.

\begin{theorem}[Existence of Optimal Warping]
Let $f, g \in C^0(I)$. There exists a minimizer $\psi^* \in \mathbb{H}$ for the objective functional:
$$
    \mathcal{O}_1(\psi) = \int_{0}^{1} (f(t) - g(\gamma_\psi(t)))^2 dt + \lambda \|\psi\|_{\mathbb{H}}^2
$$
where $\gamma_\psi = \Phi^{-1}(\psi)$ is the normalized exponential map of $\psi$ (in Equation \eqref{eq:invCLR}).
\end{theorem}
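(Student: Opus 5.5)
We argue by the Direct Method in the Calculus of Variations on the Hilbert space $\mathbb{H}$ (Proposition \ref{prop:hilbert}). Since $\mathcal{O}_1 \geq 0$, the infimum $m = \inf_{\psi \in \mathbb{H}} \mathcal{O}_1(\psi)$ is finite. Moreover $m \le \mathcal{O}_1(0) = \int_0^1 (f-g)^2\,dt < \infty$, because $\psi = 0$ gives $\gamma_0 = \gamma_{id}$.

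\textbf{Step 1: coercivity.} Take a minimizing sequence $\psi_n$ with $\mathcal{O}_1(\psi_n) \to m$. The mismatch term is nonnegative, so
$$
\lambda \|\psi_n\|_{\mathbb{H}}^2 \le \mathcal{O}_1(\psi_n) \le m+1
$$
for large $n$. Hence $\|\psi_n\|_{\mathbb{H}} \le R := \sqrt{(m+1)/\lambda}$, and the sequence is bounded in $\mathbb{H}$.

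\textbf{Step 2: compactness.} Since $\mathbb{H}$ is a Hilbert space, a subsequence converges weakly, $\psi_n \rightharpoonup \psi^*$ in $\mathbb{H}$. We upgrade this to uniform convergence with an Arzel\`a--Ascoli argument.
\begin{itemize}
\item By Proposition \ref{prop:bound}, $\|\psi_n\|_\infty \le R$ and $\|\psi_n'\|_\infty \le \sqrt{2}R$, so the $\psi_n$ are uniformly bounded and uniformly Lipschitz.
\item By Cauchy--Schwarz, $|\psi_n'(t)-\psi_n'(s)| \le \|\psi_n''\|_{\mathbb{L}^2}|t-s|^{1/2} \le R|t-s|^{1/2}$, so the $\psi_n'$ are equicontinuous.
\end{itemize}
Therefore a further subsequence converges in $C^1(I)$ to some $\tilde\psi$. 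We identify $\tilde\psi = \psi^*$. For fixed $t$, the evaluation $\psi \mapsto \psi(t)$ is a continuous linear functional on $\mathbb{H}$ by Proposition \ref{prop:bound}, so weak convergence implies pointwise convergence. The zero-mean constraint is preserved under uniform convergence. Thus $\psi_n \to \psi^*$ uniformly and $\psi^* \in \mathbb{H}$.

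\textbf{Step 3: continuity of the mismatch.} Uniform convergence gives $e^{\psi_n} \to e^{\psi^*}$ uniformly, since all values lie in $[e^{-R}, e^{R}]$ and $\exp$ is Lipschitz on bounded sets. The denominators $\int_0^1 e^{\psi_n}\,ds$ are bounded below by $e^{-R} > 0$. Hence $\gamma_{\psi_n} \to \gamma_{\psi^*}$ uniformly on $I$, with all values in $[0,1]$. Because $g$ is uniformly continuous on the compact set $I$, we get $g\circ\gamma_{\psi_n} \to g\circ\gamma_{\psi^*}$ uniformly. Since $f$ and $g$ are bounded, dominated (in fact uniform) convergence yields
$$
\int_0^1 (f - g\circ\gamma_{\psi_n})^2\,dt \to \int_0^1 (f - g\circ\gamma_{\psi^*})^2\,dt.
$$

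\textbf{Step 4: lower semicontinuity and conclusion.} The norm of a Hilbert space is weakly lower semicontinuous, so $\|\psi^*\|_{\mathbb{H}}^2 \le \liminf_n \|\psi_n\|_{\mathbb{H}}^2$. Combining this with Step 3 gives $\mathcal{O}_1(\psi^*) \le \liminf_n \mathcal{O}_1(\psi_n) = m$. Hence $\psi^*$ is a minimizer.

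The main obstacle is Step 2: we must pass from weak convergence in $\mathbb{H}$ to a mode of convergence strong enough to push through the nonlinear map $\psi \mapsto g\circ\Phi^{-1}(\psi)$. Since $g$ is only continuous, weak convergence alone is not enough, and uniform convergence of $\psi_n$ is the key requirement. The uniform bounds of Proposition \ref{prop:bound} are what supply it, and we take care to check that the weak limit and the uniform limit coincide.
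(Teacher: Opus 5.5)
Your proposal is correct and follows the paper's own Direct Method argument: coercivity from the penalty, weak compactness in $\mathbb{H}$, upgrade to uniform convergence, continuity of the $\mathbb{L}^2$ mismatch, and weak lower semicontinuity of the norm. The only difference is that the paper cites the compact embedding $W^{2,2}(I)\hookrightarrow C^1(I)$, whereas you derive the needed compactness directly via Arzel\`a--Ascoli from the bounds of Proposition \ref{prop:bound}; you also check explicitly that the weak and uniform limits coincide and that $\gamma_{\psi_n}\to\gamma_{\psi^*}$ uniformly, two points the paper leaves implicit.
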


Establishing the existence of this minimizer ensures the functional registration problem is well-posed in the continuous limit, providing a rigorous theoretical target that justifies our subsequent numerical search within a finite-dimensional subspace. The Sobolev penalty, weighted by $\lambda$, acts as a strict geometric guard against the infinite-dimensional ``pinching'' phenomenon ($\gamma' \to 0$ or $\gamma' \to \infty$), ensuring that the resulting warping function remains a strictly smooth diffeomorphism. A detailed proof utilizing the reflexivity of the Sobolev space and its compact embedding into $C^1(I)$ is provided in \ref{app:existence_proof1}.

\item \textbf{Symmetric $\mathbb{L}^2$ Mismatch:}
Beyond the standard coordinate shift, the Symmetric $\mathbb{L}^2$ framework incorporates the change-of-variables Jacobian directly into the mismatch functional to ensure bidirectional alignment.

\begin{theorem}[Existence of Optimal Symmetric Warping]
Let $f, g \in C^0(I)$. There exists a minimizer $\psi^* \in \mathbb{H}$ for the symmetric objective:
$$
    \mathcal{O}_2(\psi) = \int_{0}^{1} (f(t) - g(\gamma_\psi(t)))^2 \left (\frac{1 + \gamma_\psi'(t)}{2} \right ) dt + \lambda \|\psi\|_{\mathbb{H}}^2
$$
\end{theorem}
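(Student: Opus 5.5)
We will follow the Direct Method in the Calculus of Variations, exactly as for the Standard $\mathbb{L}^2$ objective $\mathcal{O}_1$. The only new ingredient is the Jacobian weight $(1+\gamma_\psi')/2$, and we must check that it passes to the limit.

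\emph{Minimizing sequence and boundedness.} The mismatch term is nonnegative, since $\gamma_\psi' > 0$. Hence $\mathcal{O}_2 \geq 0$, and the infimum $m = \inf_{\mathbb{H}} \mathcal{O}_2$ is finite, with $m \le \mathcal{O}_2(0) = \frac{1}{2}\int_0^1 (f-g)^2\,dt + \frac12\int_0^1(f-g)^2\,dt < \infty$. Take a minimizing sequence $\psi_n$. Coercivity follows from the penalty alone: $\lambda\|\psi_n\|_{\mathbb{H}}^2 \le \mathcal{O}_2(\psi_n) \le m+1$ for large $n$. So $\|\psi_n\|_{\mathbb{H}} \le R := \sqrt{(m+1)/\lambda}$.

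\emph{Compactness.} By Proposition~\ref{prop:hilbert}, $\mathbb{H}$ is a Hilbert space, hence reflexive, so a subsequence converges weakly, $\psi_n \rightharpoonup \psi^*$ in $\mathbb{H}$. I then upgrade this to uniform convergence of $\psi_n$ and $\psi_n'$. By Proposition~\ref{prop:bound}, $\|\psi_n'\|_\infty \le \sqrt2 R$ and $\|\psi_n\|_\infty\le R$. Cauchy--Schwarz on $\psi_n''$ gives $|\psi_n'(t)-\psi_n'(s)| \le R|t-s|^{1/2}$, so the $\psi_n'$ are equicontinuous. Arzel\`a--Ascoli therefore gives, after a further subsequence, $\psi_n \to \tilde\psi$ in $C^1(I)$. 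Weak convergence identifies the limit: point evaluations of $\psi$ and $\psi'$ are continuous linear functionals on $\mathbb{H}$ by Proposition~\ref{prop:bound}, so $\tilde\psi=\psi^*$. The zero-mean constraint is preserved under uniform convergence, so $\psi^* \in \mathbb{H}$.

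\emph{Continuity of the mismatch.} Since $|\psi_n|\le R$, we have $e^{-R}\le e^{\psi_n}\le e^R$. Hence the normalizers $Z_n=\int_0^1 e^{\psi_n}$ converge to $Z^*$ and are bounded below by $e^{-R}$. Consequently $\gamma_{\psi_n}\to\gamma_{\psi^*}$ uniformly, and also $\gamma'_{\psi_n} = e^{\psi_n}/Z_n \to e^{\psi^*}/Z^* = \gamma'_{\psi^*}$ uniformly. This derivative convergence is the step specific to this theorem: the weight involves $\gamma'$, so the $C^0$ convergence used for $\mathcal{O}_1$ is not sufficient by itself. Here Proposition~\ref{prop:bound}'s control of $\|\psi\|_\infty$ supplies what is needed, and the $\psi'$ control is not required. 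Next, $g$ is uniformly continuous on the compact interval $I$, so $g\circ\gamma_{\psi_n}\to g\circ\gamma_{\psi^*}$ uniformly. The integrands $(f-g\circ\gamma_{\psi_n})^2(1+\gamma'_{\psi_n})/2$ are uniformly bounded by $(2\|f\|_\infty+2\|g\|_\infty)^2(1+e^{2R})$ and converge pointwise, indeed uniformly. Dominated convergence then gives convergence of the mismatch term to its value at $\psi^*$.

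\emph{Conclusion.} The penalty is weakly lower semicontinuous, being the square of a Hilbert norm, so $\|\psi^*\|_{\mathbb{H}}^2 \le \liminf\|\psi_n\|_{\mathbb{H}}^2$. Combining this with continuity of the mismatch yields $\mathcal{O}_2(\psi^*) \le \liminf \mathcal{O}_2(\psi_n) = m$, so $\psi^*$ is a minimizer.

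The main point needing care is the uniform convergence of $\gamma'_{\psi_n}$. This is where the a priori $L^\infty$ bound on $\psi$, inherited from the $\mathbb{H}$-norm, is essential: it keeps $Z_n$ away from $0$ and keeps the weight bounded.
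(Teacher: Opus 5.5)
Your proposal is correct and follows the same route as the paper's proof of this theorem. Both arguments use the direct method: coercivity from the penalty, a weakly convergent subsequence by reflexivity upgraded to $C^1(I)$ convergence, uniform convergence of the weighted integrand $(f-g\circ\gamma_{\psi_n})^2(1+\gamma_{\psi_n}')/2$, and weak lower semicontinuity of $\|\cdot\|_{\mathbb{H}}^2$. The only differences are added detail: you prove the compact embedding explicitly via Arzel\`a--Ascoli, and you state the normalizer bound $Z_n \ge e^{-R}$ that the paper uses implicitly when it asserts uniform convergence of $\gamma_{n_k}'$.
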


The existence of a solution for the symmetric case is mathematically non-trivial because the functional depends explicitly on the derivative $\gamma'$ within the data-matching integration weight. As demonstrated in Section 3.3, the inclusion of $\gamma'$ in the mismatch formulation cannot, on its own, prevent degenerate warping. Therefore, the Sobolev regularity framework serves a critical dual purpose here. First, the penalty steps in to strictly bound the warping derivative away from zero and infinity, actively preventing the singular "pinching" phenomenon that the data term is blind to. Second, the compact embedding of $\mathbb{H}$ into $C^1(I)$ ensures that this derivative field converges uniformly. This guarantees that the symmetric integration weight, $\frac{1 + \gamma_\psi'(t)}{2}$, remains stable during the variational limiting process, ultimately securing a global minimizer as detailed in \ref{app:existence_proof2}.

\item \textbf{Isometry ($\mathbb{L}^2$-Preserving) Mismatch:}
The isometry-based objective functional treats the signals as half-densities, utilizing the square root of the Jacobian to preserve the $\mathbb{L}^2$ norm:
$$
    \mathcal{O}_{3}(\psi) = \int_{0}^{1} (f(t) - g(\gamma_\psi(t))\sqrt{\gamma_\psi'(t)})^2 dt + \lambda \|\psi\|_{\mathbb{H}}^2
$$

\begin{theorem}[Existence of Isometry-Based Optimal Warping]
Let $f, g \in C^0(I)$. There exists a minimizer $\psi^* \in \mathbb{H}$ for the functional $\mathcal{O}_{3}(\psi)$.
\end{theorem}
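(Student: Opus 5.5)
We argue by the Direct Method in the Calculus of Variations, following the same template as the Standard and Symmetric cases. Since $\mathcal{O}_3 \ge 0$, its infimum $m = \inf_{\psi \in \mathbb{H}} \mathcal{O}_3(\psi)$ is finite, and $m \le \mathcal{O}_3(0) = \|f - g\|_{\mathbb{L}^2}^2 < \infty$ because $\Phi^{-1}(0) = \gamma_{id}$. Take a minimizing sequence $\psi_n$ with $\mathcal{O}_3(\psi_n) \to m$. Because the mismatch term is nonnegative, $\lambda \|\psi_n\|_{\mathbb{H}}^2 \le \mathcal{O}_3(\psi_n) \le m+1$ for all large $n$. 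The sequence is therefore bounded in $\mathbb{H}$. By Proposition~\ref{prop:bound} this also gives a uniform bound $\|\psi_n\|_\infty \le K := \sqrt{(m+1)/\lambda}$.

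Next we extract a convergent subsequence. Proposition~\ref{prop:hilbert} makes $\mathbb{H}$ a Hilbert space, so bounded sequences have weakly convergent subsequences, $\psi_n \rightharpoonup \psi^*$ in $\mathbb{H}$. For strong convergence in $C^1(I)$ we use compactness. Proposition~\ref{prop:bound} gives $\|\psi_n'\|_\infty \le \sqrt{2}\|\psi_n\|_{\mathbb{H}}$. Cauchy--Schwarz gives $|\psi_n'(t)-\psi_n'(s)| \le \|\psi_n''\|_{\mathbb{L}^2}|t-s|^{1/2}$, so the $\psi_n'$ are equicontinuous, and the $\psi_n$ are uniformly Lipschitz. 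Arzel\`a--Ascoli then yields, after passing to a further subsequence, $\psi_n \to \tilde\psi$ in $C^1(I)$. Testing weak convergence against point evaluations of $\psi$ and $\psi'$, which are continuous functionals on $\mathbb{H}$ by Proposition~\ref{prop:bound}, identifies $\tilde\psi = \psi^*$. The zero-mean constraint passes to the limit under uniform convergence. So $\psi^* \in \mathbb{H}$, and weak lower semicontinuity of the norm gives $\|\psi^*\|_{\mathbb{H}}^2 \le \liminf \|\psi_n\|_{\mathbb{H}}^2$.

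The step requiring care is continuity of the mismatch term, since it involves $\sqrt{\gamma_\psi'}$. Write $\gamma_n = \Phi^{-1}(\psi_n)$. Then $\gamma_n' = e^{\psi_n}/Z_n$ with $Z_n = \int_0^1 e^{\psi_n}$. The bound $\|\psi_n\|_\infty \le K$ gives $Z_n \in [e^{-K}, e^{K}]$ and $\gamma_n' \in [e^{-2K}, e^{2K}]$ uniformly in $n$, and the same bounds hold at $\psi^*$. Hence $\sqrt{\gamma_n'} = e^{\psi_n/2}/\sqrt{Z_n}$ converges uniformly to $\sqrt{\gamma_{\psi^*}'}$: $e^{x/2}$ is Lipschitz on $[-K,K]$, $Z_n \to Z^*$, and $Z_n$ is bounded away from zero.

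Integrating gives $\gamma_n \to \gamma_{\psi^*}$ uniformly. Since $g$ is uniformly continuous on the compact interval $I$, $g\circ\gamma_n \to g\circ\gamma_{\psi^*}$ uniformly. All factors are uniformly bounded ($|g| \le \|g\|_\infty$, $\sqrt{\gamma_n'} \le e^{K}$), so the integrand $(f - (g\circ\gamma_n)\sqrt{\gamma_n'})^2$ converges uniformly on $I$. The mismatch term is therefore continuous along the subsequence.

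Combining continuity of the mismatch with lower semicontinuity of the penalty gives $\mathcal{O}_3(\psi^*) \le \liminf \mathcal{O}_3(\psi_n) = m$. Hence $\psi^*$ is a minimizer. Unlike the symmetric case, no monotonicity of the weight in $\gamma'$ is needed: the uniform two-sided bound on $\gamma'$ from Proposition~\ref{prop:bound} already makes the square root harmless.
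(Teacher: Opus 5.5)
Your proposal is correct and follows essentially the same Direct-Method argument as the paper: coercivity from the penalty, weak compactness in $\mathbb{H}$, an upgrade to uniform convergence via the compact embedding into $C^1(I)$, uniform convergence of the integrand, and weak lower semicontinuity of $\|\cdot\|_{\mathbb{H}}^2$. The differences are only in the details: you prove the compact embedding by hand via Arzel\`a--Ascoli and Proposition~\ref{prop:bound}, and you handle $\sqrt{\gamma_n'}$ through the explicit two-sided bound $e^{-2K}\le\gamma_n'\le e^{2K}$, whereas the paper cites the Sobolev embedding and the uniform continuity of $\sqrt{\cdot}$ on $[0,\infty)$.
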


The existence proof for the Isometry case relies on the stability of the square-root Jacobian operator under the $C^1$ topology. Because the CLR representation is constrained within the Sobolev space $\mathbb{H}$, the resulting derivative $\gamma'$ is strictly bounded away from zero. This guarantees the uniform convergence of the $\sqrt{\gamma'}$ term during the variational limit. The detailed variational analysis is provided in \ref{app:existence_proof3}.

\item \textbf{Jacobian-Weighted $\mathbb{L}^2$ Mismatch:}
The objective functional for the Jacobian-weighted framework incorporates the square root of the warping derivative as a local scaling factor to maintain symmetry without assuming signal isometry:
$$
    \mathcal{O}_{4}(\psi) = \int_{0}^{1} (f(t) - g(\gamma_\psi(t)))^2 \sqrt{\gamma_\psi'(t)} \, dt + \lambda \|\psi\|_{\mathbb{H}}^2
$$

\begin{theorem}[Existence of Jacobian-Weighted Optimal Warping]
Let $f, g \in C^0(I)$. There exists a minimizer $\psi^* \in \mathbb{H}$ for the functional $\mathcal{O}_4(\psi)$.
\end{theorem}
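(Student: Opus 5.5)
We would follow the Direct Method in the Calculus of Variations, as for the previous three existence results. First, $\mathcal{O}_4$ is bounded below by $0$, because the integrand $(f-g\circ\gamma_\psi)^2\sqrt{\gamma_\psi'}$ is nonnegative ($\gamma_\psi'>0$). It is also finite at $\psi=0$, since $\mathcal{O}_4(0)=\int_0^1(f-g)^2\,dt<\infty$ for continuous $f,g$. Let $m=\inf_{\mathbb{H}}\mathcal{O}_4$ and choose a minimizing sequence $\psi_n$.

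Coercivity follows from the penalty: $\lambda\|\psi_n\|_{\mathbb{H}}^2\le \mathcal{O}_4(\psi_n)\le \mathcal{O}_4(0)+1$ eventually, so $\|\psi_n\|_{\mathbb{H}}\le R$ for a fixed $R$. Since $\mathbb{H}$ is a Hilbert space (Proposition \ref{prop:hilbert}), it is reflexive, and we may extract a subsequence with $\psi_n\rightharpoonup\psi^*$ weakly in $\mathbb{H}$. The zero-mean constraint is a continuous linear functional and therefore survives the weak limit, so $\psi^*\in\mathbb{H}$.

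The key step is to upgrade this weak convergence to uniform convergence of $\psi_n$. By Proposition \ref{prop:bound}, the functions $\psi_n$ are uniformly bounded, $\|\psi_n\|_\infty\le R$. Their derivatives are also uniformly bounded, $\|\psi_n'\|_\infty\le\sqrt2 R$, so the family is equicontinuous. Arzel\`a--Ascoli then gives $\psi_n\to\psi^*$ uniformly, after passing to a further subsequence and identifying the limit with the weak limit via pointwise evaluation, which is a continuous functional on $\mathbb{H}$. (One could also obtain uniform convergence of $\psi_n'$ from the $H^2$ bound and Morrey/Rellich, but only $C^0$ convergence is needed.)

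It follows that $e^{\psi_n}\to e^{\psi^*}$ uniformly, with $e^{-R}\le e^{\psi_n}\le e^{R}$. The normalizers $Z_n=\int_0^1 e^{\psi_n}$ converge to $Z^*\ge e^{-R}>0$. Hence $\gamma_n\to\gamma^*$ uniformly and $\gamma_n'=e^{\psi_n}/Z_n\to\gamma^{*\prime}$ uniformly, with both bounded in $[e^{-2R},e^{2R}]$. This two-sided bound keeps $\sqrt{\cdot}$ on a compact subinterval of $(0,\infty)$, where it is Lipschitz, so $\sqrt{\gamma_n'}\to\sqrt{\gamma^{*\prime}}$ uniformly. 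This is the one place where the lower bound from Proposition \ref{prop:bound} is essential, since $\sqrt{\cdot}$ is not Lipschitz at $0$; this is exactly the pinching regime that the penalty rules out.

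To pass to the limit in the data term, note that $g$ is uniformly continuous on $I$, so $g\circ\gamma_n\to g\circ\gamma^*$ uniformly. The integrand $(f-g\circ\gamma_n)^2\sqrt{\gamma_n'}$ is therefore a product of uniformly bounded, uniformly convergent functions and converges uniformly. Consequently the data term converges to its value at $\psi^*$; it is actually continuous, not merely lower semicontinuous. The penalty is a squared Hilbert norm, so weak lower semicontinuity gives $\|\psi^*\|_{\mathbb{H}}^2\le\liminf\|\psi_n\|_{\mathbb{H}}^2$. Combining the two, $\mathcal{O}_4(\psi^*)\le\liminf\mathcal{O}_4(\psi_n)=m$, so $\psi^*$ is a minimizer.

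The main obstacle is the compactness step: weak convergence in $\mathbb{H}$ must be turned into uniform convergence strong enough to control $\sqrt{\gamma'}$ and the composition $g\circ\gamma$. We handle it with the uniform bounds of Proposition \ref{prop:bound} together with Arzel\`a--Ascoli.
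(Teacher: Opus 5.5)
Your proof is correct and follows the paper's Direct-Method argument essentially step for step: a bounded minimizing sequence, a weak subsequential limit by reflexivity, an upgrade to uniform convergence, continuity of the data term, and weak lower semicontinuity of the penalty. The only real difference is how you get compactness: you use Proposition~\ref{prop:bound} together with Arzel\`a--Ascoli, correctly observing that $C^0$ convergence of $\psi_n$ suffices because $\gamma_\psi'=e^{\psi}/\int_0^1 e^{\psi}$ involves no derivative of $\psi$, whereas the paper cites the compact embedding $\mathbb{H}\hookrightarrow C^1(I)$. One small correction: the two-sided bound on $\gamma_n'$ is convenient but not essential for $\sqrt{\gamma_n'}\to\sqrt{\gamma^{*\prime}}$ uniformly, since $|\sqrt{a}-\sqrt{b}|\le\sqrt{|a-b|}$ on $[0,\infty)$ --- exactly the uniform-continuity argument the paper uses.
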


The existence of a minimizer for the Jacobian-weighted framework follows from the uniform convergence of the weighted residual. Unlike the Standard $\mathbb{L}^2$ approach, Method 4 weights the signal mismatch by the local stretching factor. As demonstrated in Section \ref{subsec:pinching}, this specific weighting makes the unregularized functional highly susceptible to extreme pinching. However, the regularity of the CLR representation within the Sobolev space $\mathbb{H}$ ensures that this integration weight remains strictly positive and bounded. This strictly prevents the artificial collapse of the objective cost and guarantees a well-defined global minimizer. The step-by-step variational proof is detailed in \ref{app:existence_proof4}.
\end{enumerate}

\section{Computational Methodology and Comparative Results}

In this section, we transition from the continuous geometric framework established in Section \ref{sec:sobolev} to its numerical realization. We evaluate the computational performance of the four registration formulations by projecting the continuous variational problem onto finite-dimensional subspaces of the Sobolev space $\mathbb{H}$.

\subsection{Numerical Implementation via Finite Basis Expansion}

To maintain consistency with our theoretical guarantees while achieving computational feasibility, we perform the optimization by projecting the continuous variational problem from the infinite-dimensional Sobolev space $\mathbb{H}$ onto a finite-dimensional subspace $\mathbb{H}_{d} \subset \mathbb{H}$. This projection allows us to conduct highly efficient, unconstrained optimization over a set of basis coefficients in a standard Euclidean parameter space ($\mathbb{R}^d$). Importantly, because the basis functions themselves are continuous, this approach strictly preserves the diffeomorphism properties of the warping manifold $\Gamma$ without requiring artificial discretization of the time domain.

\subsubsection{Parametrization of the CLR Field}

We begin by expressing the unconstrained CLR representation $\psi(t) \in \mathbb{H}$ as a linear combination of a chosen set of continuous, zero-mean basis functions $\{\phi_j(t)\}_{j=1}^{d}$ that span the subspace $\mathbb{H}_{d}$. To emphasize the dependence of the continuous field on the underlying coefficients, we denote this finite-dimensional approximation as $\psi_{\mathbf{c}}(t)$:
$$
    \psi_{\mathbf{c}}(t) = \sum_{j=1}^{d} c_j \phi_j(t) = \mathbf{c}^\top \boldsymbol{\phi}(t)
$$
where $\mathbf{c} = (c_1, \dots, c_d)^\top \in \mathbb{R}^{d}$ is the coefficient vector to be optimized, and $\boldsymbol{\phi}(t) = (\phi_1(t), \dots, \phi_d(t))^\top$ represents the vector-valued basis function.

While this finite-basis framework is highly generic and can be successfully implemented using various basis systems, such as Bernstein polynomials or truncated Fourier series, we primarily focus our methodological exposition on cubic B-spline bases \citep{deBoor2001}. Unlike global bases that span the entire domain and yield dense penalty matrices, B-splines possess strictly local support. This property inherently generates highly sparse, banded penalty matrices at higher dimensions, drastically reducing the computational complexity of the optimization.

For our later theoretical guarantees, we require the sequence of subspaces to be \textit{dense} in $\mathbb{H}$. That is, as the subspace dimension $d \to \infty$, any true CLR representation $\psi \in \mathbb{H}$ can be expressed as the limit of an approximating sequence $\tilde{\psi}_d \in \mathbb{H}_{d}$ such that $\|\tilde{\psi}_d - \psi\|_{\mathbb{H}} \to 0$.

The mapping from this finite-dimensional coefficient space to the warping manifold $\Gamma$ is performed via the inverse CLR transform $\Phi^{-1}$. The resulting warping function and its derivative are exact continuous functionals of the basis expansion $\psi_{\mathbf{c}}$:
$$
     \gamma_{\mathbf{c}}(t) = \Phi^{-1}(\psi_{\mathbf{c}})(t) = \frac{\int_0^t \exp(\psi_{\mathbf{c}}(s)) ds}{\int_0^1 \exp(\psi_{\mathbf{c}}(s)) ds}, \quad \gamma'_{\mathbf{c}}(t) = \frac{\exp(\psi_{\mathbf{c}}(t))}{\int_0^1 \exp(\psi_{\mathbf{c}}(s)) ds} 
$$
This formulation ensures that the boundary conditions $\gamma_{\mathbf{c}}(0)=0$ and $\gamma_{\mathbf{c}}(1)=1$ are satisfied by construction. Furthermore, because the basis functions are explicitly centered (ensuring $\psi_{\mathbf{c}}$ strictly maintains the zero-mean property of $\mathbb{H}$), the mapping between the finite-dimensional coefficient space $\mathbb{R}^d$ and the approximated warping manifold remains strictly bijective.

\subsubsection{Finite-Dimensional Objective and Sobolev Regularization}
\label{subsec:finite_obj}

The numerical registration is conducted by evaluating the infinite-dimensional functionals purely within the finite-dimensional subspace $\mathbb{H}_d$. For any of the formulations $i \in \{1,2,3,4\}$ summarized in Section \ref{subsec:framework_summary}, the objective function $J(\mathbf{c})$ evaluates the mismatch functional, defined as $\mathcal{D}_i(\psi) =  D_i(f, g, \Phi^{-1}(\psi))$, and the Sobolev norm at the continuous basis approximation $\psi_{\mathbf{c}}$:
\begin{equation}
    J(\mathbf{c}) = \mathcal{D}_i(\psi_{\mathbf{c}}) + \lambda \|\psi_{\mathbf{c}}\|_{\mathbb{H}}^2
    \label{eq:finite_obj}
\end{equation}

A major computational advantage of this representation is that the calculation of the Hilbert norm $\|\psi\|_{\mathbb{H}}^2$ (the roughness penalty) transforms from a functional integral into an exact quadratic form. Incorporating both the first and second derivative penalties defined by our Sobolev space, the objective becomes:
$$
    J(\mathbf{c}) = \mathcal{D}_i(\psi_{\mathbf{c}}) + \lambda \mathbf{c}^\top \mathbf{R} \mathbf{c}
$$
where the total stiffness matrix $\mathbf{R} \in \mathbb{R}^{d \times d}$ has constant entries defined by the chosen basis:
$$
R_{ij} = \int_0^1 \phi_i'(t) \phi_j'(t) dt + \int_0^1 \phi_i''(t) \phi_j''(t) dt, \quad i,j \in \{1, \dots, d\}
$$
This equality ensures that within the basis subspace, the regularization is evaluated without numerical integration error during the optimization loop, provided the entries of $\mathbf{R}$ are pre-computed exactly (e.g., via closed-form integrals or high-precision quadrature of the basis derivatives). 

From a computational perspective, this exact quadratic formulation provides crucial advantages for robust functional alignment. By utilizing the stiffness matrix $\mathbf{R}$ to penalize the continuous derivatives of the approximated field $\psi_{\mathbf{c}}$, the optimizer implicitly bounds the curvature of the resulting warping function. As discussed in Section \ref{subsec:pinching}, this Sobolev regularization is the primary mathematical defense against the geometric instability known as the ``pinching'' effect. Any emerging singularity in the warping function $\gamma_{\mathbf{c}}$ would require the derivatives of $\psi_{\mathbf{c}}$ to approach infinity, which is strictly penalized by the quadratic form $\lambda \mathbf{c}^\top \mathbf{R} \mathbf{c}$. Furthermore, evaluating this penalty as a simple, deterministic matrix-vector product avoids the severe computational costs associated with repeated numerical integration during each optimization step, ensuring a highly scalable and robust registration framework.

\subsection{Noise-Free Asymptotic Consistency}
\label{subsec:consis}

Having defined the objective functionals and their regularized finite-basis representations, we now formalize their theoretical validity. Before presenting the asymptotic properties of our framework, we clarify the scope of our analysis. Our consistency theory focuses exclusively on the M-estimator -- defined as the global minimizer of the finite-dimensional regularized objective $J(\mathbf{c})$ given in Equation~\eqref{eq:finite_obj} -- which serves as a restricted subspace approximation of the infinite-dimensional Sobolev space $\mathbb{H}$. Throughout this analysis, we assume continuous observation of the functional signals, meaning the matching functionals (e.g., $\mathcal{D}_i$) are evaluated continuously without discretization error. Consequently, our theoretical guarantees are intrinsic properties of the M-estimator itself, independent of the specific numerical integration schemes or optimization algorithms utilized in the subsequent computational implementation.

We consider a noise-free setting where there exists a true, infinite-dimensional ground-truth warping function $\gamma_0$ (with corresponding CLR representation $\psi_0 \in \mathbb{H}$) that perfectly aligns the two signals, such that $f(t) = g(\gamma_0(t))$. To rigorously bridge our finite-dimensional approximation with the infinite-dimensional theoretical ideal, our analysis proceeds in two distinct steps. First, we guarantee that a global minimizer actually exists for Equation~\eqref{eq:finite_obj} within our restricted $d$-dimensional subspace, ensuring our objective is well-posed. Second, we demonstrate that as the subspace dimension grows ($d \to \infty$) and the Sobolev smoothing parameter decays ($\lambda_d \to 0$) at appropriate rates, the finite-dimensional estimator converges strictly to the true underlying warping function $\gamma_0$ in the Sobolev norm.

\subsubsection{Existence of the Finite-Dimensional Minimizer}

Before establishing asymptotic convergence, we must guarantee that the finite-dimensional objective actually attains a minimum. Recall the regularized objective function introduced in Equation~\eqref{eq:finite_obj}. To formally analyze its behavior as the subspace expands, we explicitly index the objective and the smoothing parameter by the dimension $d$, expressing it as:
\begin{equation}
    J_d(\mathbf{c}) = \mathcal{D}_i(\psi_{\mathbf{c}}) + \lambda_d \|\psi_{\mathbf{c}}\|_{\mathbb{H}}^2
    \label{eq:finite_obj_d}
\end{equation}
Here, $\mathbf{c} \in \mathbb{R}^{d}$ represents the coefficients of the $d$-dimensional CLR representation $\psi_{\mathbf{c}}$, and $i \in \{1,2,3,4\}$ designates the chosen registration mismatch. By linking the penalty $\lambda_d$ to the dimension, we can properly analyze the regularization decay as the basis resolution increases. The existence of a global minimizer for this sequence of formulations is established in Lemma~\ref{lemma:exist} (see detailed proof in \ref{app:consistency_proof0}).

\begin{lemma}[Existence of the Minimum] 
For any fixed $d < \infty$, $\lambda_d > 0$, and $i \in \{1, \dots, 4\}$, there exists at least one global minimizer $\hat{\mathbf{c}}_d \in \mathbb{R}^{d}$ that minimizes $J_d(\mathbf{c})$ in Equation \eqref{eq:finite_obj_d}.
\label{lemma:exist}
\end{lemma}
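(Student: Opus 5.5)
The plan is the classical finite-dimensional Weierstrass argument: show that $J_d$ is continuous and coercive on $\mathbb{R}^d$. Then a sublevel set is compact, and $J_d$ attains its minimum on it.

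\textbf{Step 1: coercivity.} Since the basis functions $\{\phi_j\}$ are linearly independent, the map $\mathbf{c}\mapsto \psi_{\mathbf{c}}$ is an injective linear map from $\mathbb{R}^d$ into $\mathbb{H}$. Because $\|\cdot\|_{\mathbb{H}}$ is a genuine norm on $\mathbb{H}$ (Proposition~\ref{prop:hilbert}), $\|\psi_{\mathbf{c}}\|_{\mathbb{H}}$ is a norm on $\mathbb{R}^d$. All norms on $\mathbb{R}^d$ are equivalent, so the stiffness matrix $\mathbf{R}$ is positive definite, and
\[
\lambda_d\,\mathbf{c}^\top\mathbf{R}\mathbf{c}\ \ge\ \lambda_d\,\mu_{\min}(\mathbf{R})\,\|\mathbf{c}\|_2^2 .
\]
Each mismatch $\mathcal{D}_i$ is a nonnegative integral, so $J_d(\mathbf{c})\ge \lambda_d\mu_{\min}\|\mathbf{c}\|_2^2$. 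Set $M=J_d(\mathbf{0})$, which is finite because $\psi_{\mathbf{0}}=0$ gives $\gamma=\gamma_{id}$ and $f,g$ are continuous. Then the sublevel set $\{J_d\le M\}$ lies in the closed ball of radius $\sqrt{M/(\lambda_d\mu_{\min})}$.

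\textbf{Step 2: continuity.} This is the main point to check, especially for $i=2,3,4$, where $\gamma'$ enters the integrand. Suppose $\mathbf{c}_n\to\mathbf{c}$.
\begin{itemize}
\item Finite dimensionality gives $\|\psi_{\mathbf{c}_n}-\psi_{\mathbf{c}}\|_{\mathbb{H}}\to0$. Proposition~\ref{prop:bound} then gives $\psi_{\mathbf{c}_n}\to\psi_{\mathbf{c}}$ uniformly, with a uniform bound $\|\psi_{\mathbf{c}_n}\|_\infty\le K$.
\item Hence $e^{\psi_{\mathbf{c}_n}}\to e^{\psi_{\mathbf{c}}}$ uniformly, and these functions lie between $e^{-K}$ and $e^{K}$. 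It follows that the normalizing integrals converge and stay bounded away from zero.
\item Consequently $\gamma'_{\mathbf{c}_n}\to\gamma'_{\mathbf{c}}$ uniformly, with values in $[e^{-2K},e^{2K}]$, and $\gamma_{\mathbf{c}_n}\to\gamma_{\mathbf{c}}$ uniformly.
\item Since $g$ is uniformly continuous on $[0,1]$, $g\circ\gamma_{\mathbf{c}_n}\to g\circ\gamma_{\mathbf{c}}$ uniformly.
\item The function $x\mapsto\sqrt{x}$ is uniformly continuous on $[e^{-2K},e^{2K}]$, so $\sqrt{\gamma'_{\mathbf{c}_n}}\to\sqrt{\gamma'_{\mathbf{c}}}$ uniformly.
\end{itemize}
Every integrand in $D_1,\dots,D_4$ is then a product and sum of uniformly convergent, uniformly bounded continuous functions. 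It therefore converges uniformly on $[0,1]$, so $\mathcal{D}_i(\psi_{\mathbf{c}_n})\to\mathcal{D}_i(\psi_{\mathbf{c}})$. The penalty $\mathbf{c}^\top\mathbf{R}\mathbf{c}$ is a quadratic form and hence continuous. Thus $J_d$ is continuous on $\mathbb{R}^d$.

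\textbf{Step 3: conclusion.} The sublevel set $\{J_d\le M\}$ is closed, because $J_d$ is continuous. It is bounded by Step 1 and nonempty because it contains $\mathbf{0}$. By Heine--Borel it is compact, so $J_d$ attains its minimum there at some $\hat{\mathbf{c}}_d$. Any $\mathbf{c}$ outside the set has $J_d(\mathbf{c})>M\ge J_d(\hat{\mathbf{c}}_d)$, so $\hat{\mathbf{c}}_d$ is a global minimizer over $\mathbb{R}^d$.

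The only delicate ingredient is the continuity step for the Jacobian-dependent mismatches. That step reduces to the uniform bounds on $\psi$ supplied by Proposition~\ref{prop:bound}, which keep $\gamma'$ uniformly away from $0$ and $\infty$ along any convergent sequence.
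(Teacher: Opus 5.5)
Your proposal is correct and follows essentially the same route as the paper: continuity of $J_d$, coercivity from the positive-definite stiffness matrix $\mathbf{R}$ (giving $J_d(\mathbf{c})\ge\lambda_d\mu_{\min}\|\mathbf{c}\|_2^2$), and then Weierstrass on a compact sublevel set. Your continuity step is more careful than the paper's, which only argues continuity through $\gamma_{\mathbf{c}}$ and $g\circ\gamma_{\mathbf{c}}$; by using Proposition~\ref{prop:bound} to keep $\gamma'_{\mathbf{c}_n}$ in $[e^{-2K},e^{2K}]$, you properly handle the $\gamma'$ and $\sqrt{\gamma'}$ factors in $\mathcal{D}_2$--$\mathcal{D}_4$.
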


With the existence of the finite-dimensional minimizer mathematically guaranteed, we now establish that this estimator asymptotically converges to the true infinite-dimensional warping function as the subspace expands.

\subsubsection{Asymptotic Consistency in Method 1: Standard $\mathbb{L}^2$}

Let $\mathbb{H}_{d} \subset \mathbb{H}$ be the finite-dimensional subspace of dimension $d$ (spanned by $d$ basis functions). The standard $\mathbb{L}^2$ data mismatch functional $\mathcal{D}_1(\psi)$ evaluates the integrated squared error between the target signal and the warped source signal. For any given $\psi \in \mathbb{H}_d$ and $\gamma_\psi = \Phi^{-1}(\psi)$, this takes the explicit form:
$$
    \mathcal{D}_1(\psi) = \int_{0}^{1} (f(t) - g(\gamma_\psi(t)))^2 dt
$$
Using this mismatch term, the finite-dimensional estimator is then defined as the minimizer of the regularized objective over the subspace $\mathbb{H}_{d}$:
\begin{equation}
    \hat{\psi}_d = \arg \min_{\psi \in \mathbb{H}_{d}} \left\{ \mathcal{D}_1(\psi) + \lambda_d \|\psi\|_{\mathbb{H}}^2 \right\}
    \label{eq:con1}
\end{equation}

\begin{proposition}[Consistency of Method 1]
Assume $f$ is a reasonable function and $g = f \circ \gamma_0^{-1}$, where $\gamma_0 = \Phi^{-1}(\psi_0)$ is the true warping function. Let $\tilde{\psi}_d \in \mathbb{H}_{d}$ be an approximating sequence converging to $\psi_0$. As $d \to \infty$, if $\lambda_d \to 0$ such that $\mathcal{D}_1(\tilde{\psi}_d) = o(\lambda_d)$, then the finite-dimensional estimator $\hat{\psi}_d$ in Equation \eqref{eq:con1} converges to $\psi_0$ in the $\mathbb{H}$ norm.
\label{prop:consis1}
\end{proposition}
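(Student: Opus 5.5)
The plan is a standard comparison argument: bound the estimator using the approximating sequence, extract a convergent subsequence, identify its limit via Proposition~\ref{prop:uni}, and upgrade weak convergence to norm convergence.

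\emph{Step 1: comparison inequality.} Since $\hat{\psi}_d$ minimizes the objective in Equation~\eqref{eq:con1} over $\mathbb{H}_d$ and $\tilde{\psi}_d \in \mathbb{H}_d$, we have
\begin{equation*}
\mathcal{D}_1(\hat{\psi}_d) + \lambda_d \|\hat{\psi}_d\|_{\mathbb{H}}^2 \le \mathcal{D}_1(\tilde{\psi}_d) + \lambda_d \|\tilde{\psi}_d\|_{\mathbb{H}}^2 .
\end{equation*}
Minimizers exist by Lemma~\ref{lemma:exist}. Dividing by $\lambda_d$ and using $\mathcal{D}_1(\tilde{\psi}_d) = o(\lambda_d)$ together with $\mathcal{D}_1 \ge 0$ gives
\begin{equation*}
\|\hat{\psi}_d\|_{\mathbb{H}}^2 \le o(1) + \|\tilde{\psi}_d\|_{\mathbb{H}}^2 .
\end{equation*}
Since $\tilde{\psi}_d \to \psi_0$ in $\mathbb{H}$, this yields two facts. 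First, $\sup_d \|\hat{\psi}_d\|_{\mathbb{H}} < \infty$ and $\limsup_d \|\hat{\psi}_d\|_{\mathbb{H}} \le \|\psi_0\|_{\mathbb{H}}$. Second, the same inequality shows $\mathcal{D}_1(\hat{\psi}_d) \le \mathcal{D}_1(\tilde{\psi}_d) + \lambda_d\|\tilde{\psi}_d\|_{\mathbb{H}}^2 \to 0$.

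\emph{Step 2: compactness and passage to the limit.} Take an arbitrary subsequence of $(\hat{\psi}_d)$. Because $\mathbb{H}$ is a Hilbert space (Proposition~\ref{prop:hilbert}) and the sequence is bounded, there is a further subsequence converging weakly to some $\psi^* \in \mathbb{H}$; the zero-mean subspace is weakly closed. By the compact embedding of $\mathbb{H}$ into $C^1(I)$ (Arzel\`a--Ascoli, using the uniform bounds of Proposition~\ref{prop:bound}), the convergence is also uniform in $C^0$. Uniform convergence $\hat{\psi}_d \to \psi^*$ with uniformly bounded sup norms implies $\exp(\hat{\psi}_d) \to \exp(\psi^*)$ uniformly, so the normalizing constants converge and $\gamma_{\hat{\psi}_d} \to \gamma_{\psi^*}$ uniformly. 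Since $g$ is uniformly continuous on $I$, we get $g\circ\gamma_{\hat{\psi}_d} \to g\circ\gamma_{\psi^*}$ uniformly, hence $\mathcal{D}_1(\hat{\psi}_d) \to \mathcal{D}_1(\psi^*)$. Combined with Step 1, $\mathcal{D}_1(\psi^*) = 0$.

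\emph{Step 3: identification of the limit.} By continuity, $\mathcal{D}_1(\psi^*)=0$ gives $f = g\circ\gamma^*$ pointwise, where $\gamma^* = \gamma_{\psi^*}$. Substituting $g = f\circ\gamma_0^{-1}$ yields $f(t) = f(\eta(t))$ for all $t$, with $\eta = \gamma_0^{-1}\circ\gamma^*$. Both $\gamma_0$ and $\gamma^*$ lie in $\Gamma$: their derivatives are $e^{\psi}$ normalized, which is bounded above and below since $\psi$ is bounded by Proposition~\ref{prop:bound}. Hence $\eta \in \Gamma$. Proposition~\ref{prop:uni} then gives $\eta = \gamma_{id}$, so $\gamma^* = \gamma_0$, and injectivity of $\Phi$ gives $\psi^* = \psi_0$.

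\emph{Step 4: strong convergence.} Weak lower semicontinuity of the norm gives $\|\psi_0\|_{\mathbb{H}} \le \liminf \|\hat{\psi}_d\|_{\mathbb{H}}$ along the subsequence. Together with the limsup bound from Step 1, the norms converge to $\|\psi_0\|_{\mathbb{H}}$. In a Hilbert space, weak convergence plus convergence of norms implies norm convergence, via $\|\hat{\psi}_d - \psi_0\|^2 = \|\hat{\psi}_d\|^2 - 2\langle \hat{\psi}_d, \psi_0\rangle + \|\psi_0\|^2 \to 0$. Every subsequence therefore has a further subsequence converging in $\mathbb{H}$ to the same limit $\psi_0$, so the whole sequence converges.

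I expect Steps 3--4 to be the main obstacle. Step 3 is where the argument must check that the limit warp genuinely lies in $\Gamma$, so that Proposition~\ref{prop:uni} applies. Step 4 requires recognizing that the sharp bound $\limsup\|\hat{\psi}_d\|_{\mathbb{H}} \le \|\psi_0\|_{\mathbb{H}}$, which comes from the rate condition $\mathcal{D}_1(\tilde{\psi}_d)=o(\lambda_d)$, is exactly what upgrades weak convergence to convergence in the $\mathbb{H}$ norm.
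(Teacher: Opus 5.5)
Your proposal is correct and follows essentially the same route as the paper's proof: the comparison inequality against $\tilde{\psi}_d$, boundedness and weak compactness in $\mathbb{H}$, the compact embedding used to pass $\mathcal{D}_1$ to the limit, identification of the limit from reasonableness of $f$, and the Radon--Riesz upgrade from weak to norm convergence. Your Step~3 reduction to Proposition~\ref{prop:uni} via $\eta = \gamma_0^{-1}\circ\gamma^*$ (including the check that $\eta \in \Gamma$) and your subsequence-of-subsequence argument spell out details that the paper only asserts ("the global minimum is uniquely attained at $\gamma_0$").
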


The consistency relies on balancing the subspace approximation error with the penalty decay. The condition $\mathcal{D}_1(\tilde{\psi}_d) = o(\lambda_d)$ ensures that $\lambda_d$ vanishes slower than the approximation error, preventing overfitting and guaranteeing the sequence remains bounded in the Sobolev norm. Combined with the identifiability condition (i.e., $f$ is a \textit{reasonable} function), this allows us to rigorously establish the finite-basis estimator as an asymptotically exact recovery of the true diffeomorphism (see detailed proof in \ref{app:consistency_proof1}).

\subsubsection{Asymptotic Consistency in Method 2: Symmetric $\mathbb{L}^2$}

In the symmetric formulation, the data mismatch functional  $\mathcal{D}_2(\psi)$ accounts for both the forward and backward registration residuals:
$$
    \mathcal{D}_2(\psi) = \int_{0}^{1} (f(t) - g(\gamma_\psi(t)))^2 \left(\frac{1 + \gamma'_\psi(t)}{2}\right) dt
$$
Using this balanced mismatch, the finite-dimensional estimator for Method 2 is defined as the minimizer of the regularized objective over the subspace $\mathbb{H}_{d}$:
\begin{equation}
    \hat{\psi}_d = \arg \min_{\psi \in \mathbb{H}_{d}} \left\{ \mathcal{D}_2(\psi) + \lambda_d \|\psi\|_{\mathbb{H}}^2 \right\}
    \label{eq:con2}
\end{equation}

\begin{proposition}[Consistency of Method 2]
Assume $f$ and $g$ are reasonable functions such that $g = f \circ \gamma_0^{-1}$, where $\gamma_0 = \Phi^{-1}(\psi_0)$ is the true warping function. Let $\tilde{\psi}_d \in \mathbb{H}_{d}$ be an approximating sequence converging to $\psi_0$. As $d \to \infty$, if $\lambda_d \to 0$ such that the subspace approximation error satisfies $\mathcal{D}_2(\tilde{\psi}_d) = o(\lambda_d)$, then the symmetric finite-dimensional estimator $\hat{\psi}_d$ in Equation \eqref{eq:con2} converges to $\psi_0$ in the $\mathbb{H}$ norm.
\label{prop:consis2}
\end{proposition}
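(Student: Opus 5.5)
The argument follows the standard penalized M-estimation template: compare with the approximating sequence, extract a bounded subsequence, show every limit has zero mismatch, identify it via Proposition~\ref{prop:uni}, and upgrade weak convergence to norm convergence.

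\textbf{Step 1: a priori bounds.} By optimality of $\hat{\psi}_d$ over $\mathbb{H}_d$ and $\tilde{\psi}_d \in \mathbb{H}_d$,
$$\mathcal{D}_2(\hat{\psi}_d) + \lambda_d \|\hat{\psi}_d\|_{\mathbb{H}}^2 \le \mathcal{D}_2(\tilde{\psi}_d) + \lambda_d \|\tilde{\psi}_d\|_{\mathbb{H}}^2 .$$
Since $\mathcal{D}_2 \ge 0$ (the weight $(1+\gamma')/2$ is positive), dividing by $\lambda_d$ gives
$$\|\hat{\psi}_d\|_{\mathbb{H}}^2 \le o(1) + \|\tilde{\psi}_d\|_{\mathbb{H}}^2 \to \|\psi_0\|_{\mathbb{H}}^2 .$$
Hence $\limsup_d \|\hat{\psi}_d\|_{\mathbb{H}} \le \|\psi_0\|_{\mathbb{H}}$. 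The same inequality, together with boundedness of $\|\tilde{\psi}_d\|_{\mathbb{H}}$ and $\lambda_d \to 0$, gives $\mathcal{D}_2(\hat{\psi}_d) \to 0$. Because $(1+\gamma')/2 \ge 1/2$, this also yields $\mathcal{D}_1(\hat{\psi}_d) \le 2\,\mathcal{D}_2(\hat{\psi}_d) \to 0$. The symmetric weight therefore only helps here, and the problem reduces to the Method 1 setting.

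\textbf{Step 2: compactness and identification.} $\{\hat{\psi}_d\}$ is bounded in the Hilbert space $\mathbb{H}$ (Proposition~\ref{prop:hilbert}). Every subsequence therefore has a further subsequence converging weakly in $\mathbb{H}$ to some $\psi^*$. By the compact embedding $\mathbb{H} \hookrightarrow C^1(I)$ already used in the existence theorems, the convergence is uniform in $C^1$.

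Proposition~\ref{prop:bound} bounds $\|\hat{\psi}_d\|_\infty$ uniformly, so $e^{\hat{\psi}_d} \to e^{\psi^*}$ uniformly and the normalizing constants stay bounded away from $0$ and $\infty$. It follows that $\gamma_{\hat{\psi}_d} \to \gamma_{\psi^*}$ uniformly. Since $g$ is continuous on a compact set, it is uniformly continuous, so $g \circ \gamma_{\hat{\psi}_d} \to g\circ\gamma_{\psi^*}$ uniformly. Dominated convergence then gives $\mathcal{D}_1(\psi^*) = 0$. By continuity, $f(t) = g(\gamma_{\psi^*}(t)) = f(\gamma_0^{-1}(\gamma_{\psi^*}(t)))$ for all $t$.

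The map $\gamma_0^{-1}\circ\gamma_{\psi^*}$ lies in $\Gamma$: it is a composition of diffeomorphisms whose derivatives are bounded above and away from zero. Since $f$ is reasonable, Proposition~\ref{prop:uni} gives $\gamma_{\psi^*} = \gamma_0$. Injectivity of $\Phi^{-1}$ on zero-mean functions (the CLR isomorphism) then gives $\psi^* = \psi_0$. Every subsequence thus has a further subsequence converging weakly to the same limit, so the whole sequence satisfies $\hat{\psi}_d \rightharpoonup \psi_0$ in $\mathbb{H}$.

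\textbf{Step 3: strong convergence (the main obstacle).} Weak convergence by itself does not give convergence in the $\mathbb{H}$ norm. I would use the Hilbert-space (Radon–Riesz) argument. Weak lower semicontinuity of the norm gives $\|\psi_0\|_{\mathbb{H}} \le \liminf_d \|\hat{\psi}_d\|_{\mathbb{H}}$, and Step 1 gives $\limsup_d \|\hat{\psi}_d\|_{\mathbb{H}} \le \|\psi_0\|_{\mathbb{H}}$, so the norms converge. Expanding
$$\|\hat{\psi}_d-\psi_0\|_{\mathbb{H}}^2 = \|\hat{\psi}_d\|_{\mathbb{H}}^2 - 2\langle \hat{\psi}_d,\psi_0\rangle_{\mathbb{H}} + \|\psi_0\|_{\mathbb{H}}^2,$$
the first term tends to $\|\psi_0\|_{\mathbb{H}}^2$ and the inner product tends to $\|\psi_0\|_{\mathbb{H}}^2$ by weak convergence, so the right-hand side tends to $0$.

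The delicate point is exactly where the rate condition $\mathcal{D}_2(\tilde{\psi}_d) = o(\lambda_d)$ is used: it is what lets the norm bound in Step 1 pass to the limit with no $O(1)$ slack. That is why the statement imposes this rate and not merely $\lambda_d \to 0$.
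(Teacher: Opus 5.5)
Your proposal is correct and follows the same route as the paper: compare with $\tilde{\psi}_d$, obtain boundedness and the $\limsup$ norm bound, extract a weak limit using the compact embedding, identify that limit as $\psi_0$, and conclude via Radon--Riesz. The only real difference is the identification step. You use $\mathcal{D}_1 \le 2\mathcal{D}_2$ to reduce to the Method~1 case and then apply Proposition~\ref{prop:uni} to $\gamma_0^{-1}\circ\gamma_{\psi^*}$, whereas the paper changes variables to split $\mathcal{D}_2$ into forward and backward $\mathbb{L}^2$ residuals that must both vanish.
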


The consistency of this Symmetric $\mathbb{L}^2$ estimator is established by demonstrating that the symmetric mismatch functional $\mathcal{D}_2$ acts as a dual constraint. While Method 1 relies solely on the forward residual, the symmetric formulation, weighting the residual by the localized stretching factor $(1 + \gamma'_\psi(t))/2$, ensures that both the forward warping and its inverse converge simultaneously to their respective ground truths. This leads to an identical asymptotic limit as Method 1 but provides a more constrained and geometrically balanced optimization path, as detailed in the proof in \ref{app:consistency_proof2}.

\subsubsection{Asymptotic Bias in Method 3: Isometry ($\mathbb{L}^2$-Preserving)}

While the isometry-based formulation in Method 3 possesses desirable geometric properties, it is important to note that the resulting finite-dimensional estimator is \textit{inconsistent} under the standard pure-warping assumption $f = g \circ \gamma_0$. In this setting, the mismatch functional is:
$$
    \mathcal{D}_3(\psi) = \int_{0}^{1} \left(f(t) - g(\gamma_\psi(t))\sqrt{\gamma'_\psi(t)}\right)^2 dt
$$

To demonstrate this lack of consistency, we evaluate the asymptotic mismatch at the true underlying warping function $\gamma_0$. Substituting the true generative model $f(t) = g(\gamma_0(t))$ into the functional yields:
$$
    \mathcal{D}_3(\psi_0) = \int_{0}^{1} \left(g(\gamma_0(t)) - g(\gamma_0(t))\sqrt{\gamma'_0(t)}\right)^2 dt 
    = \int_{0}^{1} (g(\gamma_0(t)))^2 \left(1 - \sqrt{\gamma'_0(t)}\right)^2 dt
$$

For any non-identity true warping where $\gamma'_0(t) \neq 1$ on a set of strictly positive measure (assuming the signal $g$ is not trivially zero), the integrand is strictly positive, meaning $\mathcal{D}_3(\psi_0) > 0$. Because the asymptotic mismatch does not vanish at the ground truth $\psi_0$, the true parameter cannot be the global minimizer of the limit objective. Consequently, as the subspace dimension grows ($d \to \infty$) and the penalty decays ($\lambda_d \to 0$), the sequence of finite-dimensional estimators $\{\hat{\psi}_d\}$ cannot converge to the true warping $\psi_0$. 
Whether the sequence of estimators diverges or converges to some alternative, biased minimizer, the fundamental conclusion remains the same: the $\mathbb{L}^2$-preserving objective is formally inconsistent for pure registration tasks. 

\subsubsection{Asymptotic Consistency in Method 4: Jacobian-Weighted $\mathbb{L}^2$}

In this Jacobian-weighted formulation, the mismatch functional $\mathcal{D}_4(\psi)$ serves as a geometric modification to the standard $\mathbb{L}^2$ distance by incorporating the square root of the warping velocity as a local volume measure on the residual:
$$
    \mathcal{D}_4(\psi) = \int_{0}^{1} (f(t) - g(\gamma_\psi(t)))^2 \sqrt{\gamma'_\psi(t)} dt
$$
Using this modified mismatch, the finite-dimensional estimator for Method 4 is defined as the minimizer within the restricted subspace $\mathbb{H}_{d}$:
\begin{equation}
    \hat{\psi}_d = \arg \min_{\psi \in \mathbb{H}_{d}} \left\{ \mathcal{D}_4(\psi) + \lambda_d \|\psi\|_{\mathbb{H}}^2 \right\}
    \label{eq:con4}
\end{equation}

Unlike the isometry-preserving method, this Jacobian framework evaluates exactly to zero at the true warping $\gamma_0$, preserving the identifiability of the original signal profile. This property is formalized in the following proposition, and a detailed proof is provided in \ref{app:consistency_proof4}.

\begin{proposition}[Consistency of Method 4]
Assume $f$ and $g$ are reasonable functions such that the ground-truth warping satisfies $f = g \circ \gamma_0$ for some true parameter $\gamma_0 = \Phi^{-1}(\psi_0)$. Let $\tilde{\psi}_d \in \mathbb{H}_{d}$ be an approximating sequence converging to $\psi_0$. As $d \to \infty$, if $\lambda_d \to 0$ such that the subspace approximation error satisfies $\mathcal{D}_4(\tilde{\psi}_d) = o(\lambda_d)$, then the Jacobian-weighted finite-dimensional estimator $\hat{\psi}_d$ in Equation \eqref{eq:con4} converges to $\psi_0$ in the $\mathbb{H}$ norm.
\label{prop:consis4}
\end{proposition}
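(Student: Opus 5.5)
The argument follows the standard penalized M-estimation template also used for Propositions \ref{prop:consis1} and \ref{prop:consis2}. It has three stages: a basic inequality giving uniform boundedness, weak compactness in $\mathbb{H}$, and identification of the limit, followed by an upgrade from weak to strong convergence.

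\textbf{Basic inequality.} Let $\hat{\psi}_d$ be the minimizer from Lemma~\ref{lemma:exist}. Comparing its objective value with the competitor $\tilde{\psi}_d$ gives
\begin{equation*}
\mathcal{D}_4(\hat{\psi}_d) + \lambda_d \|\hat{\psi}_d\|_{\mathbb{H}}^2 \le \mathcal{D}_4(\tilde{\psi}_d) + \lambda_d \|\tilde{\psi}_d\|_{\mathbb{H}}^2 .
\end{equation*}
Dividing by $\lambda_d$ and using $\mathcal{D}_4(\tilde{\psi}_d)=o(\lambda_d)$ together with $\|\tilde{\psi}_d\|_{\mathbb{H}}\to\|\psi_0\|_{\mathbb{H}}$, we get
\begin{equation*}
\limsup_d \|\hat{\psi}_d\|_{\mathbb{H}}^2 \le \|\psi_0\|_{\mathbb{H}}^2 .
\end{equation*}
Because the right-hand side of the basic inequality tends to $0$, we also get $\mathcal{D}_4(\hat{\psi}_d)\to 0$.

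\textbf{Compactness.} The sequence $\hat{\psi}_d$ is bounded in $\mathbb{H}$. By Proposition~\ref{prop:bound}, $\|\hat{\psi}_d\|_\infty \le B$ uniformly, so $\gamma'_{\hat{\psi}_d}$ is uniformly bounded above and below by $e^{\pm 2B}$. Proposition~\ref{prop:hilbert} makes $\mathbb{H}$ reflexive, so every subsequence has a further subsequence converging weakly in $\mathbb{H}$ to some $\psi^\ast$. By the compact embedding $\mathbb{H}\hookrightarrow C^1(I)$ (Arzel\`a--Ascoli applied to the equicontinuous family $\hat\psi_d'$, bounded via Proposition~\ref{prop:bound}), the convergence $\hat{\psi}_d\to\psi^\ast$ also holds uniformly. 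It follows that $\gamma_{\hat{\psi}_d}\to\gamma^\ast=\Phi^{-1}(\psi^\ast)$ and $\gamma'_{\hat{\psi}_d}\to\gamma^{\ast\prime}$ uniformly. Since $g$ is continuous on a compact set, $g\circ\gamma_{\hat\psi_d}\to g\circ\gamma^\ast$ uniformly. Hence $\mathcal{D}_4(\hat{\psi}_d)\to\mathcal{D}_4(\psi^\ast)$, which gives $\mathcal{D}_4(\psi^\ast)=0$.

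\textbf{Identification of the limit.} Since $\sqrt{\gamma^{\ast\prime}}\ge e^{-B}>0$ and the integrand is continuous, $\mathcal{D}_4(\psi^\ast)=0$ forces $f=g\circ\gamma^\ast$ pointwise. Combined with $f=g\circ\gamma_0$, this gives $g\circ\gamma_0=g\circ\gamma^\ast$, i.e. $g(\tau)=g(\gamma^\ast\circ\gamma_0^{-1}(\tau))$. Now $\gamma^\ast\circ\gamma_0^{-1}\in\Gamma$, since its derivative is bounded away from $0$ and $\infty$, and $g$ is reasonable. Proposition~\ref{prop:uni} therefore yields $\gamma^\ast=\gamma_0$. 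Injectivity of $\Phi$ then gives $\psi^\ast=\psi_0$.

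\textbf{Weak to strong convergence.} This is the step I expect to be the main obstacle, because the compact embedding only gives $C^1$ convergence, not convergence of $\psi''$ in $\mathbb{L}^2$. The plan is to use the Hilbert structure. Weak lower semicontinuity of the norm gives $\|\psi_0\|_{\mathbb{H}}\le\liminf\|\hat{\psi}_d\|_{\mathbb{H}}$, and together with the $\limsup$ bound from the basic inequality this shows $\|\hat{\psi}_d\|_{\mathbb{H}}\to\|\psi_0\|_{\mathbb{H}}$. In a Hilbert space, weak convergence plus convergence of norms implies strong convergence, via
\begin{equation*}
\|\hat{\psi}_d-\psi_0\|^2=\|\hat{\psi}_d\|^2-2\langle\hat{\psi}_d,\psi_0\rangle+\|\psi_0\|^2\to 0 .
\end{equation*}
Every subsequence therefore has a further subsequence converging to the same limit $\psi_0$ in $\mathbb{H}$, so the whole sequence converges to $\psi_0$ in the $\mathbb{H}$ norm.

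One point to watch: the $\limsup$ step requires the $o(\lambda_d)$ hypothesis exactly. The positive lower bound on $\sqrt{\gamma'}$ is what separates Method 4 from the degenerate pinching of Section~\ref{subsec:pinching}, and that bound comes from the uniform $\mathbb{H}$-bound, not from the data term.
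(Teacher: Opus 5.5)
Your proposal is correct and follows the paper's proof essentially step for step. The paper uses the same sequence of moves: the basic inequality against $\tilde{\psi}_d$ (giving boundedness and the $\limsup$ norm bound), weak compactness with the compact embedding into $C^1(I)$, identification of the limit via strict positivity of $\sqrt{\gamma'}$, and the Radon--Riesz upgrade to norm convergence. Your version is more careful than the paper's, which asserts positivity only along $\hat{\psi}_d$ and does not explicitly pass to $\mathcal{D}_4(\psi^\ast)=0$ or reduce to Proposition~\ref{prop:uni} through $\gamma^\ast\circ\gamma_0^{-1}$ as you do.
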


\subsection{Unified Gradient Descent and Computational Algorithm}
\label{subsec:gradient_descent}

Having established the theoretical guarantees and asymptotic behavior of our finite-dimensional estimators across the four registration paradigms, we now detail the explicit computational algorithm used to iteratively solve for the optimal alignment. By restricting our search space to the finite-dimensional subspace $\mathbb{H}_d$, the infinite-dimensional continuous functional optimization is directly projected into a discrete parameter estimation problem.

\subsubsection{Numerical Optimization and Gradient Flow}

To minimize the objective functional $J(\mathbf{c})$, we must bridge variational calculus with finite-dimensional optimization. The sensitivity of the mismatch functional $\mathcal{D}_i(\psi)$ to a local perturbation $h \in \mathbb{H}$ is characterized by the Fréchet derivative $\frac{\delta \mathcal{D}_i}{\delta \psi}$. Formally, for a given framework, this derivative is the unique operator such that:
$$
    \mathcal{D}_i(\psi + h) = \mathcal{D}_i(\psi) + \int_{0}^{1} \frac{\delta \mathcal{D}_i}{\delta \psi}(t) h(t) dt + o(\|h\|_{\mathbb{H}})
$$

By projecting this functional gradient onto our finite-dimensional basis $\{\phi_j\}_{j=1}^{d}$, we transform the variational problem into a standard vector optimization.  The exact continuous gradient of the objective $J$ with respect to the basis coefficients $\mathbf{c} \in \mathbb{R}^{d}$ is given by the integral projection:
$$
    \nabla_{\mathbf{c}} J = \int_0^1 \frac{\delta \mathcal{D}_i}{\delta \psi}(t) \boldsymbol{\phi}(t) dt + 2\lambda \mathbf{R} \mathbf{c}
$$
where the first term represents the data-driven functional force and $2\lambda \mathbf{R} \mathbf{c}$ is the exact linear gradient of the Sobolev regularization.

In computational practice, the integral defining this data-driven gradient must be approximated using numerical quadrature over a finely sampled temporal grid $t \in \{t_1, t_2, \dots, t_N\}$. Evaluated at the continuous basis approximation $\psi_{\mathbf{c}}$, the computational gradient becomes a highly efficient matrix-vector operation:
$$
    \nabla_{\mathbf{c}} J \approx \mathbf{B}^\top \left( \frac{\delta \mathcal{D}_i}{\delta \psi} \Big|_{\psi = \psi_{\mathbf{c}}} \right) + 2\lambda \mathbf{R} \mathbf{c}
$$
where $\mathbf{B} \in \mathbb{R}^{N \times d}$ is the \textbf{Basis Evaluation Matrix} (or design matrix) that evaluates the $d$ continuous basis functions across the integration grid. Explicitly, $\mathbf{B}$ is constructed as:
$$
    \mathbf{B} = 
    \begin{bmatrix} 
    \phi_1(t_1) & \phi_2(t_1) & \dots & \phi_{d}(t_1) \\
    \phi_1(t_2) & \phi_2(t_2) & \dots & \phi_{d}(t_2) \\
    \vdots & \vdots & \ddots & \vdots \\
    \phi_1(t_N) & \phi_2(t_N) & \dots & \phi_{d}(t_N)
    \end{bmatrix}
$$

In this formulation, the transpose matrix $\mathbf{B}^\top$ acts as a numerical projection operator that computes the quadrature, ``pulling back'' the pointwise functional derivative from the integration grid into the lower-dimensional coefficient space $\mathbb{R}^d$. As defined in Section \ref{subsec:finite_obj}, the stiffness matrix $\mathbf{R}$ enforces the Sobolev roughness penalty exactly. This elegant matrix formulation ensures that the optimization remains computationally efficient while rigorously preserving the underlying continuous functional geometry. 

To optimize performance while maintaining this efficiency, we implement a first-order update rule. At each iteration $k$, the coefficients are updated as follows:
$$
    \mathbf{c}_{k+1} = \mathbf{c}_k - \alpha_k \nabla_{\mathbf{c}} J(\mathbf{c}_k)
$$
The step size $\alpha_k$ follows a predefined schedule that avoids the overhead of internal search loops. This approach efficiently mirrors the geometric ``flow'' on the warping manifold. The stability of the descent is further bolstered by the Sobolev penalty gradient $2\lambda \mathbf{R} \mathbf{c}$, which smooths the objective functional and prevents the optimizer from encountering the singular regions of high curvature typically associated with the pinching effect in standard unregularized registration.

\subsubsection{Functional Derivatives for the Four Frameworks}

The numerical optimization of the warping function relies on the exact computation of the continuous gradient with respect to the CLR representation. Based on the rigorous adjoint derivations provided in \ref{app:derivatives}, we summarize the final functional derivatives for each of the four registration frameworks below.

Given $\gamma = \Phi^{-1}(\psi)$, let $r(t) = f(t) - g(\gamma(t))$ represent the forward alignment residual. The Fr\'{e}chet derivatives $\frac{\delta \mathcal{D}_i}{\delta \psi}$, serving as the kernels for the finite-dimensional gradient assembly, are expressed using the centered adjoint kernel $\mathcal{K}_i(t)$ as follows:
$$
    \frac{\delta \mathcal{D}_i}{\delta \psi}(t) = \gamma'(t) \left( \mathcal{K}_i(t) - \int_{0}^{1} \mathcal{K}_i(s) \gamma'(s) ds \right)
$$

The specific adjoint kernels for each framework, which aggregate positional and Jacobian-based forces, are defined below:

\begin{enumerate}
    \item \textbf{Standard $\mathbb{L}^2$ (Method 1):}
    The kernel is purely positional, representing the cumulative impact of a local perturbation on the future alignment of the signal peaks.
    $$
        \mathcal{K}_1(t) = \int_{t}^{1} -2 r(s) g'(\gamma(s)) ds
    $$
    \item \textbf{Symmetric $\mathbb{L}^2$ (Method 2):}
    The kernel balances the positional forces of the forward and inverse warping errors with an instantaneous pressure term derived from the change-of-variables Jacobian.
    $$
        \mathcal{K}_2(t) = \int_{t}^{1} \left[ - r(s) g'(\gamma(s)) (1 + \gamma'(s)) \right] ds + r(t)^2
    $$

    \item \textbf{Isometry (Method 3):}
    The kernel incorporates the density-preserving residual $f(t) - g(\gamma(t))\sqrt{\gamma'(t)}$. Here, the Jacobian acts as a local amplitude scaler.
    $$
        \mathcal{K}_3(t) = \int_{t}^{1} -2  (f(s) - g(\gamma(s))\sqrt{\gamma'(s)}) g'(\gamma(s)) \sqrt{\gamma'(s)} ds - \frac{ f(t) - g(\gamma(t))\sqrt{\gamma'(t)} g(\gamma(t))}{\sqrt{\gamma'(t)}}
    $$

    \item \textbf{Jacobian-Weighted $\mathbb{L}^2$ (Method 4):}
    Utilizing the square-root of the Jacobian as a weighting measure, this kernel modulates the penalty of the Jacobian residual while weighting the positional force by the local square-root velocity.
    $$
        \mathcal{K}_4(t) = \int_{t}^{1} -2 r(s) g'(\gamma(s)) \sqrt{\gamma'(s)} ds + \frac{r(t)^2}{2\sqrt{\gamma'(t)}}
    $$
\end{enumerate}

\subsubsection{Computational Algorithm and Its Complexity}

The registration process is unified under a single algorithmic framework, detailed in Algorithm~\ref{alg:unified_reg}. The modularity of the continuous Fr\'{e}chet derivatives, encapsulated in the adjoint kernels $\mathcal{K}_i(t)$, allows for an identical optimization loop across all four methods, with only the specific metric choice altering Step 3.

\begin{algorithm}[H]
{\small
\caption{Unified Finite-Basis Sobolev Registration}
\label{alg:unified_reg}
\begin{algorithmic}[1]
\STATE \textbf{Initialize:} Coefficient vector $\mathbf{c} = \mathbf{0} \in \mathbb{R}^{d}$, learning rate $\alpha_k$ \hfill $\triangleright \; O(d)$
\STATE \textbf{Precompute:} Sobolev stiffness matrix $\mathbf{R}$ (banded) and Basis Matrix $\mathbf{B}$ \hfill $\triangleright \; O(d + N \cdot d)$
\WHILE{not converged}
    \STATE \textbf{1. Generate Warp:} Compute the CLR representation $\psi_{\mathbf{c}}(t) = \sum_{j=1}^{d} c_j \phi_j(t)$ \hfill $\triangleright \; O(N \cdot d)$
    \STATE $\gamma'(t) \gets \exp(\psi_{\mathbf{c}}(t)) / \int_0^1 \exp(\psi_{\mathbf{c}}(s)) ds$ \hfill $\triangleright \; O(N)$
    \STATE $\gamma(t) \gets \int_0^t \gamma'(s) ds$ \hfill $\triangleright \; O(N)$
    \STATE \textbf{2. Interpolation:} Evaluate $g(\gamma(t))$ and $g'(\gamma(t))$ via cubic splines on quadrature grid $N$ \hfill $\triangleright \; O(N)$
    \STATE \textbf{3. Gradient Assembly:} 
    \STATE Compute adjoint kernel $\mathcal{K}_i(t)$ (Case-specific via numerical integration) \hfill $\triangleright \; O(N)$
    \STATE $\mathbf{G}(t) \gets \gamma'(t) \left( \mathcal{K}_i(t) - \int_0^1 \mathcal{K}_i(s)\gamma'(s) ds \right)$ \hfill $\triangleright \; O(N)$
    \STATE \textbf{4. Subspace Projection:}
    \STATE $\nabla_{\mathbf{c}} J \gets \mathbf{B}^\top \mathbf{G} + 2\lambda \mathbf{R} \mathbf{c}$ \hfill $\triangleright \; O(N \cdot d)$
    \STATE \textbf{5. Step Update:} 
    \STATE $\mathbf{c} \gets \mathbf{c} - \alpha_k \nabla_{\mathbf{c}} J$ \hfill $\triangleright \; O(d)$
\ENDWHILE
\RETURN Warping function $\gamma(t)$ and optimal coefficients $\mathbf{\hat{c}}$
\end{algorithmic}
}
\end{algorithm}

A significant advantage of the proposed algorithm is the attainment of \textit{computational parity} across diverse geometric constraints. Our kernel formulations successfully unify the per-iteration cost across all four methods. As annotated step-by-step in Algorithm~\ref{alg:unified_reg}, the complexity of each operation is determined entirely by the interplay between the numerical quadrature grid size $N$ and the subspace dimension $d$, resulting in the overall costs summarized in Table~\ref{tab:complexity}.

\begin{table}[ht]
\centering
\caption{Computational Complexity of Functional Registration Components}
\label{tab:complexity}
\small
\renewcommand{\arraystretch}{1.5}
\begin{tabular}{|l|c|}
\hline
\textbf{Component} & \textbf{Complexity Order} \\ \hline
\textbf{Mismatch Term (Methods 1--4)} & $O(N \cdot d)$ \\ \hline
\textbf{Sobolev Penalty} & $O(d)$ \\ \hline
\end{tabular}
\end{table}

Based on this result, we have the following observations:
\begin{itemize}
    \item \textit{Overall Linear Scaling:} With the Sobolev penalty requiring only a linear $O(d)$ operation, the total per-iteration complexity across all four frameworks scales strictly as $O(N \cdot d)$. Since $d \ll N$ (e.g., $d = 40$ vs. $N = 1000$), the computational cost is overwhelmingly dominated by the numerical projection of the functional gradient onto the basis subspace. This strict linear scaling ensures that high-resolution signals can be processed efficiently, entirely circumventing the cubic or quadratic computational blow-up typically associated with traditional non-parametric registration methods or dynamic programming.

    \item \textit{Decoupled and Sparse Regularization:} The Sobolev penalty $\|\psi_{\mathbf{c}}\|_{\mathbb{H}}^2$ is computed entirely within the compressed coefficient space. Because the B-spline basis functions possess local support, the pre-computed stiffness matrix $\mathbf{R}$ is strictly banded. Consequently, the regularization step operates in $O(d)$ time and remains completely independent of the grid density $N$. This decoupling allows for extremely dense sampling of the signal without increasing the cost of enforcing smoothness on the warping field.
%
\end{itemize}

\subsection{Numerical Experiments and Comparative Analysis}

Following the development of the Sobolev-regularized registration framework, we evaluate the four proposed mismatch functionals to assess their structural robustness and ability to decouple phase from amplitude variability. 

To simulate a relatively challenging registration scenario, we design an ``inverse seesaw'' challenge characterized by topological conflict.  We define a discrete time domain $t \in [0, 1]$ with N = 1000 points. The source signal $g(t)$ is constructed as a mixture of a short-wide Gaussian and a tall-narrow Gaussian:
$$
    g(t) = 0.6 \exp\left(-\frac{(t-0.3)^2}{2(0.10)^2}\right) + 1.5 \exp\left(-\frac{(t-0.7)^2}{2(0.04)^2}\right)
$$
The target signal $f(t)$ is generated by taking a latent signal with inverse structural properties (a tall-narrow peak and a short-wide peak) and applying a latent ground-truth warping function $\gamma_{gt}(t)$:
$$
    f(t) = \left[ 1.4 \exp\left(-\frac{(t-0.3)^2}{2(0.04)^2}\right) + 0.5 \exp\left(-\frac{(t-0.7)^2}{2(0.10)^2}\right) \right] \circ \gamma_{gt}(t)
$$
where the warping function $\gamma_{gt}$ introduces significant non-linear phase distortion:
$$
    \gamma_{gt}(t) = t + 0.22\sin(\pi t)
$$

This configuration ensures that $f(t)$ and $g \circ \gamma(t)$ have fundamentally different local $\mathbb{L}^2$ norms. Finally, additive white Gaussian noise $\epsilon \sim \mathcal{N}(0, 0.04^2)$ is added to both signals. This setup forces the optimizer to distinguish between legitimate temporal displacement (phase) and structural height/width differences (amplitude). 

\begin{figure}[htbp]
\centering
\includegraphics[width=0.9\textwidth]{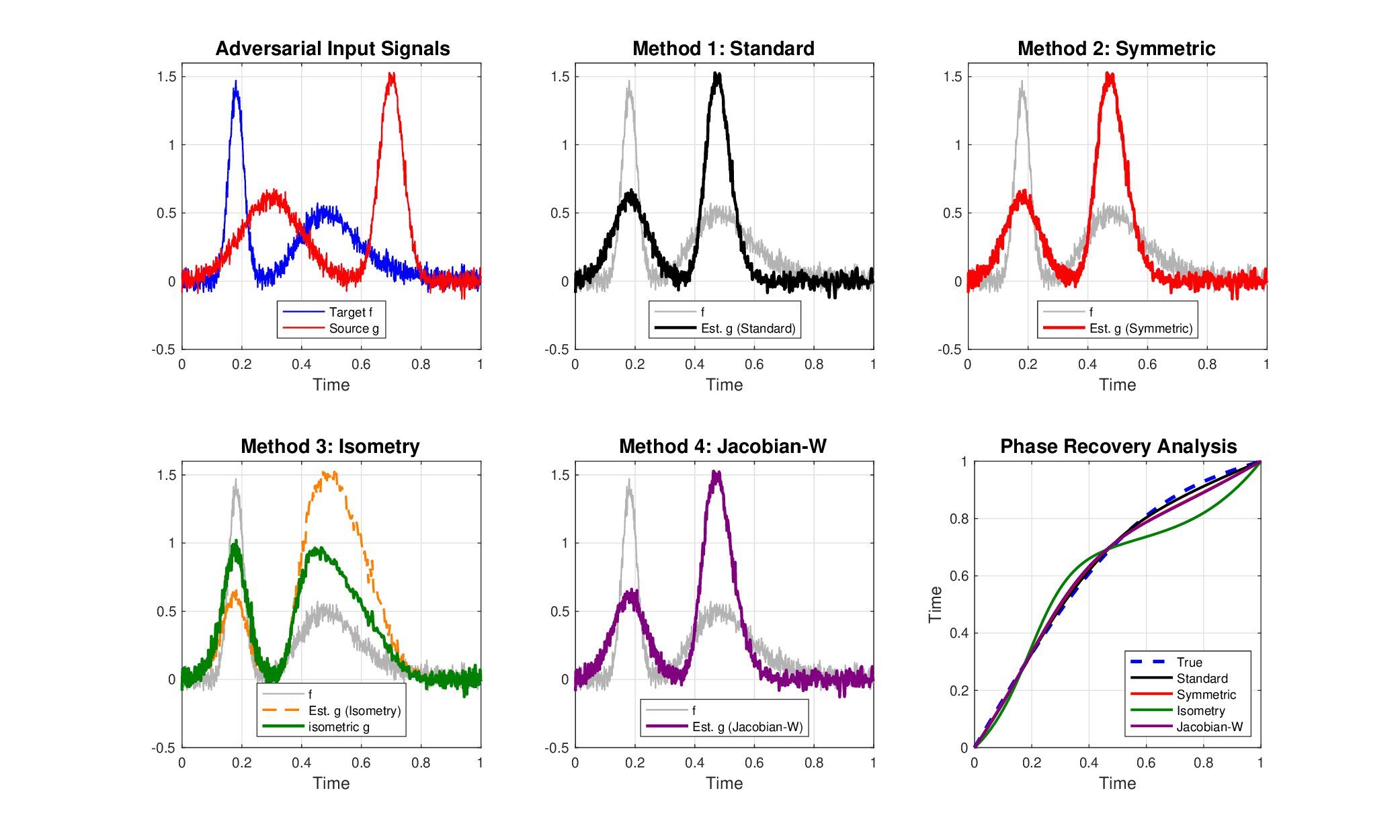}
\caption{Comparative analysis of registration objectives. The top-left panel shows the adversarial noisy inputs with inverted amplitudes. The top-middle to bottom-middle panels show the registered source signals for Methods 1-4, respectively. The bottom-right panel reveals that Method 1 (black) tracks the ground-truth $\gamma_{gt}$ (blue dashed) most accurately. Methods 2 (red) and 4 (purple) also demonstrate high fidelity. In contrast, Method 3 (green) shows the largest deviation.}
\label{fig:simu_results}
\end{figure}


We utilize a finite basis of $d=40$ cubic B-splines with a strictly defined zero-mean Sobolev regularization weight of $\lambda = 1 \times 10^{-5}$ and a constant learning rate $\alpha = 0.05$.
The primary finding of this experiment, illustrated in Figure \ref{fig:simu_results} and summarized in Table \ref{tab:comparison}, is that Methods 1, 2, and 4 all provide accurate phase recovery, while Method 3 exhibits a significant structural bias.

As shown in the Phase Recovery Analysis (Figure \ref{fig:simu_results}, bottom-right panel), Method 1 (black line) remains visually most faithful to the ground-truth $\gamma_{gt}$.  This suggests that when supported by the full $\mathbb{H}$-norm penalty, the standard $\mathbb{L}^2$ framework is highly effective at ignoring adversarial amplitude differences. The penalty effectively pre-empts the pinching singularities, allowing the optimizer to focus on structural alignment without the need for additional geometric weighting. Methods 2 (red) and 4 (purple) also perform exceptionally well, tracking the ground truth closely throughout the time domain. 

In the results for Method 3 (bottom-left panel), we display both the warped source $g \circ \gamma$ and its isometric transformation $(g \circ \gamma)\sqrt{\dot{\gamma}}$. It is highly instructive to observe how the peak widths and heights are modified to match the target. By utilizing the $\sqrt{\dot{\gamma}}$ term, Method 3 easily scales the signal's vertical range, increasing the height of the short peak by locally compressing its width. Because Method 3 can compensate for height via the derivative term, it exploits the warping function to minimize vertical residuals. While this achieves an impressive visual fit, the phase recovery analysis confirms that it introduces a significant structural bias.

\begin{table}[htbp]
\centering
\caption{Quantitative Performance Summary of Registration Objectives}
\label{tab:comparison}
\begin{tabular}{lcccc}
\hline
\textbf{Mismatch Term} & \textbf{1. Standard} & \textbf{2. Symmetric} & \textbf{3. Isometry} & \textbf{4. Jacobian-W} \\ \hline
\textbf{Phase Recovery} & High & High & Low (Biased) & High \\
\textbf{CLR Dist ($L^2$)} & 0.1281 & 0.2490 & 0.6911 & 0.2660 \\
\textbf{$\mathbb{H}$-Norm Dist} & 23.08 & 20.30 & 95.69 & 24.99 \\ \hline
\end{tabular}
\end{table}

To rigorously confirm these observations, we calculate the quantitative distances between the estimated warping functions and the ground truth. We evaluate the standard $\mathbb L^2$ distance in the CLR space, as well as the strict theoretical $\mathbb{H}$-norm distance, which measures the deviations in the first and second derivatives.  Table \ref{tab:comparison} presents these metrics. The quantitative results demonstrate that Methods 1, 2, and 4 achieve low CLR distances, validating their consistency. The Symmetric formulation (Method 2) achieves the lowest $\mathbb{H}$-norm distance (20.30), indicating superior smoothness and derivative fidelity, while the Standard formulation (Method 1) achieves the absolute tightest CLR distance (0.1281). Method 3, however, exhibits a massive $\mathbb{H}$-norm error (95.69) and a severe CLR divergence (0.6911), quantitatively confirming that the isometry-preserving mismatch alters the true phase.

\subsection{Consistency Analysis: Robust Phase Recovery Under Additive Noise}

Following the asymptotic theory presented in Section \ref{subsec:consis}, we provide a numerical demonstration of phase recovery under adversarial conditions. To examine the robustness of the estimation process, we introduce significant additive noise and a global intensity mismatch. 

\begin{figure}[htbp]
    \centering
    \includegraphics[width=0.9\textwidth]{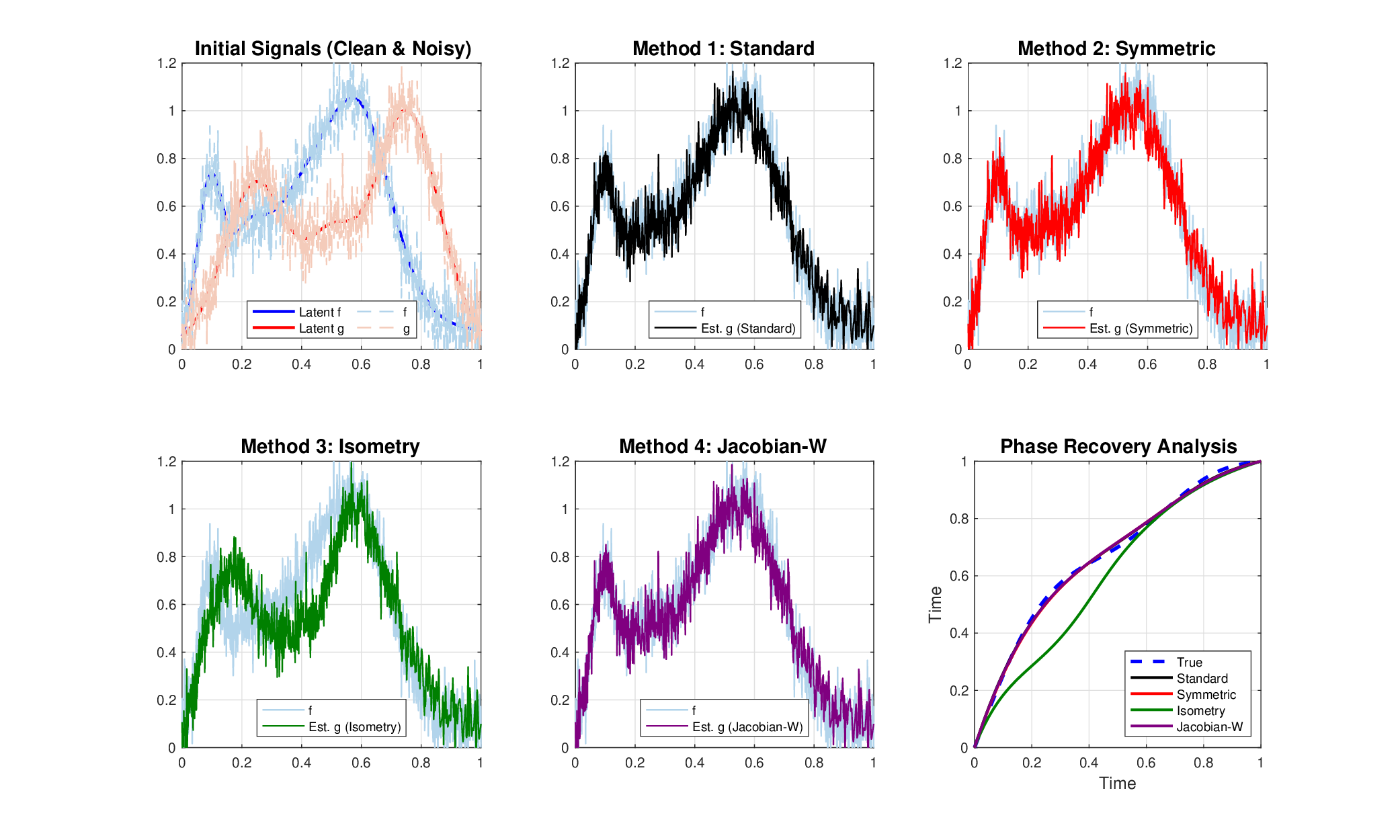}
    \caption{Consistency analysis and robustness test. Top-left: Initial latent and observed signals ($f$ in blue, $g$ in red). Top-middle to bottom-middle: Alignment results for Methods 1-4, respectively. Bottom-right: Phase recovery analysis showing the recovered warping functions against the true $\gamma_{gt}$ (dashed blue). Methods 1 (black), 2 (red), and 4 (purple) track the ground truth with high fidelity, while Method 3 (green) exhibits a visible phase bias.}
    \label{fig:consistency_analysis}
\end{figure}

\textit{Experimental Design:} The latent source signal $g_{latent}(t)$ is modeled as a mixture of three overlapping Gaussian kernels:
$$
    g_{latent}(t) = \sum_{i=1}^3 A_i \exp\left(-\frac{(t-\mu_i)^2}{2\sigma_i^2}\right)
$$
with amplitudes $\mathbf{A} = [0.7, 0.4, 1.0]$, centers $\boldsymbol{\mu} = [0.25, 0.5, 0.75]$, and scales $\boldsymbol{\sigma} = [0.11, 0.08, 0.11]$. The latent target $f_{latent}(t)$ is generated via a ground-truth warping $\gamma_{gt}(t)$ and a 5\% intensity scaling for display purposes:
$$
    f_{latent}(t) = 1.05 \cdot (g_{latent} \circ \gamma_{gt})(t)
$$
where the warping function is defined by the harmonic series:
$$
    \gamma_{gt}(t) = t + 0.25\sin(\pi t) + 0.06\sin(2\pi t) + 0.05\sin(3\pi t)
$$
To evaluate robustness, the final observed signals $f(t)$ and $g(t)$ are corrupted with independent additive white Gaussian noise $\epsilon_f(t), \epsilon_g(t) \sim \mathcal{N}(0, 0.08^2)$:
$$
    f(t) = f_{latent}(t) + \epsilon_f(t), \quad g(t) = g_{latent}(t) + \epsilon_g(t)
$$
We utilize a finite basis of $d=20$ cubic B-splines with a strictly defined zero-mean Sobolev regularization weight of $\lambda = 4 \times 10^{-6}$ and a learning rate of $\alpha = 0.8$.

The numerical results, illustrated in Figure \ref{fig:consistency_analysis}, support the theoretical prediction that standard, symmetric, and Jacobian-weighted objectives are robust to noise, whereas the isometric metric suffers from structural bias. To rigorously quantify this consistency, we measure the recovery error in the CLR space, calculating both the standard $\mathbb{L}^2$ distance between the estimated and true centered log-derivatives and the strict $\mathbb{H}$-norm distance, which evaluates the structural fidelity of the first and second derivatives.

As evidenced by the quantitative results in Table \ref{tab:nrr_stats}, Methods 1, 2, and 4 yield strong phase recovery, but with an important theoretical distinction. Method 1 achieves the lowest pointwise CLR distance (0.2238). However, when evaluated under the strict $\mathbb{H}$-norm, which focuses on the continuous differentiability and smoothness of the warp, the Symmetric formulation (Method 2) and Jacobian-Weighted formulation (Method 4) achieve strictly superior structural fidelity (76.34 and 76.77, respectively, compared to Method 1's 82.06). This indicates that despite the intensity mismatch and high noise level, the geometrically aware frameworks effectively identify a more topologically consistent and smoother underlying velocity field. 
In contrast, Method 3 serves as a counter-example, producing a CLR distance (0.5202) and $\mathbb{H}$-norm distance (115.41) significantly higher than the other frameworks.

\begin{table}[htbp]
\centering
\caption{Consistency Statistics: Quantitative Recovery Error}
\label{tab:nrr_stats}
\small
\begin{tabular}{lcccc}
\hline
\textbf{Mismatch Term}  & \textbf{1. Standard} & \textbf{2. Symmetric} & \textbf{3. Isometry} & \textbf{4. Jacobian-W} \\ \hline
\textbf{Theoretical Consistency} & Satisfactory & Satisfactory & Biased & Satisfactory \\
\textbf{CLR Dist ($L^2$)}  & 0.2238 & 0.2463 & 0.5202 & 0.2461 \\
\textbf{$\mathbb{H}$-Norm Dist} & 82.06 & 76.34 & 115.41 & 76.77 \\ \hline
\end{tabular}
\end{table}

\subsection{Real Acoustic Data: Free Spoken Digit Dataset}
\label{sec:acoustic_data}

To evaluate the performance of the four Sobolev-regularized registration methods on real-world signals, we utilize the \textit{Free Spoken Digit Dataset} (FSDD) \citep{fsdd_github}. Often referred to as the ``audio version of MNIST,'' FSDD is an open-source repository consisting of recordings of spoken digits in \texttt{.wav} format, sampled at 8 kHz. The dataset contains 3000 recordings from six speakers, with 50 recordings per digit per speaker. For illustrative purposes in our pairwise registration task, we select two specific samples representing the digit ``0'':
\begin{itemize}
    \item \textbf{Target Signal ($f$):} \texttt{0\_george\_0.wav} (Speaker: George)
    \item \textbf{Source Signal ($g$):} \texttt{0\_jackson\_0.wav} (Speaker: Jackson)
\end{itemize}

These recordings were chosen because they exhibit natural non-linear temporal warping; George and Jackson possess distinct speaking rates and phonetic emphasis. The signals are pre-processed to extract their temporal envelopes via an RMS-based moving window, ensuring that the registration captures the structural alignment of the utterance rather than the high-frequency oscillation of the vocal carrier. 
The resulting envelopes are normalized to $[0, 1]$ and resampled to a uniform quadrature grid of $N=1000$ points for optimization.

\begin{figure}[htbp]
    \centering
    \includegraphics[width=0.8\textwidth]{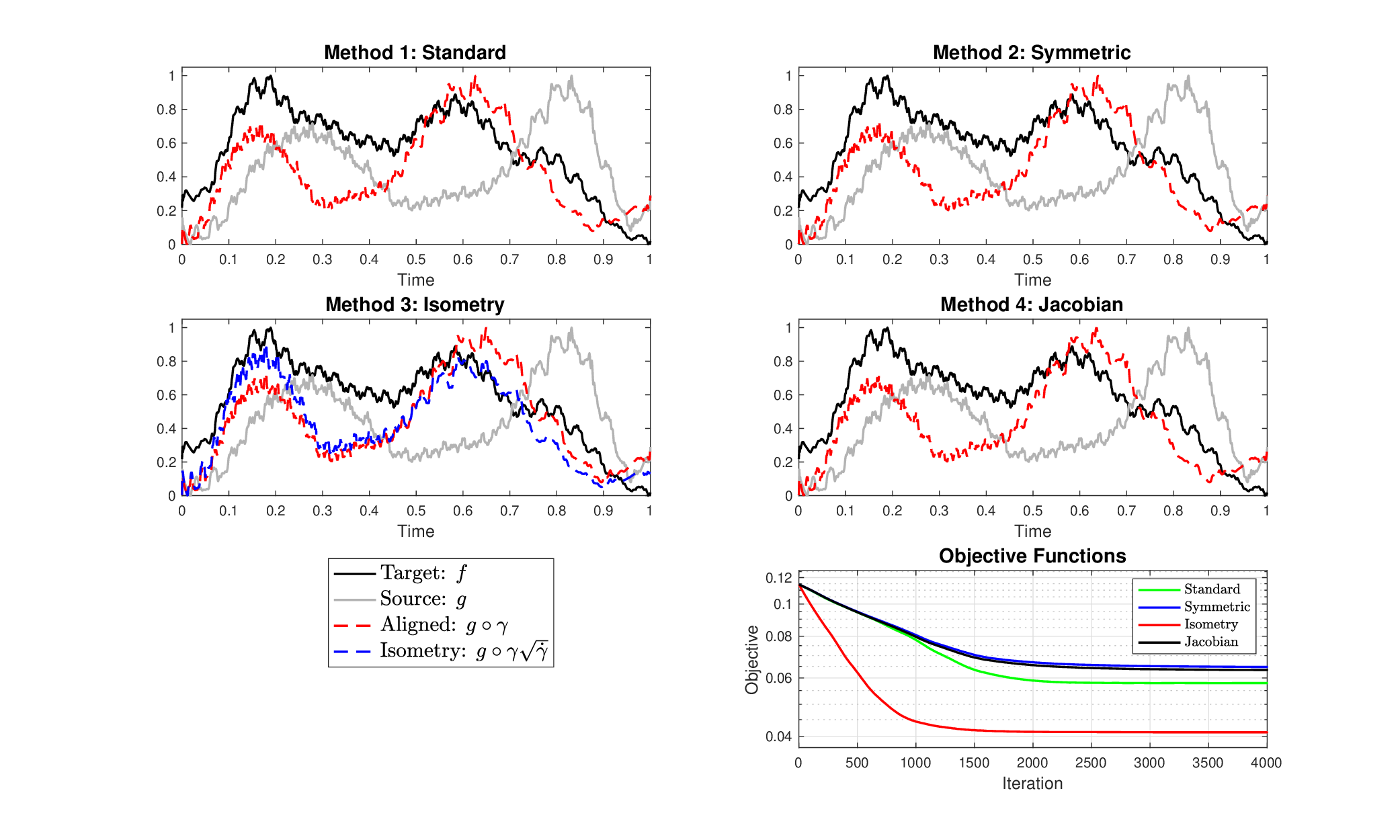}
\caption{Pairwise registration results for real acoustic data. The top row displays the alignment results for Method 1 (Standard $\mathbb{L}^2$) and Method 2 (Symmetric $\mathbb{L}^2$), while the middle row presents Method 3 (Isometry) and Method 4 (Jacobian-Weighted). A common legend is provided in the bottom-left panel, identifying the target signal $f$ (solid black), the source signal $g$ (solid gray), and the aligned signal $g \circ \gamma$ (dashed red). For Method 3, the isometrically aligned signal $(g \circ \gamma)\sqrt{\gamma'}$ is additionally displayed (dashed blue). Finally, the bottom-right panel illustrates the objective function convergence for all four methods to demonstrate numerical stability.}

    \label{fig:real_acoustic_registration}
\end{figure}

To ensure a fair comparison across the four methods, we employ a consistent set of computational parameters: a finite basis of $d=15$ cubic B-splines, governed by the strictly defined zero-mean Sobolev $\mathbb{H}$-norm penalty with a coefficient of $\lambda = 8 \times 10^{-5}$, and a gradient descent step size of $\alpha = 0.05$. Each optimization is initialized with the identity warping $\gamma_{id}$ and performed over 4000 iterations. 

The evolution of the objective functions is illustrated in the lower-right panel of Figure~\ref{fig:real_acoustic_registration}, with each formulation represented by a distinct color. Notably, Method 3 exhibits the fastest convergence, reaching the lowest objective value of 0.041, while the other three methods converge to 0.058, 0.064, and 0.063, respectively. However, it is important to note that the data mismatch terms for Methods 2, 3, and 4 are mathematically symmetric; switching the roles of $f$ and $g$ and applying the optimal inverse warping $\gamma^{-1}$ yields nearly identical total objective values (due to the slight variation in the Sobolev penalty evaluated in the target domain). In contrast, evaluating the standard $\mathbb{L}^2$ mismatch (Method 1) under this spatial inversion shifts the final value significantly to 0.073, empirically highlighting the inherent asymmetry of that baseline framework.

The alignment results for the four evaluated methods are presented in the top two rows of Figure~\ref{fig:real_acoustic_registration}. In each panel, the target $f$, the source $g$, and the warped signal $g \circ \gamma$ are displayed in distinct colors. Observations indicate that all methods yield visually accurate temporal alignments in the original signal space. However, because a pure warping process preserves the original signal values and cannot adjust peak heights, a clear discrepancy in amplitude naturally remains between the target and the time-warped source signals. 

Because Method 3 relies on an isometry-preserving data term, we also display its isometric representation $(g \circ \gamma) \sqrt{\gamma'}$ in the middle-left panel. This representation demonstrates a significantly closer visual match to the target $f$ than the raw warped signal. 
This result is consistent with the optimization objective of Method 3, which is specifically designed to minimize the $\mathbb{L}^2$ distance between the original target $f$ and the isometrically warped source $(g \circ \gamma)\sqrt{\gamma'}$. However, this comparison clearly illustrates how the isometric requirement exploits the Jacobian weight $\sqrt{\gamma'}$ to artificially introduce amplitude distortions into the warped signal to achieve vertical alignment. This real-world behavior confirms that isometry-based mismatch functionals sacrifice pure phase integrity to minimize amplitude residuals.

\section{Discussion}
\label{sec:discussion}

In this paper, we have introduced a novel, Sobolev-regularized deterministic framework for the robust pairwise registration of functional data. A fundamental challenge in functional data analysis is decoupling phase variability from amplitude variability, particularly when signals are corrupted by high levels of additive noise. Traditional derivative-based methods \citep{srivastava2011}, while mathematically elegant, often suffer in these noisy conditions due to the inherent instability of numerical differentiation. To address this, we propose operating entirely within the original function space, leveraging the CLR transform to map the strictly constrained manifold of diffeomorphic warping functions into an unconstrained space. By governing the optimization with a strictly defined zero-mean Sobolev $\mathbb{H}$-norm, our framework guarantees strictly monotonic, globally smooth alignments without requiring artificial boundary constraints.

Our functional registration framework is driven by a regularized objective comprising two essential components: a geometric data mismatch functional and a structural roughness penalty. Historically, regularizing directly in the original warping space \citep{tang2008pairwise, james2007curve, srivastava2016functional, kneip2008combining} has required expensive constrained optimization to prevent domain folding. While shifting the penalty to the unconstrained CLR space resolves this, existing zeroth-order \citep{ma2024}, first-order \citep{ramsay1998curve}, or purely second-order penalties remain geometrically and topologically deficient, permitting artificial identity bias, non-differentiable kinks, or unpenalized extreme stretching. To resolve these issues, our formulation necessitates a comprehensive Sobolev penalty ($\|\psi'\|_{\mathbb{L}^2}^2 + \|\psi''\|_{\mathbb{L}^2}^2$). This simultaneously establishes $\mathbb{H}$ as a complete Hilbert space and strictly bounds the maximum allowable relative stretch. By enforcing continuous differentiability, this unified approach intrinsically guarantees that the warping derivative remains uniformly bounded away from zero and infinity, mathematically ensuring the asymptotic consistency of the estimator.

Building on these topological guarantees, our framework systematically evaluates four distinct data mismatch formulations drawn from established functional alignment literature. As demonstrated by our results, there is a critical trade-off between pointwise alignment and structural fidelity. The Standard $\mathbb{L}^2$ baseline (Method 1) \citep{ramsay2005functional} achieves the tightest raw pointwise fits but suffers from an inherent ``target bias,'' yielding alignments that lack inverse consistency. Incorporating geometric symmetry, a principle highlighted in classical curve registration \citep{ramsay1998curve}, resolves this bias, though it introduces new nuances. Specifically, the Isometry formulation (Method 3) -- adapted from the widely used Square-Root Velocity Function (SRVF) framework \citep{srivastava2011, srivastava2016functional} -- achieves visually impressive alignments but exhibits a severe structural bias by forcing artificial amplitude distortion to minimize vertical residuals. In contrast, the Symmetric $\mathbb{L}^2$ (Method 2) \citep{tagare2009symmetric_theory} and Jacobian-Weighted (Method 4) \citep{wang1997alignment} frameworks treat both signals as active participants without enforcing artificial amplitude scaling. These geometrically aware formulations demonstrate exceptional robustness against topological traps, identifying true phase distortions and preserving the unbiased physical relationships between the curves.
%

Computationally, our approach relies on an optimization-based search to identify the optimal time-warping function. Unlike classical Fisher-Rao alignments that frequently utilize dynamic programming, a discrete technique that can be unstable and sensitive to additive noise, our deterministic framework enforces strict topological regularity through the continuous Sobolev penalty to naturally guarantee a smooth diffeomorphism. By projecting the infinite-dimensional problem onto a finite basis, the algorithm transforms the objective into a finite-dimensional optimization. This yields highly efficient, stable convergence with linear computational complexity, ensuring that the theoretical advantages of symmetric and inverse-consistent alignment do not compromise computational efficiency across any of the four evaluated methods.

While our framework offers robust deterministic performance, it inherently lacks a stochastic data model. Unlike Bayesian registration frameworks that sample from a posterior distribution of warping functions \citep{cheng2016bayesian, kurtek2017geometric, lu2017bayesian, Bharath2020landmark, matuk2021bayesian, tucker2021multimodal, ma2025new}, our method cannot quantify uncertainty in the alignment estimate. The optimization converges to a single, unique solution, meaning it does not provide multiple plausible warping possibilities or credible intervals. Thus, our framework strategically prioritizes computational efficiency, strong noise robustness, and strict topological guarantees over probabilistic inference, making it highly suitable for large-scale deterministic applications where processing speed and strict geometric constraints are paramount.

Our future work will proceed along three primary directions. First, to automate the registration pipeline across heterogeneous datasets, we will explore adaptive, data-driven selection criteria for the Sobolev penalty weight $\lambda$, potentially employing cross-validation or restricted maximum likelihood. Second, we aim to investigate Reproducing Kernel Hilbert Spaces (RKHS) as an alternative regularization penalty. Drawing upon the renowned Representer Theorem and commonly used Sieve estimation methods, RKHS offers profound theoretical elegance for handling analytical derivatives and establishing asymptotic properties. We will explore adapting these theoretical tools for continuous functional registration while maintaining computational efficiency. Finally, we plan to extend the current pairwise strategy to a multiple alignment setting, facilitating the estimation of a structural $K$-sample mean for functional observations.

\vspace{24pt}

\appendix

\noindent {\Large \bf Appendices}

\section{Proofs of Propositions in the Sobolev Space $\mathbb H$}
\label{app:sobolev_proof}

\subsection{Proof of Proposition \ref{prop:hilbert}}
\label{app:hilbert_proof}

\begin{proof}
To establish that $\mathbb{H}$ is a Hilbert space, we must show that $\langle \cdot, \cdot \rangle_{\mathbb{H}}$ constitutes a valid inner product and that $\mathbb{H}$ is complete with respect to the induced norm.

\textit{1. Validity of the Inner Product:} 
The functional $\langle \psi_1, \psi_2 \rangle_{\mathbb{H}}$ is trivially symmetric and bilinear. To prove positive definiteness, assume $\|\psi\|_{\mathbb{H}}^2 = 0$. By definition, this requires both $\|\psi'\|_{L^2}^2 = 0$ and $\|\psi''\|_{L^2}^2 = 0$. Since $\|\psi'\|_{L^2} = 0$, the derivative $\psi'$ is zero almost everywhere, implying $\psi(t) = c$ for some constant $c$. Because $\psi \in \mathbb{H}$, we have the zero-mean constraint $\int_0^1 \psi(t) dt = 0$. Substituting $\psi(t) = c$ yields $c = 0$, meaning $\psi(t) = 0$ everywhere. Thus, the inner product is strictly positive definite.

\textit{2. Completeness via Norm Equivalence:} 
The standard Sobolev space $W^{2,2}(I)$ is a known Hilbert space under its standard norm:
$$
    \|\psi\|_{W^{2,2}}^2 = \|\psi\|_{L^2}^2 + \|\psi'\|_{L^2}^2 + \|\psi''\|_{L^2}^2
$$
By definition, our induced norm is $\|\psi\|_{\mathbb{H}}^2 = \|\psi'\|_{L^2}^2 + \|\psi''\|_{L^2}^2$. Clearly, $\|\psi\|_{\mathbb{H}}^2 \leq \|\psi\|_{W^{2,2}}^2$. 

To bound it from the other direction, we use the intermediate result from the proof of Proposition \ref{prop:bound}, which established that for any $\psi \in \mathbb{H}$, $\|\psi\|_{L^2}^2 \leq \|\psi\|_\infty^2 \leq \|\psi'\|_{L^2}^2$. Substituting this into the standard norm gives:
$$
    \|\psi\|_{W^{2,2}}^2 \leq \|\psi'\|_{L^2}^2 + \|\psi'\|_{L^2}^2 + \|\psi''\|_{L^2}^2 \leq 2\|\psi'\|_{L^2}^2 + \|\psi''\|_{L^2}^2 \leq 2\|\psi\|_{\mathbb{H}}^2
$$
Therefore, $\frac{1}{2}\|\psi\|_{W^{2,2}}^2 \leq \|\psi\|_{\mathbb{H}}^2 \leq \|\psi\|_{W^{2,2}}^2$, proving that our induced norm $\|\cdot\|_{\mathbb{H}}$ is topologically equivalent to the standard $W^{2,2}(I)$ norm on this subspace.

\textit{3. Closed Subspace:}
Consider the linear functional $T: W^{2,2}(I) \to \mathbb{R}$ defined by $T(\psi) = \int_0^1 \psi(t) dt$. Because $|T(\psi)| \leq \|\psi\|_{L^1} \leq \|\psi\|_{L^2} \leq \|\psi\|_{W^{2,2}}$, $T$ is a bounded (and thus continuous) linear operator. The space $\mathbb{H}$ is exactly the kernel of this operator: $\mathbb{H} = \ker(T)$. 

Because the kernel of any continuous linear functional is closed, $\mathbb{H}$ is a closed subspace of the complete space $W^{2,2}(I)$. A closed subspace of a complete space is itself complete. Because $\mathbb{H}$ is complete under the equivalent norm $\|\cdot\|_{\mathbb{H}}$ and its geometry is induced by a valid inner product, $\mathbb{H}$ is a Hilbert space.
\end{proof}

\subsection{Proof of Proposition \ref{prop:bound}}
\label{app:bound_proof}

\begin{proof}
Let $\psi \in \mathbb{H}$. Because $\psi$ is continuous on the compact interval $I=[0,1]$ and has a zero mean ($\int_0^1 \psi(t) dt = 0$), the Mean Value Theorem for Integrals guarantees the existence of a point $t_0 \in [0,1]$ such that $\psi(t_0) = 0$.

By the Fundamental Theorem of Calculus, for any $t \in [0,1]$:
$
    \psi(t) = \int_{t_0}^t \psi'(s) ds.
$
Taking the absolute value and applying the Cauchy-Schwarz inequality yields:
$$
    |\psi(t)| \leq \int_0^1 |\psi'(s)| \cdot 1 ds \leq \left( \int_0^1 |\psi'(s)|^2 ds \right)^{1/2} \left( \int_0^1 1^2 ds \right)^{1/2} = \|\psi'\|_{L^2}
$$
Taking the supremum over all $t \in [0,1]$ gives $\|\psi\|_\infty \leq \|\psi'\|_{L^2}$. By the definition of the inner product on $\mathbb{H}$, we know $\|\psi'\|_{L^2}^2 \leq \|\psi'\|_{L^2}^2 + \|\psi''\|_{L^2}^2 = \|\psi\|_{\mathbb{H}}^2$. Taking the square root establishes the first bounded relationship: 
$$
    \|\psi\|_\infty \leq \|\psi'\|_{L^2} \leq \|\psi\|_{\mathbb{H}}
$$

To bound the derivative $\psi'$, we again apply the Mean Value Theorem for Integrals. Because $\psi'$ is continuous, there exists a point $t_1 \in [0,1]$ such that $\psi'(t_1) = \int_0^1 \psi'(s) ds$. 

Applying the Fundamental Theorem of Calculus to $\psi'$ from $t_1$ to $t$:
$$
    \psi'(t) = \psi'(t_1) + \int_{t_1}^t \psi''(s) ds = \int_0^1 \psi'(s) ds + \int_{t_1}^t \psi''(s) ds
$$
Taking the absolute value and extending the integration bounds over the entire interval $[0,1]$:
$$
    |\psi'(t)| \leq \int_0^1 |\psi'(s)| ds + \int_0^1 |\psi''(s)| ds
$$
Applying the Cauchy-Schwarz inequality to each term ($L^1$ norm $\leq L^2$ norm on a unit interval), we get:
$$
    |\psi'(t)| \leq \|\psi'\|_{L^2} + \|\psi''\|_{L^2}
$$
Using the fundamental algebraic inequality $(a+b)^2 \leq 2(a^2 + b^2)$ for any real numbers $a, b$, we have:
$$
    \|\psi'\|_{L^2} + \|\psi''\|_{L^2} \leq \sqrt{2 \left( \|\psi'\|_{L^2}^2 + \|\psi''\|_{L^2}^2 \right)} = \sqrt{2} \|\psi\|_{\mathbb{H}}
$$
Taking the supremum over $t$ yields the final uniform bound: 
$$
    \|\psi'\|_\infty \leq \sqrt{2} \|\psi\|_{\mathbb{H}}
$$
This completes the proof.
\end{proof}

\section{Proof of Proposition \ref{prop:uni} on Uniqueness of Self-Alignment}
\label{app:uni_proof}

    \begin{proof}
    Suppose $f = f \circ \gamma$. Since $f$ is continuous and piecewise strictly monotonic, any mapping $\gamma$ that preserves the signal value must map each monotonic segment to itself. Within these intervals, $f$ is injective; therefore, $f(t) = f(\gamma(t)) \implies t = \gamma(t)$. Combined with the strictly increasing nature of $\gamma$ and the boundary conditions $\gamma(0)=0, \gamma(1)=1$, the identity map is the only continuous solution.
    \end{proof}

\section{Detailed Proofs of Existence for All Mismatch Formulations}
\label{app:existence_proof}

We provide the formal proofs for the existence of optimal warping fields across all four mismatch formulations within our single registration framework.

\subsection{Existence Proof for Method 1: Standard $\mathbb{L}^2$ Mismatch}
\label{app:existence_proof1}

\begin{proof}
    Since the mismatch term $\mathcal{D}_1(\psi) \ge 0$ and the penalty term $\lambda \|\psi\|_{\mathbb{H}}^2 \ge 0$, the functional is bounded below. Let $M = \inf_{\psi \in \mathbb{H}} \mathcal{O}_1(\psi)$. By the Poincaré-Wirtinger inequality, the zero-mean constraint on $\mathbb{H}$ ensures that the Sobolev seminorm $\|\psi\|_{\mathbb{H}}$ (incorporating both $\psi'$ and $\psi''$ terms) is equivalent to the full $W^{2,2}(I)$ norm. Thus, as $\|\psi\|_{\mathbb{H}} \to \infty$, the penalty term dominates, and $\mathcal{O}_1(\psi) \to \infty$, establishing coercivity.

    Let $\{\psi_n\} \subset \mathbb{H}$ be a minimizing sequence such that $\mathcal{O}_1(\psi_n) \to M$. Coercivity implies that $\{\psi_n\}$ is bounded in $\mathbb{H}$. By the reflexivity of Sobolev spaces, there exists a subsequence $\{\psi_{n_k}\}$ such that $\psi_{n_k} \rightharpoonup \psi^*$ weakly in $\mathbb{H}$. As $\mathbb{H} \subset W^{2,2}(I) \hookrightarrow C^1(I)$ compactly, the weak convergence in $\mathbb{H}$ implies strong convergence in the $C^1(I)$ norm:
    $$
        \|\psi_{n_k} - \psi^*\|_\infty \to 0, \quad \|\psi'_{n_k} - (\psi^*)'\|_\infty \to 0
    $$
    
    The uniform convergence $\psi_{n_k} \to \psi^*$ implies the uniform convergence of the warping functions $\gamma_{n_k} \to \gamma^*$. In our CLR framework, the derivative is defined as $\gamma'(t) = \exp(\psi(t)) / \int_0^1 \exp(\psi(u)) \, du$. Since $\psi_{n_k} \to \psi^*$ uniformly and $\psi^* \in C^1(I)$, the limit derivative $(\gamma^*)'$ is continuous and strictly positive. This ensures that the limit $\gamma^*$ is a strictly increasing diffeomorphism, structurally preventing any pinching.

The penalty term $\lambda \|\psi\|_{\mathbb{H}}^2$ is a squared norm scaled by a positive constant and is thus weakly lower semi-continuous. Since $f, g \in C^0(I)$, the uniform convergence $\gamma_{n_k} \to \gamma^*$ implies the entire integrand $(f - g \circ \gamma_{n_k})^2$ converges uniformly to $(f - g \circ \gamma^*)^2$. This uniform convergence on the compact interval $I$ allows us to pass the limit directly inside the integral:
$$
    \lim_{k \to \infty} \mathcal{D}_1(\psi_{n_k}) = \mathcal{D}_1(\psi^*)
$$
    
    Combining these properties, we have:
    $$
        \mathcal{O}_1(\psi^*) \le \lim_{k \to \infty} \mathcal{D}_1(\psi_{n_k}) + \lambda \liminf_{k \to \infty} \|\psi_{n_k}\|_{\mathbb{H}}^2 = M
    $$
    Since $M$ is the infimum, it follows that $\mathcal{O}_1(\psi^*) = M$, proving that $\psi^*$ is a global minimizer in $\mathbb{H}$.
\end{proof}

\subsection{Existence Proof for Method 2: Symmetric $\mathbb{L}^2$ Mismatch}
\label{app:existence_proof2}

\begin{proof}
    Since the mismatch term $\mathcal{D}_2(\psi) \ge 0$, the penalty term $\lambda \|\psi\|_{\mathbb{H}}^2$ dominates as $\|\psi\|_{\mathbb{H}} \to \infty$. Thus, the functional $\mathcal{O}_2(\psi)$ is coercive. Let $M = \inf_{\psi \in \mathbb{H}} \mathcal{O}_2(\psi)$, and let $\{\psi_n\} \subset \mathbb{H}$ be a minimizing sequence such that $\mathcal{O}_2(\psi_n) \to M$.

    By coercivity, the sequence $\{\psi_n\}$ is bounded in $\mathbb{H}$. As established in the previous proof, the reflexivity of $\mathbb{H}$ and the compact embedding $\mathbb{H} \hookrightarrow C^1(I)$ allow us to extract a subsequence such that $\psi_{n_k} \rightharpoonup \psi^*$ weakly in $\mathbb{H}$ and $\psi_{n_k} \to \psi^*$ strongly in $C^1(I)$. This yields uniform convergence for both the warping function and its derivative:
    $$
        \|\gamma_{n_k} - \gamma^*\|_\infty \to 0, \quad \|\gamma'_{n_k} - (\gamma^*)'\|_\infty \to 0
    $$

    Assuming the signals $f, g \in C^0(I)$, the uniform convergence $\gamma_{n_k} \to \gamma^*$ ensures that $(f - g \circ \gamma_{n_k})^2 \to (f - g \circ \gamma^*)^2$ uniformly. Simultaneously, the weighting term $\frac{1+\gamma'_{n_k}}{2}$ converges uniformly to $\frac{1+(\gamma^*)'}{2}$. Because the product of uniformly convergent sequences on a compact domain is also uniformly convergent, the entire weighted integrand converges uniformly:
    $$
        (f(t) - g(\gamma_{n_k}(t)))^2 \left( \frac{1+\gamma'_{n_k}(t)}{2} \right) \to (f(t) - g(\gamma^*(t)))^2 \left( \frac{1+(\gamma^*)'(t)}{2} \right)
    $$

    Uniform convergence on the compact interval $I$ allows the limit to pass inside the integral:
    $$
        \lim_{k \to \infty} \mathcal{D}_2(\psi_{n_k}) = \mathcal{D}_2(\psi^*)
    $$

    Utilizing the weak lower semi-continuity of the Sobolev norm in $\mathbb{H}$:
    $$
        \mathcal{O}_2(\psi^*) \leq \lim_{k \to \infty} \mathcal{D}_2(\psi_{n_k}) + \lambda \liminf_{k \to \infty} \|\psi_{n_k}\|_{\mathbb{H}}^2 = M
    $$
    Since $M$ is the infimum, it follows that $\mathcal{O}_2(\psi^*) = M$. Thus, $\psi^*$ is a global minimizer for the Symmetric $\mathbb{L}^2$ functional.
\end{proof}

\subsection{Existence Proof for Method 3: Isometry ($\mathbb{L}^2$-Preserving) Mismatch}
\label{app:existence_proof3}

\begin{proof}
    Since the mismatch term $\mathcal{D}_3(\psi) \ge 0$, the penalty term $\lambda \|\psi\|_{\mathbb{H}}^2$ drives the functional to infinity as $\|\psi\|_{\mathbb{H}} \to \infty$. By this coercivity, any minimizing sequence $\{\psi_n\} \subset \mathbb{H}$ is bounded in $\mathbb{H}$. By the reflexivity of Hilbert spaces, there exists a subsequence such that $\psi_{n_k} \rightharpoonup \psi^*$ weakly in $\mathbb{H}$.
    
    As established in the previous proofs, the compact embedding $\mathbb{H} \hookrightarrow C^1(I)$ ensures that weak convergence in $\mathbb{H}$ implies strong convergence in $C^1(I)$. Consequently, we have uniform convergence for the warping function and its derivative:
    $$
        \|\gamma_{n_k} - \gamma^*\|_\infty \to 0, \quad \|\gamma'_{n_k} - (\gamma^*)'\|_\infty \to 0
    $$

    Following the logic for Method 1, the uniform convergence of $\gamma_{n_k}$ and the continuity of $g$ imply that $g \circ \gamma_{n_k} \to g \circ \gamma^*$ uniformly. In our CLR framework, the limit derivative $(\gamma^*)' > 0$ and is continuous. Because the square root function is uniformly continuous on $[0, \infty)$, the uniform convergence of the derivatives guarantees that the square-root operator also converges uniformly: $\sqrt{\gamma'_{n_k}} \to \sqrt{(\gamma^*)'}$. 
    
    Since continuous functions on the compact interval $I$ are bounded, the product of these two bounded, uniformly convergent sequences likewise converges uniformly:
    $$
        g(\gamma_{n_k}(t))\sqrt{\gamma'_{n_k}(t)} \to g(\gamma^*(t))\sqrt{(\gamma^*)'(t)} \quad \text{uniformly on } I
    $$

    Given $f \in C^0(I)$, the entire integrand converges uniformly on the compact interval $I$. This uniform convergence permits the limit to pass inside the integral:
    $$
        \lim_{k \to \infty} \mathcal{D}_3(\psi_{n_k}) = \mathcal{D}_3(\psi^*)
    $$
    
    Finally, utilizing the weak lower semi-continuity of the Sobolev norm $\|\psi\|_{\mathbb{H}}^2$, we conclude:
    $$
        \mathcal{O}_3(\psi^*) \leq \lim_{k \to \infty} \mathcal{D}_3(\psi_{n_k}) + \lambda \liminf_{k \to \infty} \|\psi_{n_k}\|_{\mathbb{H}}^2 = \inf_{\psi \in \mathbb{H}} \mathcal{O}_3(\psi)
    $$
    Thus, $\psi^*$ is a global minimizer for the isometry-based functional.
\end{proof}

\subsection{Existence Proof for Method 4: Jacobian-Weighted $\mathbb{L}^2$ Mismatch}
\label{app:existence_proof4}

\begin{proof}
    Since the mismatch term $\mathcal{D}_4(\psi) \ge 0$, the functional $\mathcal{O}_4(\psi)$ is bounded below, and the penalty term $\lambda \|\psi\|_{\mathbb{H}}^2$ establishes coercivity as $\|\psi\|_{\mathbb{H}} \to \infty$. Let $\{\psi_n\} \subset \mathbb{H}$ be a minimizing sequence. Due to this coercivity, the sequence is bounded in $\mathbb{H}$. By the reflexivity of Hilbert spaces, there exists a subsequence such that $\psi_{n_k} \rightharpoonup \psi^*$ weakly in $\mathbb{H}$.
    
    As established in the proof of Method 1, the compact embedding $\mathbb{H} \hookrightarrow C^1(I)$ ensures that the weak convergence in $\mathbb{H}$ implies strong convergence in the $C^1(I)$ norm. This yields uniform convergence for the warping map and its derivative: 
    $$
        \|\gamma_{n_k} - \gamma^*\|_\infty \to 0, \quad \|\gamma'_{n_k} - (\gamma^*)'\|_\infty \to 0
    $$

    Given $f, g \in C^0(I)$, the uniform convergence of $\gamma_{n_k}$ implies the squared residual $(f - g \circ \gamma_{n_k})^2 \to (f - g \circ \gamma^*)^2$ uniformly. Furthermore, since $(\gamma^*)'$ is continuous and strictly positive in our CLR framework, and the square root function is uniformly continuous on $[0, \infty)$, the weighting factor $\sqrt{\gamma'_{n_k}}$ is well-defined and converges uniformly to $\sqrt{(\gamma^*)'}$. 
    
    Because continuous functions on a compact interval are bounded, the product of these two bounded, uniformly convergent sequences also converges uniformly on $I$:
    $$
        (f(t) - g(\gamma_{n_k}(t)))^2 \sqrt{\gamma'_{n_k}(t)} \to (f(t) - g(\gamma^*(t)))^2 \sqrt{(\gamma^*)'(t)} \quad \text{uniformly}
    $$

    This uniform convergence on the compact interval $I$ allows us to pass the limit inside the integral directly:
    $$
        \lim_{k \to \infty} \mathcal{D}_4(\psi_{n_k}) = \mathcal{D}_4(\psi^*)
    $$
    
    By the weak lower semi-continuity of the Sobolev norm $\|\psi\|_{\mathbb{H}}^2$, we conclude:
    $$
        \mathcal{O}_4(\psi^*) \leq \lim_{k \to \infty} \mathcal{D}_4(\psi_{n_k}) + \lambda \liminf_{k \to \infty} \|\psi_{n_k}\|_{\mathbb{H}}^2 = \inf_{\psi \in \mathbb{H}} \mathcal{O}_4(\psi)
    $$
    This confirms that $\psi^*$ is a global minimizer for the Jacobian-weighted functional.
\end{proof}

\section{Detailed Proofs on Noise-Free Asymptotic Consistency}
\label{app:consistency_proof}

\subsection{Proof of Lemma \ref{lemma:exist} for Existence of the Finite-Dimensional Minimizer}
\label{app:consistency_proof0}

\begin{proof}
To prove the existence of a global minimizer, we must show that $J_d(\mathbf{c})$ is continuous and coercive on $\mathbb{R}^{d}$. 

\textit{1. Continuity:} The coefficients $\mathbf{c}$ map linearly to the centered log-derivative (CLR) representation $\psi_{\mathbf{c}}$. The mapping from $\psi_{\mathbf{c}}$ to the warping function $\gamma_{\mathbf{c}}$ via the normalized exponential integration $\gamma_{\mathbf{c}}(t) = \int_0^t \frac{\exp(\psi_{\mathbf{c}}(s))}{\int_0^1 \exp(\psi_{\mathbf{c}}(u)) du} ds$ is continuous. Because the target signal $g$ is assumed to be at least continuous, the composition $g \circ \gamma_{\mathbf{c}}$ is continuous. Therefore, the data mismatch functional $\mathcal{D}_i$ is continuous with respect to $\mathbf{c}$. The Sobolev penalty term $\lambda_d \|\psi_{\mathbf{c}}\|_{\mathbb{H}}^2$ is quadratic in $\mathbf{c}$ and trivially continuous. Thus, $J_d(\mathbf{c})$ is a continuous function.

\textit{2. Coercivity:} We can write the Sobolev penalty strictly in terms of the coefficients as $\|\psi_{\mathbf{c}}\|_{\mathbb{H}}^2 = \mathbf{c}^\top \mathbf{R} \mathbf{c}$, where $\mathbf{R}$ is the Sobolev stiffness matrix. Because the basis functions $\{\phi_j\}_{j=1}^{d}$ are linearly independent and span a proper subspace of the centered Sobolev space, $\mathbf{R}$ is strictly positive definite. Therefore, there exists a minimum eigenvalue $\mu_{\min} > 0$ such that:
$$
    \mathbf{c}^\top \mathbf{R} \mathbf{c} \ge \mu_{\min} \|\mathbf{c}\|_2^2
$$
Because the data mismatch term is a squared distance or a non-negative integral (depending on the chosen $\mathcal{D}_i$), we have $\mathcal{D}_i \ge 0$. Consequently, the total objective is bounded below by the penalty:
$$
    J_d(\mathbf{c}) \ge \lambda_d \mu_{\min} \|\mathbf{c}\|_2^2
$$
Since $\lambda_d > 0$ and $\mu_{\min} > 0$, it follows that $J_d(\mathbf{c}) \to \infty$ as $\|\mathbf{c}\|_2 \to \infty$. This establishes that $J_d$ is coercive.

Because $J_d(\mathbf{c})$ is a continuous and coercive function on $\mathbb{R}^{d}$, its sub-level sets are closed and bounded (compact). By the generalized Weierstrass Extreme Value Theorem, a continuous function on a non-empty compact set attains its global minimum. Therefore, the minimizer $\hat{\mathbf{c}}_d$ exists. 
\end{proof}

\subsection{Proof of Proposition \ref{prop:consis1} for Consistency in Method 1}
\label{app:consistency_proof1}

\begin{proof}
Since $f$ is a \textit{reasonable} function, the global minimum of the mismatch functional $\mathcal{D}_1(\psi)$ is uniquely attained at the true warping $\gamma_\psi = \gamma_0$. By the density of the finite-dimensional subspaces $\mathbb{H}_{d}$ in $\mathbb{H}$, for the true centered log-derivative (CLR) representation $\psi_0$, there exists an approximating sequence $\tilde{\psi}_d \in \mathbb{H}_{d}$ such that $\|\tilde{\psi}_d - \psi_0\|_{\mathbb{H}} \to 0$ as $d \to \infty$. To establish that the finite-dimensional estimator $\hat{\psi}_d$ (in Equation \eqref{eq:con1}) converges to $\psi_0$, we analyze the optimality of the estimator through the following steps:

\begin{enumerate}
    \item \textit{The Optimality Gap:} By definition, $\hat{\psi}_d$ minimizes the objective functional over $\mathbb{H}_{d}$. For the approximating sequence $\tilde{\psi}_d \in \mathbb{H}_{d}$, we have:
    $$
        \mathcal{D}_1(\hat{\psi}_d) + \lambda_d \|\hat{\psi}_d\|_{\mathbb{H}}^2 \leq \mathcal{D}_1(\tilde{\psi}_d) + \lambda_d \|\tilde{\psi}_d\|_{\mathbb{H}}^2
    $$
    As $d \to \infty$, the right-hand side (RHS) vanishes. Specifically, the strong convergence $\|\tilde{\psi}_d - \psi_0\|_{\mathbb{H}} \to 0$ implies $\mathcal{D}_1(\tilde{\psi}_d) \to \mathcal{D}_1(\psi_0) = 0$ by the continuity of the integral mismatch. Furthermore, the penalty term $\lambda_d \|\tilde{\psi}_d\|_{\mathbb{H}}^2 \to 0$ since $\lambda_d \to 0$ and the sequence remains bounded in norm.

    \item \textit{Bounding the Sequence:} From the optimality inequality, we isolate the penalty term:
    $$
        \|\hat{\psi}_d\|_{\mathbb{H}}^2 \leq \frac{\mathcal{D}_1(\tilde{\psi}_d)}{\lambda_d} + \|\tilde{\psi}_d\|_{\mathbb{H}}^2
    $$
    By choosing the smoothing parameter such that the penalty vanishes slower than the approximation error (i.e., $\mathcal{D}_1(\tilde{\psi}_d) = o(\lambda_d)$), the first term on the RHS converges to zero. Since $\|\tilde{\psi}_d\|_{\mathbb{H}}^2 \to \|\psi_0\|_{\mathbb{H}}^2$, the sequence $\{\hat{\psi}_d\}$ is bounded in the Hilbert space $\mathbb{H}$. By the Banach-Alaoglu theorem, there exists a subsequence $\{\hat{\psi}_{d_k}\}$ that converges weakly to some $\psi^* \in \mathbb{H}$.

    \item \textit{Identifiability and Weak Limit:} From Step 1, it is evident that $0 \leq \mathcal{D}_1(\hat{\psi}_d) \leq \text{RHS} \to 0$, hence $\mathcal{D}_1(\hat{\psi}_d) \to 0$. In our setting, the Sobolev space $\mathbb{H}$ embeds compactly into $C^1([0,1])$. This compact embedding upgrades the weak convergence $\hat{\psi}_{d_k} \rightharpoonup \psi^*$ to strong uniform convergence of the CLR fields and their derivatives. 
    Consequently, the warping functions converge uniformly: $\gamma_{\hat{\psi}_{d_k}} \to \gamma_{\psi^*}$. By the continuity of $\mathcal{D}_1$ under uniform convergence, we have $\mathcal{D}_1(\psi^*) = 0$. Since $f$ is \textit{reasonable}, the equation $\mathcal{D}_1(\psi) = 0$ has a unique root, identifying $\psi^* = \psi_0$. Since every weakly convergent subsequence must converge to this same unique limit, the entire sequence weakly converges: $\hat{\psi}_d \rightharpoonup \psi_0$.

    \item \textit{Strong Convergence:} Weak convergence $\hat{\psi}_d \rightharpoonup \psi_0$ implies $\|\psi_0\|_{\mathbb{H}} \leq \liminf_{d \to \infty} \|\hat{\psi}_d\|_{\mathbb{H}}$ by the lower semi-continuity of the norm. Conversely, Step 2 established that \\
    $\limsup_{d \to \infty} \|\hat{\psi}_d\|_{\mathbb{H}} \leq \|\psi_0\|_{\mathbb{H}}$. Thus, $\|\hat{\psi}_d\|_{\mathbb{H}} \to \|\psi_0\|_{\mathbb{H}}$. In a Hilbert space, weak convergence combined with convergence of the norm implies strong convergence (the Radon-Riesz property):
    $$
        \|\hat{\psi}_d - \psi_0\|_{\mathbb{H}} \to 0
    $$
\end{enumerate}
\end{proof}

\subsection{Proof of Proposition \ref{prop:consis2} for Consistency in Method 2}
\label{app:consistency_proof2}

\begin{proof}
By the density of the finite-dimensional subspaces $\mathbb{H}_{d}$ in $\mathbb{H}$, there exists a sequence $\tilde{\psi}_d \in \mathbb{H}_{d}$ such that $\|\tilde{\psi}_d - \psi_0\|_{\mathbb{H}} \to 0$ as $d \to \infty$. At the ground truth, $\mathcal{D}_2(\psi_0) = 0$ because both the forward alignment $f(t) = g(\gamma_0(t))$ and the inverse alignment $f(\gamma_0^{-1}(u)) = g(u)$ hold exactly.

\begin{enumerate}
    \item \textit{Optimality and Boundedness:} Following identical logic to Method 1, the minimality of $\hat{\psi}_d$ (in Equation \eqref{eq:con2}) in $\mathbb{H}_{d}$ yields the bound:
    $$
        \|\hat{\psi}_d\|_{\mathbb{H}}^2 \leq \frac{\mathcal{D}_2(\tilde{\psi}_d)}{\lambda_d} + \|\tilde{\psi}_d\|_{\mathbb{H}}^2
    $$
    Under the condition $\mathcal{D}_2(\tilde{\psi}_d) = o(\lambda_d)$, the first term on the right-hand side vanishes. This ensures $\{\hat{\psi}_d\}$ is a bounded sequence in $\mathbb{H}$. By the same Banach-Alaoglu and compact embedding arguments established in Method 1, we can extract a weakly convergent subsequence $\hat{\psi}_{d_k} \rightharpoonup \psi^*$ which converges uniformly in both the warping functions and their derivatives.

    \item \textit{Symmetric Identifiability:} From the optimality bound, $0 \leq \mathcal{D}_2(\hat{\psi}_d) \to 0$. By applying a change of variables $u = \gamma_\psi(t)$ to the $\gamma'_\psi(t)$ portion of the stretching factor, we can expand the $\mathcal{D}_2$ integral into two explicitly symmetric terms:
    $$
        \mathcal{D}_2(\psi) = \frac{1}{2} \int_{0}^{1} (f(t) - g(\gamma_\psi(t)))^2 dt + \frac{1}{2} \int_{0}^{1} (f(\gamma_\psi^{-1}(u)) - g(u))^2 du
    $$

    Since $\mathcal{D}_2(\hat{\psi}_d) \to 0$ and both integrals are strictly non-negative, the forward and inverse mismatch functionals must simultaneously vanish. Because $f$ and $g$ are \textit{reasonable} functions, the vanishing of either integral uniquely identifies the limit. This dual constraint algebraically reinforces that the sequence weakly converges to the true unique warping: $\hat{\psi}_d \rightharpoonup \psi_0$.

    \item \textit{Strong Convergence:} Just as in Method 1, the weak convergence in $\mathbb{H}$ combined with the norm control $\limsup_{d \to \infty} \|\hat{\psi}_d\|_{\mathbb{H}} \leq \|\psi_0\|_{\mathbb{H}}$ guarantees strong convergence via the Radon-Riesz property:
    $$
        \|\hat{\psi}_d - \psi_0\|_{\mathbb{H}} \to 0
    $$
\end{enumerate}
\end{proof}

\subsection{Proof of Proposition \ref{prop:consis4} for Consistency in Method 4}
\label{app:consistency_proof4}

\begin{proof}
By substituting the ground truth $f = g \circ \gamma_0$ into the mismatch, we observe that at $\psi = \psi_0$:
$$
    \mathcal{D}_4(\psi_0) = \int_{0}^{1} (g(\gamma_0(t)) - g(\gamma_0(t)))^2 \sqrt{\gamma'_0(t)} dt = 0.
$$
The weighting factor $\sqrt{\gamma'_0(t)}$ acts purely as a local volume measure on the residual, preserving the location of the global minimum.

\begin{enumerate}
    \item \textit{Boundedness:} Using the optimality of $\hat{\psi}_d$ (in Equation \eqref{eq:con4}) within $\mathbb{H}_{d}$ and the approximation sequence $\tilde{\psi}_d \to \psi_0$, we have:
    $$
        \|\hat{\psi}_d\|_{\mathbb{H}}^2 \leq \frac{\mathcal{D}_4(\tilde{\psi}_d)}{\lambda_d} + \|\tilde{\psi}_d\|_{\mathbb{H}}^2.
    $$
    The condition $\mathcal{D}_4(\tilde{\psi}_d) = o(\lambda_d)$ ensures the first term vanishes as $d \to \infty$. Thus, the sequence $\{\hat{\psi}_d\}$ is bounded in $\mathbb{H}$, allowing the extraction of a weakly convergent subsequence $\hat{\psi}_{d_k} \rightharpoonup \psi^*$.

    \item \textit{Identifiability:} The optimality bound forces $\mathcal{D}_4(\hat{\psi}_d) \to 0$. Because the Sobolev penalty ensures $\sqrt{\gamma'_{\hat{\psi}_d}(t)} > 0$ strictly for all $t$, the integral vanishes if and only if the signal difference $(f(t) - g(\gamma_{\hat{\psi}_d}(t)))^2$ vanishes almost everywhere. For reasonable functions, this uniquely identifies the limit: $\psi^* = \psi_0$.

    \item \textit{Norm Convergence:} Weak convergence in $\mathbb{H}$ and the asymptotic norm bound \\
    $\limsup_{d \to \infty} \|\hat{\psi}_d\|_{\mathbb{H}} \leq \|\psi_0\|_{\mathbb{H}}$ jointly yield convergence in the $\mathbb{H}$ norm via the Radon-Riesz property:
    $$
        \|\hat{\psi}_d - \psi_0\|_{\mathbb{H}} \to 0.
    $$
\end{enumerate}
\end{proof}

\section{Mathematical Details of Fr\'{e}chet Derivatives}
\label{app:derivatives}

This appendix provides the rigorous derivations for the Fr\'{e}chet derivatives of the objective functionals. To maintain consistency with the main text, we denote the distance terms as $\mathcal{D}_1$ (Standard $\mathbb{L}^2$), $\mathcal{D}_2$ (Symmetric $\mathbb{L}^2$), $\mathcal{D}_3$ (Isometry), and $\mathcal{D}_4$ (Jacobian-Weighted $\mathbb{L}^2$).

We consider the space of centered log-derivative fields $\psi \in \mathbb{L}^2(I)$. The warping function $\gamma: I \to I$ is generated via the normalized exponential map:
$$
    \gamma(t) = \frac{L(t)}{V}, \quad L(t) = \int_0^t \exp(\psi(s)) ds, \quad V = L(1)
$$
The Jacobian (velocity) of the warp is denoted by $\gamma'(t) = \frac{\exp(\psi(t))}{V}$. By construction, $\gamma(0)=0$, $\gamma(1)=1$, and $\gamma'(t) > 0$.

To compute the Fr\'{e}chet derivative of a functional $\mathcal{D}_i(\psi)$, we determine the G\^{a}teaux variation $\delta \gamma$ and $\delta \gamma'$ in the direction of a perturbation $h \in \mathbb{L}^2(I)$. Let $\bar{h}$ be the weighted expectation of the perturbation under the current density $\gamma'$:
$$
    \bar{h} = \int_0^1 h(s) \gamma'(s) ds
$$

The variation of the Jacobian is derived via the quotient rule applied to $\gamma' = \exp(\psi)/V$:
$$
    \delta \gamma'(t) = \gamma'(t) \left( h(t) - \bar{h} \right)
$$
Integrating $\delta \gamma'$ from $0$ to $t$ yields the variation of the position:
$$
    \delta \gamma(t) = \int_0^t \gamma'(s) h(s) ds - \gamma(t) \bar{h}
$$

\subsection{Fr\'{e}chet Derivative of the Standard $\mathbb{L}^2$ Metric ($\mathcal{D}_1$)}

The objective for the Standard $\mathbb{L}^2$ framework is $\mathcal{D}_1(\psi) = \int_0^1 (f(t) - g(\gamma(t)))^2 dt$. The first-order variation with respect to the warping function is:
$$
    \delta \mathcal{D}_1 = \int_0^1 \underbrace{-2(f(t) - g(\gamma(t))) g'(\gamma(t))}_{m_1(t)} \delta \gamma(t) dt
$$

Substituting the expansion for $\delta \gamma(t)$ derived above, we apply Fubini's Theorem to swap the order of integration for the double integral involving $\int_0^t \gamma'(s) h(s) ds$:
$$
    \delta \mathcal{D}_1 = \int_0^1 \gamma'(s) \left( \int_s^1 m_1(t) dt \right) h(s) ds - \left( \int_0^1 m_1(t) \gamma(t) dt \right) \bar{h}
$$

We define the adjoint kernel as the cumulative positional force: $\mathcal{K}_1(s) = \int_s^1 m_1(t) dt$. Noting via integration by parts that the second term evaluates exactly to the weighted expectation of the kernel $\bar{\mathcal{K}}_1 = \int_0^1 \mathcal{K}_1(u) \gamma'(u) du$, we can isolate $h(s)$ to identify the Fr\'{e}chet derivative $\frac{\delta \mathcal{D}_1}{\delta \psi}$:
$$
    \frac{\delta \mathcal{D}_1}{\delta \psi}(t) = \gamma'(t) \left( \mathcal{K}_1(t) - \bar{\mathcal{K}}_1 \right)
$$

\subsection{Fr\'{e}chet Derivative of the Symmetric $\mathbb{L}^2$ Metric ($\mathcal{D}_2$)}

The symmetric objective is expressed in the domain of $t$ as:
$$
    \mathcal{D}_2(\psi) = \frac{1}{2}\left[\int_{0}^{1} (f(t) - g(\gamma(t)))^2 dt + \int_{0}^{1} (g(\gamma(t)) - f(t))^2 \gamma'(t) dt\right]
$$

Applying the G\^{a}teaux variation to both terms, we obtain:
$$
    \delta \mathcal{D}_2 = \int_0^1 m_{pos}(t) \delta \gamma(t) dt + \int_0^1 m_{jac}(t) \delta \gamma'(t) dt
$$
where the positional and Jacobian residuals (absorbing the $\frac{1}{2}$ scaling factor from the objective) are defined as:
$$
\begin{aligned}
    m_{pos}(t) &= -(f(t) - g(\gamma(t)))g'(\gamma(t))(1 + \gamma'(t)) \\
    m_{jac}(t) &= \frac{1}{2}(g(\gamma(t)) - f(t))^2
\end{aligned}
$$

Using integration by parts to map the positional force to the velocity variation:
$$
    \int_0^1 m_{pos}(t) \delta \gamma(t) dt = \int_0^1 \left( \int_t^1 m_{pos}(s) ds \right) \delta \gamma'(t) dt
$$

Summing the contributions, the total variation becomes:
$$
    \delta \mathcal{D}_2 = \int_0^1 \underbrace{\left( \int_t^1 m_{pos}(s) ds + m_{jac}(t) \right)}_{\mathcal{K}_{2}(t)} \delta \gamma'(t) dt
$$

Substituting $\delta \gamma'(t) = \gamma'(t)(h(t) - \bar{h})$, we arrive at the final Fr\'{e}chet derivative:
$$
    \frac{\delta \mathcal{D}_2}{\delta \psi}(t) = \gamma'(t) \left( \mathcal{K}_{2}(t) - \bar{\mathcal{K}}_{2} \right)
$$
where $\bar{\mathcal{K}}_{2} = \int_0^1 \mathcal{K}_{2}(u) \gamma'(u) du$.

\subsection{Fr\'{e}chet Derivative of the Isometry ($\mathbb{L}^2$-Preserving) Metric ($\mathcal{D}_3$)}

The isometry objective preserves signal energy by treating functions as square-integrable half-densities:
$$
    \mathcal{D}_3(\psi) = \int_0^1 \left( f(t) - g(\gamma(t))\sqrt{\gamma'(t)} \right)^2 dt
$$

Applying the G\^{a}teaux variation $\gamma \to \gamma + \epsilon \delta \gamma$, we derive the variation $\delta \mathcal{D}_3$:
$$
    \delta \mathcal{D}_3 = \int_0^1 -2\left(f(t) - g(\gamma(t))\sqrt{\gamma'(t)}\right) \left[ g'(\gamma(t))\sqrt{\gamma'(t)} \delta \gamma(t) + \frac{g(\gamma(t))}{2\sqrt{\gamma'(t)}} \delta \gamma'(t) \right] dt
$$

Identifying the positional and Jacobian-based residuals:
$$
\begin{aligned}
    m_{pos}(t) &= -2\left(f(t) - g(\gamma(t))\sqrt{\gamma'(t)}\right) g'(\gamma(t))\sqrt{\gamma'(t)} \\
    m_{jac}(t) &= -\left(f(t) - g(\gamma(t))\sqrt{\gamma'(t)}\right) \frac{g(\gamma(t))}{\sqrt{\gamma'(t)}}
\end{aligned}
$$

Using integration by parts with the boundary conditions $\delta \gamma(0) = \delta \gamma(1) = 0$, the total adjoint kernel $\mathcal{K}_{3}(t)$ is constructed as:
$$
    \mathcal{K}_{3}(t) = \int_t^1 m_{pos}(s) ds + m_{jac}(t)
$$

The final Fr\'{e}chet derivative is then:
$$
    \frac{\delta \mathcal{D}_3}{\delta \psi}(t) = \gamma'(t) \left( \mathcal{K}_{3}(t) - \bar{\mathcal{K}}_{3} \right)
$$
where 
$
    \bar{\mathcal{K}}_{3} = \int_0^1 \mathcal{K}_{3}(u) \gamma'(u) du.
$

\subsection{Fr\'{e}chet Derivative of the Jacobian-Weighted $\mathbb{L}^2$ Metric ($\mathcal{D}_4$)}

The Jacobian-weighted objective incorporates the square root of the Jacobian as a weighting measure:
$$
    \mathcal{D}_4(\psi) = \int_0^1 \left( f(t) - g(\gamma(t)) \right)^2 \sqrt{\gamma'(t)} dt
$$

The G\^{a}teaux variation $\delta \mathcal{D}_4$ is:
$$
    \delta \mathcal{D}_4 = \int_0^1 \left[ -2(f(t) - g(\gamma(t)))g'(\gamma(t))\sqrt{\gamma'(t)} \right] \delta \gamma(t) dt + \int_0^1 \left[ \frac{(f(t) - g(\gamma(t)))^2}{2\sqrt{\gamma'(t)}} \right] \delta \gamma'(t) dt
$$

Identifying the positional and Jacobian residuals:
$$
\begin{aligned}
    m_{pos}(t) &= -2(f(t) - g(\gamma(t))) g'(\gamma(t))\sqrt{\gamma'(t)} \\
    m_{jac}(t) &= \frac{(f(t) - g(\gamma(t)))^2}{2\sqrt{\gamma'(t)}}
\end{aligned}
$$

The total adjoint kernel $\mathcal{K}_{4}(t)$ is constructed via the same integration-by-parts argument used in the previous mismatch terms:
$$
    \mathcal{K}_{4}(t) = \int_t^1 m_{pos}(s) ds + m_{jac}(t)
$$

The final Fr\'{e}chet derivative is:
$$
    \frac{\delta \mathcal{D}_4}{\delta \psi}(t) = \gamma'(t) \left( \mathcal{K}_{4}(t) - \bar{\mathcal{K}}_{4} \right)
$$
where $\bar{\mathcal{K}}_{4} = \int_0^1 \mathcal{K}_{4}(u) \gamma'(u) du$.

\section*{Declaration of Generative AI Use}

During the preparation of this work the author used Google Gemini in order to improve language and readability. After using this tool/service, the author reviewed and edited the content as needed and takes full responsibility for the content of the published article.

\bibliography{refs}

@article{ahn2020regression,
  title={Regression models using shapes of functions as predictors},
  author={Ahn, Kwang-Il and Tucker, J. Derek and Wu, Wei and Srivastava, Anuj},
  journal={Computational Statistics \& Data Analysis},
  volume={151},
  pages={107017},
  year={2020},
  publisher={Elsevier}
}

@article{eilers2004parametric,
  title={Parametric time warping},
  author={Eilers, P.H.},
  journal={Analytical Chemistry},
  volume={76},
  number={2},
  pages={404--411},
  year={2004}
}

@article{wang1997alignment,
  title={Alignment of curves by dynamic time warping},
  author={Wang, Kongming and Gasser, Theo},
  journal={The Annals of Statistics},
  volume={25},
  number={3},
  pages={1251--1276},
  year={1997},
  publisher={Institute of Mathematical Statistics}
}

@article{tagare2009symmetric_theory,
  title={Symmetric Non-rigid Registration: {A} Geometric Theory and Some Numerical Techniques},
  author={Tagare, Hemant D. and Groisser, David and Skrinjar, Oskar},
  journal={Journal of Mathematical Imaging and Vision},
  volume={34},
  number={1},
  pages={61--88},
  year={2009},
  publisher={Springer}
}

@article{tang2008pairwise,
  title={Pairwise curve synchronization for functional data},
  author={Tang, Ruoyong and M{\"u}ller, Hans-Georg},
  journal={Biometrika},
  volume={95},
  number={4},
  pages={875--889},
  year={2008},
  publisher={Oxford University Press}
}

@book{srivastava2016functional,
  title={Functional and shape data analysis},
  author={Srivastava, Anuj and Klassen, Eric P},
  year={2016},
  publisher={Springer}
}

@article{kneip2008combining,
  title={Combining registration and fitting for functional models},
  author={Kneip, Alois and Ramsay, James O},
  journal={Journal of the American Statistical Association},
  volume={103},
  number={483},
  pages={1155--1165},
  year={2008},
  publisher={Taylor \& Francis}
}

@article{telesca2008bayesian,
	title={Bayesian hierarchical curve registration},
	author={Telesca, Donatello and Inoue, Lurdes Y T},
	journal={Journal of the American Statistical Association},
	volume={103},
	number={481},
	pages={328--339},
	year={2008},
	publisher={Taylor \& Francis}
}

@article{Bharath2020landmark,
author = {Karthik Bharath and Sebastian Kurtek},
title = {Distribution on Warp Maps for Alignment of Open and Closed Curves},
journal = {Journal of the American Statistical Association},
volume = {115},
number = {531},
pages = {1378--1392},
year = {2020},
publisher = {Taylor \& Francis}
}

@article{gervini2004self,
  title={Self-modelling warping functions},
  author={Gervini, D. and Gasser, T.},
  journal={Journal of the Royal Statistical Society: Series B (Statistical Methodology)},
  volume={66},
  number={4},
  pages={959--971},
  year={2004}
}

@article{james2007curve,
  title={Curve alignment by moments},
  author={James, G.M.},
  journal={The Annals of Applied Statistics},
  volume={1},
  number={2},
  pages={480--501},
  year={2007}
}

@article{egozcue2006hilbert,
  title={Hilbert space of probability density functions based on Aitchison geometry},
  author={Egozcue, Juan Jos{\'e} and D{\'\i}az-Barrero, Jos{\'e} Luis and Pawlowsky-Glahn, Vera},
  journal={Acta Mathematica Sinica, English Series},
  volume={22},
  pages={1175--1182},
  year={2006},
  publisher={Springer}
}

@article{marron2015functional,
  title={Functional Data Analysis of Amplitude and Phase Variation},
  author={Marron, J. S. and Ramsay, James O. and Sangalli, Laura M. and Srivastava, Anuj},
  journal={Statistical Science},
  volume={30},
  number={4},
  pages={468--484},
  year={2015},
  publisher={Institute of Mathematical Statistics}
}

@article{tucker2013generative,
  title={Generative models for functional data using phase and amplitude separation},
  author={Tucker, J. Derek and Wu, Wei and Srivastava, Anuj},
  journal={Computational Statistics \& Data Analysis},
  volume={61},
  pages={50--66},
  year={2013},
  publisher={Elsevier}
}

@article{lee2016fpca,
  title={Combined analysis of amplitude and phase variations in functional data},
  author={Lee, Sungwon and Jung, Sungkyu},
  journal={arXiv preprint arXiv:1603.01775},
  year={2016}
}

@misc{fsdd_github,
  author = {Jackson, Z. and Souza, C. and Flaks, J. and Pan, Y. and Nicolas, H. and Thite, A.},
  title = {Free Spoken Digit Dataset (FSDD)},
  year = {2018},
  publisher = {GitHub},
  journal = {GitHub repository},
  howpublished = {\url{https://github.com/Jakobovski/free-spoken-digit-dataset}},
  note = {Accessed: 2026-02-18}
}

@book{adams2003sobolev,
  title={Sobolev Spaces},
  author={Adams, Robert A. and Fournier, John J. F.},
  volume={140},
  year={2003},
  publisher={Elsevier Science},
  edition={2nd},
  note={Foundational reference for the Sobolev Embedding Theorem and $L^\infty$ bounds in one dimension.}
}

@book{deBoor2001,
  title={A Practical Guide to Splines},
  author={de Boor, Carl},
  year={2001},
  publisher={Springer-Verlag},
  address={New York},
  note={Standard reference for using B-splines as a basis for sieve estimation in functional spaces.}
}

@article{matuk2021bayesian,
  title={Bayesian framework for simultaneous registration and estimation of noisy, sparse, and fragmented functional data},
  author={Matuk, James and Bharath, Karthik and Chkrebtii, Oksana and Kurtek, Sebastian},
  journal={Journal of the American Statistical Association},
  volume={116},
  number={534},
  pages={1--17},
  year={2021}
}

@article{tucker2021multimodal,
  title={Multimodal Bayesian registration of noisy functions using Hamiltonian Monte Carlo},
  author={Tucker, J Derek and Shand, Lyndsay and Chowdhary, K},
  journal={Computational Statistics \& Data Analysis},
  volume={163},
  pages={107298},
  year={2021}
}

@article{cheng2016bayesian,
  title={Bayesian registration of functions and curves},
  author={Cheng, Wei and Dryden, Ian L and Huang, Xiaolei},
  journal={Bayesian Analysis},
  volume={11},
  pages={447--475},
  year={2016}
}

@article{kurtek2017geometric,
  title={A geometric approach to pairwise Bayesian alignment of functional data using importance sampling},
  author={Kurtek, Sebastian},
  journal={Electronic Journal of Statistics},
  volume={11},
  pages={502--531},
  year={2017}
}

@article{lu2017bayesian,
  title={Bayesian registration of functions with a Gaussian process prior},
  author={Lu, Y and Herbei, R and Kurtek, S},
  journal={Journal of Computational and Graphical Statistics},
  volume={26},
  pages={894--904},
  year={2017}
}

@article{ma2025new,
  title={A new framework for Bayesian function registration},
  author={Ma, Yijia and Wu, Wei},
  journal={Electronic Journal of Statistics},
  volume={19},
  pages={2171--2198},
  year={2025},
  doi={10.1214/25-EJS2383}
}

@article{ramsay1998curve,
  title     = {Curve registration by nonparametric hierarchical modeling},
  author    = {Ramsay, J. O. and Li, X.},
  journal   = {Journal of the Royal Statistical Society Series B: Statistical Methodology},
  volume    = {60},
  number    = {2},
  pages     = {351--363},
  year      = {1998},
  publisher = {Oxford University Press}
}

@book{ramsay2005functional,
  title     = {Functional Data Analysis},
  author    = {Ramsay, J. O. and Silverman, B. W.},
  year      = {2005},
  publisher = {Springer},
  address   = {New York, NY},
  edition   = {2nd},
  series    = {Springer Series in Statistics},
  doi       = {10.1007/b98888}
}

@article{srivastava2011,
  author  = {Srivastava, A. and Wu, W. and Kurtek, S. and Klassen, E. and Marron, J. S.},
  title   = {Registration of functional data using Fisher-Rao metric},
  journal = {arXiv preprint arXiv:1103.3817},
  year    = {2011}
}

@article{ma2024,
  title     = {A stochastic process representation for time warping functions},
  author    = {Ma, Y. and Zhou, X. and Wu, W.},
  journal   = {Computational Statistics \& Data Analysis},
  volume    = {194},
  pages     = {107941},
  year      = {2024},
  publisher = {Elsevier},
  doi       = {https://doi.org/10.1016/j.csda.2024.107941}
}
\bibliographystyle{abbrvnat}

\end{document}